\documentclass[fleqn,11pt]{wlscirep}
\usepackage[utf8]{inputenc}
\usepackage[T1]{fontenc}
\usepackage{longtable}
\usepackage{tabularx}
\usepackage{multirow}
\usepackage{amsmath, amssymb, amsfonts, amsthm}
\usepackage{bm}
\usepackage{algorithm}
\usepackage{algorithmic}

\newtheorem{theorem}{Theorem}
\newtheorem{lemma}{Lemma}
\newtheorem{proposition}{Proposition}

\title{Random walk sampling in social networks involving private nodes}

\author[1, *]{Kazuki Nakajima}
\author[1]{Kazuyuki Shudo}

\affil[1]{Department of  Mathematical  and Computing Science, Tokyo  Institute  of Technology,  Meguro-ku, Tokyo  152-8552, Japan.}
\affil[*]{Corresponding author: nakajima.k.an@m.titech.ac.jp}

\begin{abstract}
Analysis of social networks with limited data access is challenging for third parties. 
To address this challenge, a number of studies have developed algorithms that estimate properties of social networks via a simple random walk. 
However, most existing algorithms do not assume private nodes that do not publish their neighbors' data when they are queried in empirical social networks.
Here we propose a practical framework for estimating properties via random walk-based sampling in social networks involving private nodes.
First, we develop a sampling algorithm by extending a simple random walk to the case of social networks involving private nodes.
Then, we propose estimators with reduced biases induced by private nodes for the network size, average degree, and density of the node label.
Our results show that the proposed estimators reduce biases induced by private nodes in the existing estimators by up to $92.6\%$ on social network datasets involving private nodes.
\end{abstract}

\keywords{}

\begin{document}

\flushbottom
\maketitle
%
%
\thispagestyle{empty}


\section{Introduction}

Online social networks (OSNs) have been studied for the purpose of understanding social characteristics such as human connections and behaviors on a worldwide scale.
A long series of studies on OSNs have analyzed properties of social networks where nodes represent users and edges represent friendships between users in an OSN.
One performs a sampling of graph data for analyses using public interfaces when complete data are not available because of privacy concerns.
Crawling methods (e.g., breadth-first search and random walk) in which one repeatedly traverses a neighbor have been used for sampling graph data in the OSNs where public interfaces to retrieve neighbors of a user are available \cite{ahn2007, mislove2007, wilson2009, gjoka2010, kwak2010, gjoka2011, fukuda2022}.
A common challenge is how to accurately estimate properties using a small number of queries because (i) crawling methods typically induce sampling bias toward high-degree nodes and (ii) public interfaces typically limit the maximum number of queries within a particular time interval.

Re-weighted random walk is a practical framework for unbiased estimation of properties of the OSNs \cite{gjoka2010, gjoka2011}.
In this framework, one first performs a simple random walk on the underlying network (i.e., one repeats to select a neighbor uniformly at random and move to that neighbor) to obtain a sequence of sampled nodes that has the Markov property (i.e., a sampled node depends on only the previous sampled node).
Then, one obtains unbiased estimates of properties by re-weighting each sampled node to correct the sampling bias derived from the Markov property.
A number of studies have developed algorithms that estimate properties of the OSNs using this framework \cite{katzir2011, katzir2013, gjoka2013, dasgupta2014, wang2014, han2016, chen2016, nakajima2020jip, bera2020}.

However, most existing algorithms do not assume private nodes that do not publish their neighbors' data when they are queried in practical scenarios.
A few previous studies reported that private nodes accounted for $27\%$ of all the nodes on Facebook \cite{catanese2011} and $34\%$ of all the nodes on Pokec, which is an OSN in Slovakia \cite{takac2012}.
Private nodes raise practical problems when one attempts to apply existing algorithms to empirical social networks.
First, how do we deal with private nodes to obtain a sample sequence via a simple random walk? 
If a walker visits a private node, one can handle an exception wherein the neighbors' data of the node are not retrievable by jumping to some public user sampled previously. 
However, if one performs such exceptional processes, the sample sequence typically loses the Markov property, which prevents us from obtaining unbiased estimates of proeprties.
There is another serious problem. 
A temporary solution to problems in the sampling phase is to not visit private nodes, as in the case study of random walks on the Facebook graph \cite{gjoka2010, gjoka2011}.
However, if a walker does not traverse private nodes, the conventional framework \cite{gjoka2010, gjoka2011}, which attempts to correct only the sampling bias, is expected to induce biases due to private nodes in estimators.

In this study, we aim to provide a practical framework for estimating properties based on a random walk on social networks involving private nodes.
To this end, we first make three assumptions with respect to private nodes and formalize two models for accessing graph data, called the ideal model and the hidden privacy model.
The assumptions and access models are based on previous studies and our observations on empirical social networks involving private nodes. 
Then, we design a sampling algorithm based on a random walk and develop estimators for the network size (i.e., number of nodes), average degree, and density of the node label (e.g., fraction of nodes with a given label).
Our framework may help to extend random walk-based estimators of properties of a network to the case of social networks involving private nodes.

This study has three main contributions.
First, we develop a sampling algorithm that practically works on social networks involving private nodes (Section \ref{section:4}).
We design a procedure of neighbor selection, which is a fundamental element in the sampling phase via a random walk on the network, and derive the sampling bias of each node induced by the walk.
Then, for each access model, we describe how to calculate the weight for each sampled node, which is essential to correct the sampling bias.
Furthermore, we propose a method to estimate the weight using a much smaller number of queries than the exact calculation method for the hidden privacy model.

Second, we present estimators with reduced biases induced by private nodes for the network size, average degree, and density of the node label (Section \ref{section:5}).
Existing estimators are expected to induce biases due to private nodes because the conventional framework assumes the correction of only sampling bias.
In our framework, we re-weight each sampled node to attempt to correct both the sampling bias and the bias induced by private nodes.
Furthermore, we theoretically show that the proposed estimators have approximately no bias induced by private nodes if all public nodes induce one connected component of the original network.

Third, we validate the theoretical results and effectiveness of the proposed estimators using empirical social network datasets (Section \ref{section:6}).  
We show that the proposed estimators acceptably perform on the two empirical datasets involving private nodes.
Specifically, for the Pokec social network dataset \cite{takac2012}, the proposed estimators reduce biases induced by private nodes in the existing estimators by up to $92.6\%$.
For the Facebook dataset \cite{kurant}, the proposed estimators provide reasonable estimates of the network size, average degree, and cumulative degree distribution of the Facebook graph as of 2010.

\section{Related Work}

Over the last decade, a number of algorithms based on the re-weighted random walk have been developed for estimating properties of the OSNs.
Examples of properties of interest include the network size \cite{katzir2011, katzir2013}, average degree \cite{dasgupta2014, gjoka2010, gjoka2011}, degree distribution \cite{gjoka2010, gjoka2011}, joint degree distribution \cite{gjoka2013}, clustering coefficients \cite{ribeiro2010, katzir2013, bera2020}, motifs and graphlets \cite{wang2014, chen2016, han2016}, and node centrality \cite{nakajima2020jip}.
In this study, we address effects of private nodes on the re-weighted random walk.
Specifically, we first design a sampling algorithm based on a random walk considering private nodes.
Second, we estimators with reduced biases induced by private nodes for the network size, average degree, and density of the node label.
The proposed framework may help us to extend random walk-based estimators of properties of a network, not limited to these three properties, to the case of the OSNs involving private nodes.

Several studies have proposed random walk algorithms to improve the estimation accuracy or the efficiency of the number of queries over a simple random walk \cite{ribeiro2010, lee2012, li2015, nazi2015, zhou2016, li2019, yi2021}.
Ribeiro and Towsley proposed multidimensional random walks, which improve the estimation accuracy in the presence of multiple connected components \cite{ribeiro2010}.
Lee et al. proposed the non-backtracking random walk algorithm, which improves the query efficiency while preserving the Markov property of the sample sequence \cite{lee2012}.
Yi et al. proposed a random walk-based algorithm that reduces biases of estimators using the bootstrapping technique \cite{yi2021}.
These algorithms assume social networks involving no private nodes.
In this study, we extend a simple random walk to the case of social networks involving private nodes.
Based on our work, it is not trivial but possible to extend these improved random walks to the case of social networks involving private nodes.

Re-weighted random walk is a special case of respondent-driven sampling (RDS) \cite{gjoka2010, gjoka2011}.
RDS is a random walk-based sampling method for estimating the proportion of individuals in the hard-to-reach population in social surveys (e.g., the fraction of infected individuals and the fraction of injection drug users) \cite{heckathorn1997, salganik2004, volz2008}.
In the context of the RDS, a private node corresponds to an individual who will not respond to a survey at all.
Such individuals with no response are easily present in practical scenarios \cite{robins2004, goel2010, illenberger2012, gile2015, western2016}.
Several studies numerically investigated the bias of the RDS induced by no-response individuals.
Tomas and Gile numerically showed that the estimator is biased when the response rate changes depending on the degree of the individual and the presence or absence of the infection of the individual \cite{tomas2011}.
Lu et al. numerically investigated the bias when each individual does not respond with a given probability and showed that changes in the probability little affect the bias \cite{lu2012}.
Rocha et al. investigated the effect of the community structure on the bias when each individual does not respond with a given probability \cite{Rocha2017}.
In this study, in the terminology of the RDS, we assume that each individual independently does not respond to a survey at all with a given probability (see Assumption \ref{assumption:2} in Section \ref{section:3.2} for details).
Then, we propose an estimator of the density of the node label in social networks involving private nodes (Section \ref{section:5.4}). 
Note that the density of the node label corresponds to the proportion of individuals with a specific characteristic (e.g., infected individual or drug user) in the context of the RDS.
We theoretically and numerically show that the proposed estimator has little bias induced by private nodes.

Private nodes are regarded as missing graph data in random walk-based estimators because we are not permitted to retrieve their neighbors' data.
In this study, we assume that each node becomes a private node independently at random with a given probability (see Assumption \ref{assumption:2} in Section \ref{section:3.2} for details).
Several studies investigated the effects of completely random missing nodes on the structural properties of a network \cite{albert2000, kossinets2006, huisman2009, smith2013, cohen2000, nakajima2021scirep}.
Albert et al. found the robustness of networks against randomly missing nodes (i.e., if a fraction of nodes are randomly removed from the original network, a large fraction of the remaining nodes induce the largest connected component) \cite{albert2000}.
Kossinets showed that the bias of the average degree between the original network and the remaining largest connected component increases linearly with the proportion of randomly missing nodes \cite{kossinets2006}. 
In this study, we theoretically analyze these biases and design estimators to reduce them under specific assumptions and access models.
There are two important findings in our study compared with these previous studies.
First, although we are not allowed to retrieve the neighbors of a private node by querying the node, we can find the private node in its neighbors that are public nodes in social networks (see Assumption \ref{assumption:1} in Section \ref{section:3.2} for details).
Second, we can reduce the bias induced by private nodes by modifying the weight for each sampled node if each node becomes a private node independently at random with a given probability.

The present work extends the preliminary version \cite{nakajima2020kdd}.
We present the experimental results for all the datasets here.
A major new contribution is to propose a random walk-based estimator of the density of the node label for social networks involving private nodes (Section \ref{section:5.4}) and numerically evaluate the proposed estimator (Sections \ref{section:6.2.1}, \ref{section:6.2.2}, and \ref{section:6.2.3}).
The density of the node label is the leading property of interest in the context of the RDS \cite{heckathorn1997, salganik2004, volz2008} and one of the properties of interest in the context of the estimation in OSNs \cite{ribeiro2010, gjoka2010, gjoka2011}.
We also describe heuristic estimators of the density of the private label (i.e., the fraction of private nodes) in Section \ref{section:5.5}.

\section{Preliminaries} \label{section:3}

\subsection{Definitions and Notations}
We represent a social network as a connected and undirected graph $G = (V, E)$ that consists of a set of nodes $V = \{v_1, ..., v_n\}$ and a set of edges $E$, where $n$ is the number of nodes. 
We denote by $\Gamma(i) = \{v_j\ |\ (v_i, v_j) \in E\}$ a set of neighbors of node $v_i$.
Let $d_i = |\Gamma(i)|$ denote the degree (i.e., the number of neighbors) of node $v_i$ and $D = \sum_{i=1}^n d_i$ denote the sum of degrees. 
We define the average degree of $G$ as $d_{\text{avg}} = D/n$. 
Each node $v_i$ is associated with a set of labels $\mathcal{L}(i)$.
We define the density of node label $l$ as the fraction of nodes with label $l$, i.e., $\rho(l) = \sum_{i=1}^n 1_{\{l \in \mathcal{L}(i)\}}/n$, where $1_{\{cond\}}$ denotes an indicator function that returns 1 if a condition $cond$ holds and 0 otherwise.
For example, when one is interested in the fraction of nodes with a given degree $k$, one sets the indicator function to ensure that it returns 1 if a node has degree $k$ and 0 otherwise.
Each node $v_i$ has a privacy label $l_{\text{pri}}(i) \in \{\text{public}, \text{private}\}$.
Note that $l_{\text{pri}}(i) \in \mathcal{L}(i)$.
We call a node that has a private label a private node and call a node that has a public label a public node.
The set of privacy labels of all the nodes is denoted by $\mathcal{L}_{\text{pri}} = \{l_{\text{pri}}(i)\}_{i=1}^n$. 
 
We refer to connected components induced from public nodes of the original network as {\it public clusters}\footnote{This term is based on Ref.~\cite{albert2000} in which the authors use `clusters' to refer to connected components of a network that remain after a fraction of nodes are removed from the original network.}.
We denote by $C^* = (V^*, E^*)$ the largest public cluster.
Let $n^* = |V^*|$ denote the number of nodes in $C^*$.
We call the neighbors of a node that are public nodes public neighbors of the node.
We denote by $d_i^* = |\{v_j \in V^*\ |\ (v_i,v_j) \in E^* \}|$ the {\it public degree} (i.e., the number of public neighbors) of node $v_i \in V^*$. 
Let $D^* = \sum_{v_i \in V^*} d_i^*$ denote the sum of public degrees. 
We define the average degree of $C^*$ as $d_{\text{avg}}^* = D^*/n^*$. 
We also define the density of node label $l$ of $C^*$ as $\rho^*(l) = \sum_{v_i \in V^*} 1_{\{l \in \mathcal{L}(i)\}}/n^*$.
Figure \ref{fig:1} shows an example of a graph with privacy labels.
There are three public clusters, $C_1$, $C_2$, and $C_3$:
\begin{itemize}
  \item $C_1 = (\{v_1, v_2, v_4, v_5\}, \{(v_1, v_2), (v_1, v_4), (v_4, v_5)\})$
  \item $C_2 = (\{v_8, v_9\}, \{(v_8, v_9)\})$
  \item $C_3 = (\{v_6\}, \{\})$.
\end{itemize}
Note that $C^* = C_1$.
It holds that $n^* = 4, d_1^* = 2, d_2^* = 1, d_4^*=2, d_5^*=1, D^* = 6, d_{\text{avg}}^* = 3/2$.
When one sets the indicator function in $\rho^*(l)$ to ensure that it returns 1 if a given node has degree three (i.e., $d_i=3$) and 0 otherwise, one obtains $\rho^*(l) = 3/4$.

\begin{figure}[t]
        \begin{center}
          \includegraphics[scale=0.4]{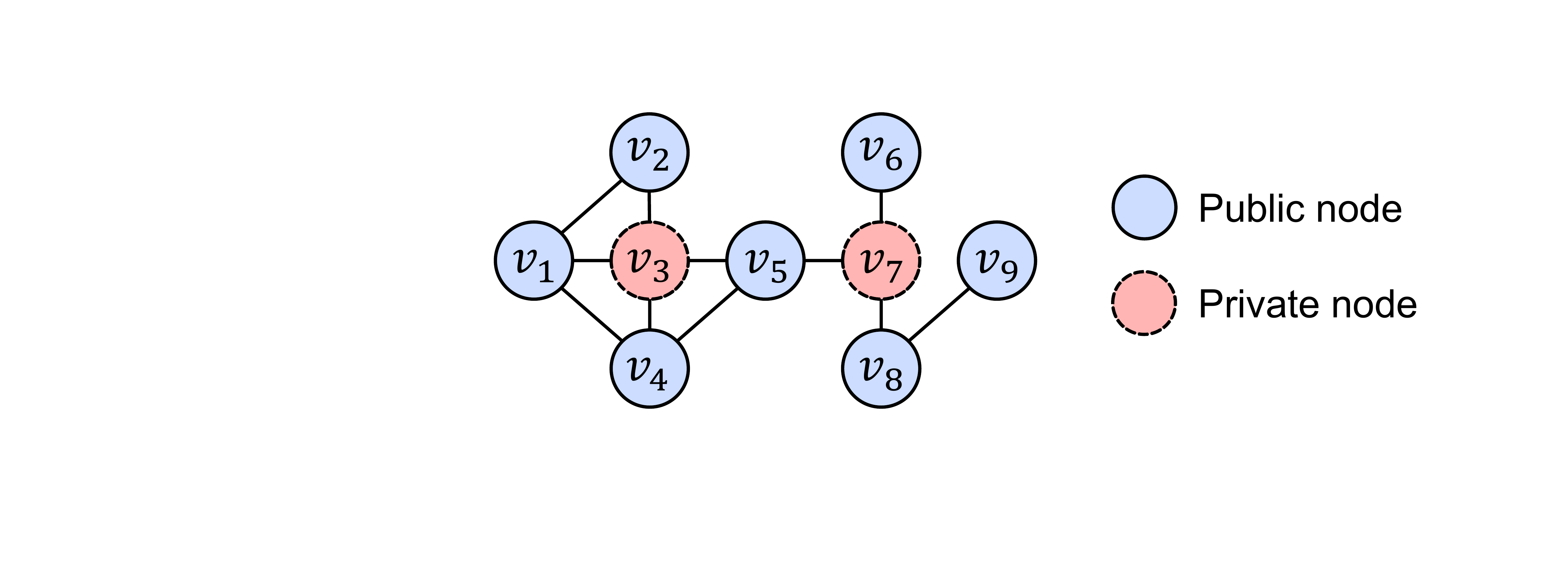}
        \end{center}
      \caption{An example of a graph with privacy labels. Nodes 3 and 7 are private nodes, and all other nodes are public nodes.}
      \label{fig:1}
\end{figure}

\subsection{Assumptions} \label{section:3.2}
We make the following three assumptions.

\begin{enumerate}
\item {\it If we query a public node, the indices of all its neighbors are available. }
This assumption allows us to retrieve a private node from its neighbors that are public nodes.
For example, when one queries private node $v_3$ in Fig. \ref{fig:1}, its neighbors are not retrievable; however, when one queries node $v_5$, all its neighbors, i.e., $v_3$, $v_4$, and $v_7$, are retrievable.
We empirically find that this assumption sufficiently holds in practical scenarios: 
(i) the Facebook graph as of the previous study \cite{gjoka2011} satisfied this assumption;  
(ii) the public interfaces of Twitter as of December 2021 satisfy this assumption \cite{twitter_api_followers, twitter_api_friends}; and
(iii) in the context of social surveys, even if an individual decides not to respond to the survey at all, the individual is encouraged to participate in the survey by its participating neighbors. 
\label{assumption:1}

\item {\it Each node independently becomes a private node with probability $p$ and becomes a public node otherwise, where $0 \leq p < 1$.} 
Private nodes tend to have low degrees under this assumption in empirical social networks.
This is because empirical social networks have heavy-tailed degree distributions \cite{ahn2007, gjoka2011, gjoka2010, kwak2010, mislove2007}.
We validate effectiveness of our estimators designed under this assumption by using social network datasets involving private nodes in Sections \ref{section:6.2.2} and \ref{section:6.2.3}. 
\label{assumption:2}

\item {\it We have access to some arbitrary node that belongs to the largest public cluster to begin our random walk.} 
Private nodes restrict a set of public nodes that a walker is allowed to reach on the network. 
For example, if one selects node $v_5$ as a seed in Fig.~\ref{fig:1}, private node $v_7$ inhibits a walker from reaching public nodes $v_6$, $v_8$, and $v_9$.
Under this assumption, a waller is allowed to traverse the largest public cluster.
We do not consider the number of queries generated for selecting a seed that belongs to the largest public cluster.
We discuss the validity of this assumption in practical scenarios in Section \ref{section:6.2.5}.
\label{assumption:3}
\end{enumerate}

\subsection{Access Models}
We define access models for accessing graph $G$.
We extend a standard access model \cite{chiericetti, gjoka2010, gjoka2011, ribeiro2010} to access models involving private nodes.
Suppose we queried node $v_i$.
If node $v_i$ is a public node, then the neighbors' data of $v_i$ and the set of labels of $v_i$, i.e., $\mathcal{L}(i)$, are available.
If node $v_i$ is a private node, then the neighbors' data of $v_i$ and $\mathcal{L}(i)$ are not available\footnote{We assume that the response is an empty set when one queries a private node.}. 
We consider two models for available neighbors' data of a queried public node $v_i$: the {\it ideal model} and the {\it hidden privacy model}.

In the ideal model, when one queries node $v_i$, the indices and privacy labels of all the neighbors of $v_i$ are available.
For example, when we query a public node $v_4$ in Fig. \ref{fig:1}, we obtain the set $\{(v_1, \text{public}), (v_3, \text{private}), (v_5, \text{public})\}$. 
As empirical example, the Facebook graph as of the previous study \cite{gjoka2010, gjoka2011} corresponds to this access model. 

In the hidden privacy model, when we query node $v_i$, the indices of all the neighbors of $v_i$ are available but their privacy labels are not available.
For example, when we query a public node $v_4$ in Fig. \ref{fig:1}, we obtain the set $\{v_1, v_3, v_5\}$. 
Empirical examples corresponding to this access model include the public interfaces of Twitter as of December 2021 \cite{twitter_api_followers, twitter_api_friends} and social networks in the context of social surveys \cite{heckathorn1997, salganik2004, volz2008}.

\subsection{Markov Chain}
We introduce the basics of a Markov chain for the theoretical analysis of random walk-based estimators.
First, we describe the stationary distribution of a Markov chain, which serves to derive the sampling bias induced by a random walk. 
Let $\bm{P} = (P_{i,j})_{i,j \in S}$ denote the transition probability matrix of a Markov chain on a finite state space $S$. 
If it holds that $\pi_j  = \sum_{i \in S} {\pi_i P_{i,j}}$ for all $j \in S$, a vector $\bm{\pi} = (\pi_i)_{i \in S}$ is the stationary distribution of the chain. 
If a Markov chain is irreducible and aperiodic, then the chain is said to be ergodic (see \cite{levin2017} for formal definitions).
The following theorem holds in regard to the stationary distribution $\bm{\pi}$ of a Markov chain.
\begin{theorem} \label{ergodic}
\cite{levin2017} If a Markov chain is ergodic, the stationary distribution $\bm{\pi}$ uniquely exists.
\end{theorem}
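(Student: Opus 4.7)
The plan is to prove existence and uniqueness of $\bm{\pi}$ separately, with irreducibility doing essentially all of the structural work; aperiodicity is bundled into the hypothesis here but is not actually required for the statement (it is needed only for convergence of $P^n$ to the stationary distribution, which this theorem does not claim).

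For existence on the finite state space $S$, I would use the standard expected-return-time construction. Fix a reference state $s\in S$; since $S$ is finite and the chain is irreducible, the chain is positive recurrent, so $\tau_s := \mathbb{E}_s[T_s^+]$ is finite, where $T_s^+$ is the first return time to $s$. Define
\[
\pi_j := \frac{1}{\tau_s}\,\mathbb{E}_s\!\left[\sum_{t=0}^{T_s^+ - 1} \mathbf{1}_{\{X_t = j\}}\right],
\]
the expected fraction of time spent at $j$ during one excursion from $s$. A direct computation, conditioning on $X_{t}$ and using the (one-step) Markov property followed by a reindexing of the sum, gives $\sum_{i\in S}\pi_i P_{i,j}=\pi_j$ for every $j$, so $\bm{\pi}=\bm{\pi}\bm{P}$; summing over $j$ and using $\sum_{j}\mathbf{1}_{\{X_t=j\}}=1$ yields $\sum_j\pi_j=1$. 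Irreducibility forces each $\pi_j>0$ because the excursion from $s$ visits every state with positive probability before returning.

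For uniqueness, I would argue that the $1$-eigenspace of $\bm{P}$ (acting on row vectors) is one-dimensional. Suppose $\bm{\pi}$ and $\bm{\pi}'$ are stationary and set $\bm{\mu}:=\bm{\pi}-\bm{\pi}'$, so $\bm{\mu}\bm{P}=\bm{\mu}$ and $\sum_i\mu_i=0$. Decomposing $\bm{\mu}=\bm{\mu}^{+}-\bm{\mu}^{-}$ and using the stationarity of each part (cross-cancellation arguments plus the non-negativity of entries of $\bm{P}$), one shows that either $\bm{\mu}^+$ and $\bm{\mu}^-$ would have to be positive stationary measures supported on disjoint non-empty sets, which contradicts irreducibility (mass would leak between the two supports). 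Hence $\bm{\mu}=0$ and $\bm{\pi}=\bm{\pi}'$. Equivalently, this is the finite-dimensional Perron–Frobenius statement that the spectral radius $1$ of an irreducible non-negative matrix is a simple eigenvalue with a strictly positive eigenvector.

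The main obstacle is a clean justification of the two positivity/non-negativity claims: in existence, that the excursion sum defining $\pi_j$ yields a genuine probability vector and satisfies $\bm{\pi}=\bm{\pi}\bm{P}$ (a one-line application of the strong Markov property at $T_s^+$, but it must be written with care so that the reindexing is valid); and in uniqueness, that an indefinite-sign fixed point of $\bm{P}$ is impossible under irreducibility, which is where one either invokes Perron–Frobenius directly or reproves it by the disjoint-support argument above. Both are standard on a finite state space, so the cited reference \cite{levin2017} suffices; I would present the excursion construction explicitly and relegate the Perron–Frobenius step to a reference.
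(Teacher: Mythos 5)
Your proposal is correct. Note that the paper itself offers no proof of this statement---it is quoted verbatim as a known result with the citation to Levin et al., so there is nothing to compare against line by line. Your argument is essentially the one given in that reference: existence via the expected-return-time (excursion) construction $\pi_j = \tau_s^{-1}\,\mathbb{E}_s[\sum_{t=0}^{T_s^+-1}\mathbf{1}_{\{X_t=j\}}]$, which is well defined because a finite irreducible chain is positive recurrent, and uniqueness from the one-dimensionality of the fixed row space of $\bm{P}$. The only stylistic divergence is in the uniqueness step: the cited text derives it by showing every harmonic function of an irreducible chain is constant and then applying rank--nullity, whereas you invoke Perron--Frobenius (or the equivalent disjoint-support argument for a signed fixed point); these are interchangeable on a finite state space. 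Your observation that aperiodicity is not needed for existence and uniqueness---only for convergence of the $r$-step distribution, which is the separate fact the paper uses at the end of Lemma 3---is accurate and worth keeping.
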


Then, we review the strong law of large numbers for a Markov chain, which ensures that an estimator converges almost surely to its expected value with respect to the stationary distribution \cite{lee2012, chen2016}:

\begin{theorem}\label{SLLN}
\cite{jones, roberts}
Let $\{X_k\}_{k=1}^r$ be an ergodic Markov chain with the stationary distribution $\bm{\pi}$ on a finite state space $S$. 
For any function $f: S \to \mathbb{R}$, the quantity $\sum_{k=1}^r f(X_k)/r$ converges to the expected value with respect to $\bm{\pi}$, i.e., $\mathbb{E}_{\bm{\pi}}[f] \triangleq \sum_{i \in S} \pi_i f(i)$, almost surely as $r \to \infty$ regardless of the initial distribution of the chain.
\end{theorem}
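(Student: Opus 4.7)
The plan is to reduce the statement for arbitrary $f$ to a pointwise statement about empirical visit frequencies, and then invoke the classical i.i.d.\ strong law of large numbers via a regeneration (excursion) argument. By Theorem \ref{ergodic}, the stationary distribution $\bm{\pi}$ exists and is unique. For each $i \in S$, let $N_r(i) = \sum_{k=1}^r 1_{\{X_k = i\}}$ denote the number of visits to state $i$ by time $r$. Since $S$ is finite, we may write
\[
\frac{1}{r}\sum_{k=1}^r f(X_k) \;=\; \sum_{i \in S} f(i) \cdot \frac{N_r(i)}{r},
\]
so it suffices to prove $N_r(i)/r \to \pi_i$ almost surely for each fixed $i \in S$; the theorem then follows by linearity of a finite sum.

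To establish this pointwise convergence I would exploit the regenerative structure of the chain. Fix an arbitrary reference state $j \in S$. By irreducibility and finiteness of $S$, the chain hits $j$ in finite time almost surely, and once it does it returns to $j$ infinitely often with almost surely finite successive return times $\tau_1 < \tau_2 < \cdots$. The strong Markov property implies that the excursions $\xi_m \triangleq (X_{\tau_m}, X_{\tau_m + 1}, \ldots, X_{\tau_{m+1}-1})$ for $m \geq 1$ are i.i.d. Let $Y_m$ denote the length of $\xi_m$ and $Z_m(i)$ the number of visits to state $i$ within $\xi_m$. Standard renewal identities then give $\mathbb{E}[Y_1] = 1/\pi_j$ and $\mathbb{E}[Z_1(i)] = \pi_i/\pi_j$, both finite because the state space is finite and every state is positive recurrent.

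The final step applies the classical i.i.d.\ SLLN to the sequences $\{Y_m\}$ and $\{Z_m(i)\}$. Writing $M_r$ for the number of completed returns to $j$ by time $r$, one sandwiches $r$ between $\sum_{m=1}^{M_r} Y_m$ and $\sum_{m=1}^{M_r + 1} Y_m$ to deduce $M_r/r \to \pi_j$ almost surely. Combining this with $\frac{1}{M}\sum_{m=1}^M Z_m(i) \to \pi_i/\pi_j$ and the decomposition $N_r(i) = \sum_{m=1}^{M_r} Z_m(i) + R_r$, where $R_r$ collects the initial segment before $\tau_1$ and the incomplete final excursion straddling time $r$, yields $N_r(i)/r \to \pi_i$ almost surely. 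The ``regardless of the initial distribution'' clause is absorbed into the pre-$\tau_1$ boundary term: its length is almost surely finite for any starting distribution, so it vanishes after division by $r$.

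The main obstacle I expect is controlling the boundary remainder $R_r$, that is, verifying that both the pre-$\tau_1$ segment and the in-progress excursion at time $r$ are $o(r)$ almost surely. This rests on the almost-sure finiteness and finite mean of each excursion length, which in turn follows from positive recurrence of $j$. On a finite state space, irreducibility automatically yields positive recurrence of every state, which keeps the argument clean; on a countable state space this would have to be imposed as a separate hypothesis, and the boundary control would become genuinely delicate.
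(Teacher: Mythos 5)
The paper does not actually prove this statement: it is quoted verbatim as a known result from the Markov-chain ergodic-theory literature (the cited references of Jones and of Roberts--Rosenthal), so there is no in-paper argument to compare yours against. Your regenerative (excursion-cycle) proof is the standard textbook route to the Markov-chain SLLN and the outline is sound: the reduction to the pointwise statement $N_r(i)/r \to \pi_i$ is legitimate precisely because $S$ is finite; the cycles between successive returns to a fixed reference state $j$ are i.i.d.\ by the strong Markov property; the identities $\mathbb{E}[Y_1] = 1/\pi_j$ and $\mathbb{E}[Z_1(i)] = \pi_i/\pi_j$ (Kac's formula and the cycle representation of the stationary measure) are correct and finite here because irreducibility on a finite state space gives positive recurrence of every state; and the sandwich via the renewal count $M_r$ plus the $o(r)$ control of the pre-$\tau_1$ segment and the incomplete final excursion (using $Z_m(i)/m \to 0$ a.s.\ for i.i.d.\ variables with finite mean) closes the argument. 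Two small remarks. First, your proof never uses aperiodicity, and indeed it is not needed for the almost-sure ergodic theorem --- irreducibility and positive recurrence suffice; aperiodicity matters only for the distributional convergence $\bm{\pi}_r \to \bm{\pi}$ invoked elsewhere in the paper (Lemma \ref{lemma:1}). Second, the ``standard renewal identities'' you cite are themselves the substantive content of the theorem; a fully self-contained writeup would have to prove $\mathbb{E}[Z_1(i)] = \pi_i/\pi_j$ (e.g., by verifying that $\mu_j(i) \triangleq \mathbb{E}_j[Z_1(i)]$ defines an invariant measure and appealing to uniqueness from Theorem \ref{ergodic}), but taking them as known is consistent with the level at which the paper itself treats this result.
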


\section{Sampling Algorithm} \label{section:4}
In this section, we design a sampling algorithm based on a random walk considering private nodes in each access model.
First, we describe how to select a neighbor to be traversed from the current node to obtain a sample sequence that has the Markov property.
Then, we derive the sampling bias of each node induced by our random walk.
Finally, we describe a method for calculating the public degree of each sampled node to correct the sampling bias.

\subsection{Neighbor Selection} \label{neighbor}
In a simple random walk, one repeats to select a neighbor uniformly at random and move to that neighbor.
If a walker visits a private node in this method, two problems occur for existing estimators based on a simple random walk.
First, we are not allowed to continue the walk because neighbors of the private node are not retrievable.
Although we can restart the walk from an arbitrary public node previously sampled, the sample sequence loses the Markov property by performing such exception handling. 
Second, it is difficult to correct the sampling bias of private nodes because their degrees and public degrees are unclear. 

We extend a simple random walk to the case of social networks involving private nodes.
We collect a sequence of indices of $r$ sampled nodes, denoted by $({x_1}, {x_2}, \ldots, {x_r})$, as follows. 
We select a seed $v_{x_1} \in C^*$, according to Assumption \ref{assumption:3}.
For the $k$-th sampled node ($k = 1, \ldots, r-1$), we first obtain the set of neighbors of $v_{x_k}$, i.e., $\Gamma(x_k)$, by querying $v_{x_k}$.
Then, we uniformly and randomly select node $u$ from the set $\Gamma(x_k)$.
If $u$ is a public node, the walker moves to $u$ as the next sampled node $v_{x_{k+1}}$, otherwise, we uniformly and randomly select node $u$ from the set $\Gamma(x_k)$ again.
In the ideal model, where the privacy labels of all the neighbors of a queried node are available, we check if a selected neighbor $u$ is public without querying node $u$.
In the hidden privacy model, where the privacy labels of any neighbors of a queried node are not available, we check if $u$ is public by additionally querying $u$.
Note that a walker that is located at $v_{x_k}$ has at least one candidate public neighbor to be traversed for each $k = 1, \ldots, r$ if and only if $v_{x_1}$ belongs to the largest public cluster composed of multiple public nodes.

\begin{figure}[t]
        \begin{center}
          \includegraphics[scale=0.3]{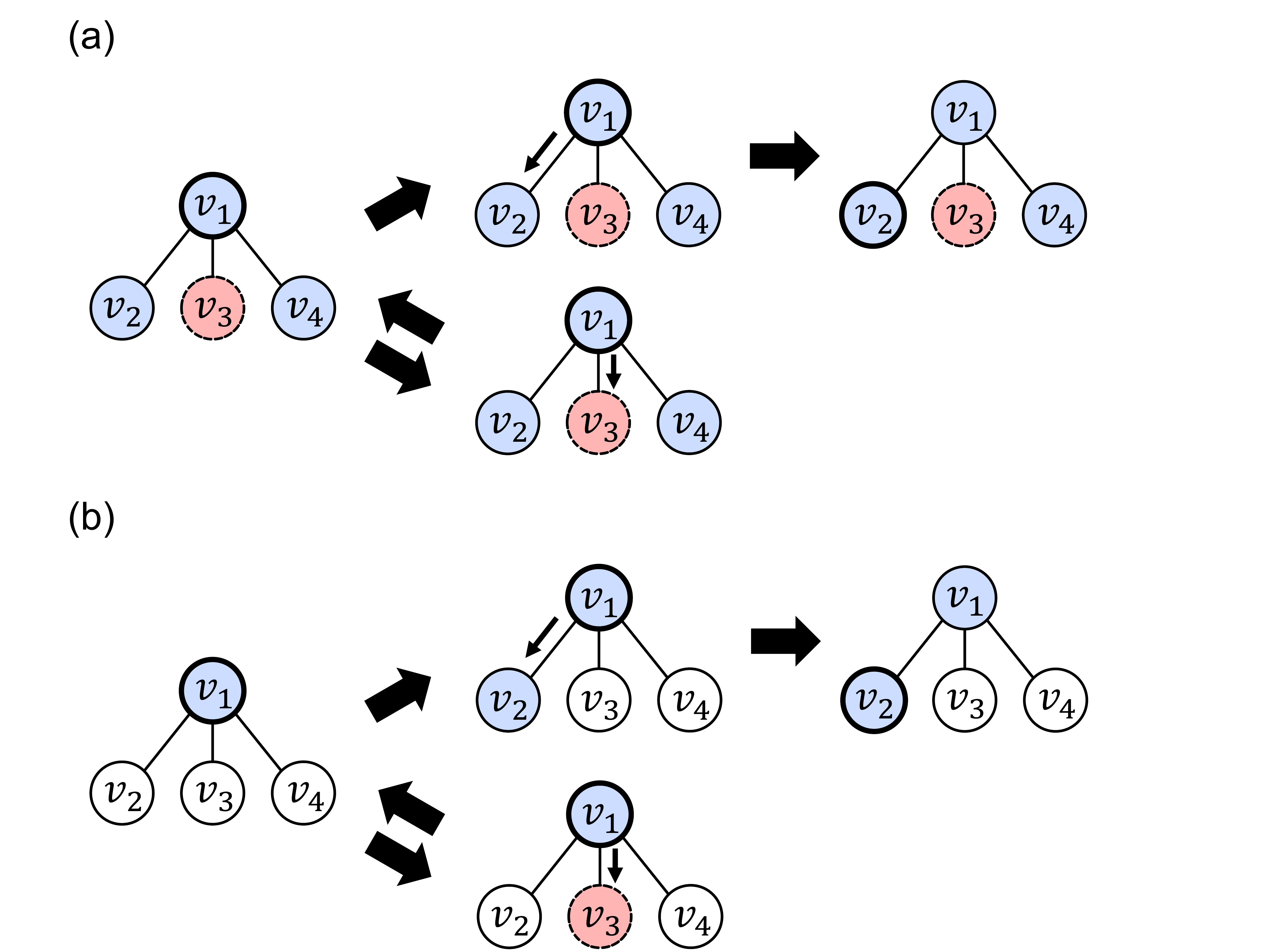}
        \end{center}
      \caption{An example of the procedure of neighbor selection when a walker is located at node $v_1$. A node with a thick line circle represents a node at which a walker is located. (a) Ideal model. (b) Hidden privacy model.}
      \label{fig:2}
\end{figure}

We show in Fig. \ref{fig:2} an example of the procedure of neighbor selection when a walker is located at node $v_1$, where $v_1$ has neighbors, i.e., nodes $v_2$, $v_3$, and $v_4$.
In the case of the ideal model, if one selects neighbor $v_2$ (see the upper side of Fig. \ref{fig:2}(a)), a walker moves to neighbor $v_2$ because node $v_2$ is a public node.
On the other hand, if one selects neighbor $v_3$ (see the lower side of Fig. \ref{fig:2}(a)), one reselects a neighbor uniformly at random because node $v_3$ is a private node.
The same applies to the hidden privacy model (see Fig. \ref{fig:2}(b)).
Note that one needs to check if a selected neighbor is a public node by querying the neighbor in the hidden privacy model.

\subsection{Sampling Bias}
We derive the sampling bias induced by our random walk.
Let the probability that an event $A$ will occur be denoted by $\text{Pr}[A]$. 
We define the distribution induced by the sequence of sampled indices as $\bm{\pi}_{r} = (\text{Pr}[x_r = i])_{i=1}^n$, where $\text{Pr}[x_r = i]$ is the probability that a walker traverses node $v_i$ at $r$-th step.
Note that $\sum_{i=1}^n \text{Pr}[x_r = i] = 1$.
The following lemma indicates that each node that belongs to $C^*$ is sampled in proportion to the public degree via our random walk.

\begin{lemma}\label{lemma:1}
The vector $\bm{\pi}_{r}$ converges to $\bm{\pi} = (p_i)_{i=1}^n$ after many steps of our random walk, where $p_i = 1_{{\{v_i \in V^*\}}} d_i^*/D^*$.
\end{lemma}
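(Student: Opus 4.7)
The plan is to recognize that, although the sampling procedure is described as ``pick a random neighbor from $\Gamma(x_k)$ and reject it if it is private,'' the induced walk, restricted to the accessible state space $V^*$, is nothing more than a simple random walk on the subgraph $C^* = (V^*, E^*)$. Once this is established, Lemma \ref{lemma:1} follows from the classical stationary distribution of a simple random walk on a connected graph, combined with Theorem \ref{ergodic}.

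First I would argue that the state space of the chain is exactly $V^*$. By Assumption \ref{assumption:3}, $v_{x_1} \in V^*$, and by construction the walker only moves to a \emph{public} neighbor of the current node, so the set of nodes ever reachable is contained in $V^*$. Next I would compute the transition probabilities. Conditioning on the rejection loop ``uniformly sample from $\Gamma(x_k)$ until a public neighbor is returned,'' the probability that the walker moves from $v_i \in V^*$ to $v_j$ is
\begin{equation*}
P_{ij} \;=\; \sum_{m \geq 0} \left(\frac{d_i - d_i^*}{d_i}\right)^{\!m}\!\frac{1_{\{v_j \in V^*,\ (v_i,v_j)\in E\}}}{d_i} \;=\; \frac{1_{\{v_j \in V^*,\ (v_i,v_j)\in E\}}}{d_i^*},
\end{equation*}
so the induced chain is precisely a simple random walk on $C^*$. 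Note that the re-selection terminates with probability one because every $v_i \in V^*$ has at least one public neighbor (else $v_i$ would form a singleton public cluster, contradicting $|V^*| \geq 2$ implicit in the lemma).

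Next I would verify that $\bm{\pi} = (p_i)_{i=1}^n$ with $p_i = 1_{\{v_i \in V^*\}} d_i^*/D^*$ is the stationary distribution. The easiest route is a detailed-balance check: for $v_i, v_j \in V^*$ with $(v_i,v_j) \in E^*$,
\begin{equation*}
p_i P_{ij} \;=\; \frac{d_i^*}{D^*}\cdot\frac{1}{d_i^*} \;=\; \frac{1}{D^*} \;=\; \frac{d_j^*}{D^*}\cdot\frac{1}{d_j^*} \;=\; p_j P_{ji},
\end{equation*}
so $\bm{\pi}$ is reversible, hence stationary. Irreducibility follows because $C^*$ is connected by definition of a public cluster. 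To invoke Theorem \ref{ergodic} I also need aperiodicity; this holds as long as $C^*$ is not bipartite, and I would state this as a mild additional assumption (or, equivalently, note that the argument applies verbatim to the lazy variant of the walk, and that the paper's later convergence statements implicitly require it). Given ergodicity, Theorem \ref{ergodic} yields uniqueness of $\bm{\pi}$, and the standard Markov chain convergence theorem gives $\bm{\pi}_r \to \bm{\pi}$, completing the proof.

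The main obstacle I anticipate is the justification that the reject-and-resample step does not break the Markov property or alter the stationary distribution beyond the obvious normalization; the cleanest way around this is the geometric-sum collapse above, which reduces everything to the textbook random walk on $C^*$. The only other subtlety is aperiodicity, which is a genuine but easily dispatched side condition rather than a deep issue.
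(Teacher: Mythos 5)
Your proposal is correct and follows essentially the same route as the paper: both reduce the walk to a simple random walk on $C^*$ with transition probabilities $P_{ij}=1/d_i^*$ and then invoke Theorem~\ref{ergodic}. You are somewhat more explicit in two places. First, you actually derive $P_{ij}=1/d_i^*$ by summing the geometric series over rejected private-neighbor selections, whereas the paper simply asserts the transition matrix; your computation is the honest justification that the reject-and-resample loop collapses to uniform selection among public neighbors. Second, and more substantively, you flag aperiodicity: the paper claims that ``a simple random walk on a connected graph is ergodic,'' which is false when the graph is bipartite (the chain then has period $2$ and $\bm{\pi}_r$ need not converge, even though $\bm{\pi}$ is still the unique stationary distribution). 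Your suggestion to either assume $C^*$ is non-bipartite or pass to the lazy walk is the standard fix, and it is a genuine (if mild) gap in the paper's argument that your write-up handles more carefully. The detailed-balance verification of $p_iP_{ij}=p_jP_{ji}=1/D^*$ is a clean substitute for the paper's unstated check that $(p_i)$ satisfies the stationarity equations.
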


\begin{proof}
First, it holds that $\text{Pr}[x_r = i] = 0$ for each node $v_i \in V \backslash V^*$ because our random walk never traverses nodes that do not belong to $C^*$. 
Then, for each node $v_i \in V^*$, we show that $\text{P}r[x_r = i]$ converges to $d_i^*/D^*$ after many steps of our random walk. 
Our random walk has the transition probability matrix $\bm{P} = (P_{i,j})_{v_i, v_j \in V^*}$ defined as  
\begin{align*}
P_{i, j} = 
\begin{cases}
1/d_i^* & \text{if }(v_i, v_j) \in E^*, \\
0 & (\text{otherwise}).
\end{cases}
\end{align*}
The corresponding Markov chain is ergodic because it is equivalent to a simple random walk on $C^*$. 
Note that a simple random walk on a connected graph is ergodic \cite{Lovasz1996, levin2017}.
Therefore, the stationary distribution uniquely exists because of Theorem \ref{ergodic}. 
The vector $(p_i)_{i=1}^n$ satisfies the definition of the stationary distribution. 
The probability $\text{Pr}[x_r = i]$ converges to the stationary distribution after many steps of our random walk.
\end{proof}

\subsection{Calculating the Public Degree of Each Sampled Node} \label{calc_pubd}
We calculate the public degree of each sampled node to correct the sampling bias that is attributable to the public degree.
In the ideal model, we exactly calculate the public degree of each sampled node without additional queries because the privacy labels of all the neighbors of each sampled node are available. 
On the other hand, in the hidden privacy model, one requires a huge number of additional queries to exactly calculate the public degree of each sampled node.

\begin{algorithm}[t]                     
\caption{Our random walk in the hidden privacy model.}      
\label{alg:1}                          
\begin{algorithmic}[1]               
\REQUIRE Seed $v_{x_1} \in C^*$. Sample size $r$.
\ENSURE Sampling list $R$.
\STATE $R \leftarrow$ an empty list.
\FOR{$k=1$ to $r$}
\STATE Query $v_{x_k}$ and obtain the set $\Gamma(x_k)$.
\STATE $d_{x_k} \leftarrow |\Gamma(x_k)|$.
\STATE $\hat{d}_{x_k}^* \leftarrow 0$.
\STATE $R \leftarrow$ append $(x_k,d_{x_k},\hat{d}_{x_k}^*)$.
\IF{$v_{x_k}$ has been visited for the first time}
\STATE $a_{x_k} \leftarrow 0$.
\STATE $b_{x_k} \leftarrow 0$.
\ENDIF
\STATE flag $\leftarrow$ False.
\WHILE{flag is False}
\STATE $u \leftarrow$ a neighbor uniformly and randomly chosen from $\Gamma(x_k)$.
\STATE $b_{x_k} \leftarrow b_{x_k} + 1$.
\IF{$u$ is a public node}
\STATE $v_{x_{k+1}} \leftarrow u$
\STATE $a_{x_k} \leftarrow a_{x_k} + 1$
\STATE flag $\leftarrow$ True.
\ENDIF
\ENDWHILE
\ENDFOR
\FOR{$k=1$ to $r$}
\STATE $\hat{d}_{x_k}^* \leftarrow d_{x_k} \frac{a_{x_k}}{b_{x_k}}$
\ENDFOR
\RETURN{$R$}
\end{algorithmic}
\end{algorithm}

We propose a method to estimate the public degree of each sampled node without additional queries in the hidden privacy model.
The proposed method utilizes the history of neighbor selections generated by our random walk.
Specifically, we record two quantities $a_{x_k}$ and $b_{x_k}$ for each sampled node $v_{x_k}$.
Quantity $a_{x_k}$ is the total number of times a public neighbor of $v_{x_k}$ is successfully selected. 
Quantity $b_{x_k}$ is the total number of times a neighbor of $v_{x_k}$ is selected.
For example, we consider the case of Fig.~\ref{fig:2}(b) in which a walker is located at node $v_1$ in the hidden privacy model.
If one selects neighbor $v_2$, one increases $a_1$ and $b_1$ by one each because $v_2$ is a public neighbor of $v_1$ (see the upper side of Fig. \ref{fig:2}(b)).
On the other hand, if one selects neighbor $v_3$, one increases $b_1$ by one because $v_3$ is not a public neighbor of $v_1$ (see the lower side of Fig. \ref{fig:2}(b)).
After completing our random walk of length $r$, we calculate an estimator, $\hat{d}_{x_k}^*$, of the public degree of each sampled node $v_{x_k}$ as
\begin{align*}
  \hat{d}_{x_k}^* \triangleq d_{x_k} \frac{a_{x_k}}{b_{x_k}}.
\end{align*}
Note that it holds that $b_{x_k}> 0$ for each $k = 1, \ldots, r$ because at least one neighbor selection is performed for each sampled node.
Algorithm \ref{alg:1} shows the pseudocode of our random walk using the proposed method.

We ensure that the estimator $\hat{d}_{x_k}^*$ is an unbiased estimator of the public degree of $v_{x_k}$.
\begin{lemma}\label{lemma:2}
For each sampled node $v_{x_k}$, the estimator, $\hat{d}_{x_k}^*$, converges to the true value, $d_{x_k}^*$, after many steps of our random walk.
\end{lemma}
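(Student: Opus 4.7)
My plan is to interpret $a_{x_k}$ and $b_{x_k}$ in terms of the per-visit neighbor-selection process at $v_{x_k}$ and then apply the strong law of large numbers.

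First, I would unpack the bookkeeping in Algorithm~\ref{alg:1}. Each time the walker is at $v_{x_k}$, it draws neighbors i.i.d.\ uniformly from $\Gamma(x_k)$ until it obtains a public one, and then moves to that public neighbor. Since $|\Gamma(x_k)| = d_{x_k}$ and exactly $d_{x_k}^*$ of these neighbors are public, each draw is an independent Bernoulli trial with success probability $q \triangleq d_{x_k}^*/d_{x_k}$. The quantity $b_{x_k}$ accumulates the total number of draws ever made at $v_{x_k}$, while $a_{x_k}$ accumulates the total number of successful draws. By construction, each visit produces exactly one success, so if $N_k$ denotes the number of times the walker has been located at $v_{x_k}$ during the first $r$ steps, then $a_{x_k} = N_k$ and $b_{x_k} = T_1 + T_2 + \cdots + T_{N_k}$, where the $T_j$ are i.i.d.\ geometric random variables with parameter $q$ (independent of the walk's trajectory between visits, because the Bernoulli trials are drawn afresh each visit).

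Next, I would argue that $N_k \to \infty$ almost surely as $r \to \infty$ for every $v_{x_k} \in V^*$. This follows from Lemma~\ref{lemma:1}: our random walk corresponds to an ergodic Markov chain on $C^*$ with positive stationary mass on every node of $V^*$, so by Theorem~\ref{SLLN} applied to the indicator $f(v) = 1_{\{v = v_{x_k}\}}$, the visit count $N_k$ grows linearly in $r$.

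Once $N_k \to \infty$, the classical strong law of large numbers applied to the i.i.d.\ sequence $T_1, T_2, \ldots$ gives
\begin{align*}
\frac{b_{x_k}}{a_{x_k}} \;=\; \frac{1}{N_k}\sum_{j=1}^{N_k} T_j \;\longrightarrow\; \mathbb{E}[T_1] \;=\; \frac{1}{q} \;=\; \frac{d_{x_k}}{d_{x_k}^*}
\end{align*}
almost surely. Taking reciprocals and multiplying by $d_{x_k}$ yields $\hat{d}_{x_k}^* = d_{x_k}\,a_{x_k}/b_{x_k} \to d_{x_k}^*$, as required.

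The only subtle point, and the one I would be most careful to state precisely, is the independence structure needed for the SLLN: the $T_j$'s associated with successive visits to $v_{x_k}$ are i.i.d.\ and also independent of the excursions of the walk between those visits, because each rejection-sampling round is a fresh sequence of uniform draws from $\Gamma(x_k)$ whose outcomes depend only on the local neighborhood of $v_{x_k}$, not on the history of the walker. Everything else is a direct invocation of Lemma~\ref{lemma:1} and the ordinary strong law.
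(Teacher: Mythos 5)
Your proof is correct, and it reaches the same conclusion by a slightly different (dual) decomposition of the same rejection-sampling process. The paper's proof views $a_{x_k}/b_{x_k}$ directly as the empirical success frequency of the $b_{x_k}$ i.i.d.\ uniform draws from $\Gamma(x_k)$, i.e.\ as the sample mean of Bernoulli$(d_{x_k}^*/d_{x_k})$ indicators $X_{x_k}(l)$, computes $\mathbb{E}[\hat{d}_{x_k}^*]=d_{x_k}^*$, and invokes the ordinary law of large numbers. You instead group the draws by visit, writing $a_{x_k}=N_k$ (one success per visit) and $b_{x_k}=\sum_{j=1}^{N_k}T_j$ with $T_j$ i.i.d.\ geometric, and apply the SLLN to $b_{x_k}/a_{x_k}$ before inverting. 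The two routes are mathematically equivalent, but yours buys some extra rigor at the one place the paper is informal: you explicitly justify that the number of trials at $v_{x_k}$ tends to infinity, via Lemma~\ref{lemma:1}, ergodicity, and Theorem~\ref{SLLN} applied to the visit indicator, whereas the paper simply says the convergence happens ``after many steps'' without addressing why $b_{x_k}\to\infty$. You also correctly flag the independence of the $T_j$'s from the between-visit excursions, which is the hidden hypothesis behind applying an i.i.d.\ law of large numbers along a random number $N_k$ of summands. The paper's version is shorter and additionally records the unbiasedness $\mathbb{E}[\hat{d}_{x_k}^*]=d_{x_k}^*$ for any finite number of trials, which your renewal-style argument does not state (and which, strictly speaking, requires a little care since $b_{x_k}$ is random); but for the convergence claim actually asserted in the lemma, your argument is complete and, if anything, more carefully justified.
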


\begin{proof}
Let $X_{x_k}(l)$ denote a random variable that is equal to 1 if a public neighbor of $v_{x_k}$ is selected at the $l$-th trial of neighbor selections at $v_{x_k}$ and is equal to 0 otherwise, where $l = 1, \ldots, b_{x_k}$ and it holds that $\sum_{l=1}^{b_{x_k}} X_{x_k}(l) = a_{x_k}$. 
It holds that $\text{Pr}[X_{x_k}(l) = 1] = d_{x_k}^*/d_{x_k}$ because $X_{x_k}(l)$ follows a Bernoulli distribution for each $l$. 
Therefore, we have $\mathbb{E}[\hat{d}_{x_k}^*] = d_{x_k} d_{x_k}^*/d_{x_k} = d_{x_k}^*$. 
A sequence of random variables $\{X_{x_k}(l)\}_{l = 1}^{b_{x_k}}$ is drawn from a process of independent and identically distributed trials.
Therefore, the estimator $\hat{d}_{x_k}^*$ converges to its expected value $\mathbb{E}[\hat{d}_{x_k}^*] = d_{x_k}^*$ after many steps of our random walk because of the law of large numbers.
\end{proof}

The estimation accuracy of the public degree for an individual node with a low degree may be poor because a walker does not often traverse the node.
However, the estimation accuracy of the public degree for the set of nodes with a given low degree (e.g., nodes with degree 2) may not be too poor in a network with heavy-tailed degree distribution. 
This is because there are many nodes with a low degree in such networks.
Therefore, when one attempts to estimate properties of a network, estimation accuracy of the public degree of individual nodes may not be a serious issue.
In fact, we numerically find that an estimator of the network size using $\hat{d}_{x_k}^*$ yields almost the same accuracy as that using the exact public degree (see Section 6.2.4 for details).

Then, we theoretically show that our random walk using the proposed method generates much fewer queries than that using the exact calculation method (i.e., one queries all the neighbors of each sampled node).
For simplicity, we do not consider this saving of the neighbors' data of nodes queried once in the theoretical analysis.
We denote by $Q(k)$ the number of queries generated by the exact method at the $k$-th sampled node $v_{x_k}$.
We denote by $Q'(k)$ the number of queries generated by the proposed method at the $k$-th sampled node. 
Let $Q = (\sum_{k=1}^r Q(k))/r$ denote the ratio of the number of queries using the exact method to the sample size.
Let $Q' = (\sum_{k=1}^r Q'(k))/r$ denote the ratio of the number of queries using the proposed method to the sample size.
We have the following lemma.

\begin{lemma}\label{lemma:3}
The expected value of $Q$ with respect to $\bm{\pi}$ is given by 
\begin{align*}
\mathbb{E}_{\bm{\pi}}[Q] = \frac{1}{D^*} \sum_{v_i \in V^*} d_i^* d_i.
\end{align*}
The expected value of $Q'$ with respect to $\bm{\pi}$ is given by 
\begin{align*}
\mathbb{E}_{\bm{\pi}}[Q'] = \frac{1}{D^*} \sum_{v_i \in V^*} d_i.
\end{align*}
\end{lemma}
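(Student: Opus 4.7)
The plan is to apply the strong law of large numbers for ergodic Markov chains (Theorem~\ref{SLLN}) together with the stationary distribution from Lemma~\ref{lemma:1}, namely $\pi_i = d_i^*/D^*$ for $v_i \in V^*$ and zero elsewhere.

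For $\mathbb{E}_{\bm{\pi}}[Q]$, the first observation is that the exact method queries every neighbor of $v_{x_k}$ to recover its privacy label, so $Q(k) = d_{x_k}$ is a deterministic function of the current state. Theorem~\ref{SLLN} applied to $f(v_i) = d_i$ then gives
\begin{align*}
Q = \frac{1}{r}\sum_{k=1}^r d_{x_k} \longrightarrow \sum_{v_i \in V^*} \frac{d_i^*}{D^*}\, d_i = \frac{1}{D^*}\sum_{v_i \in V^*} d_i^* d_i
\end{align*}
almost surely, yielding the first identity.

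For $\mathbb{E}_{\bm{\pi}}[Q']$, I identify $Q'(k) = b_{x_k}$, the number of rejection trials performed at step $k$. This quantity is random even after conditioning on $x_k$: under the neighbor selection procedure of Section~\ref{neighbor}, each trial independently picks a uniform neighbor of $v_{x_k}$ and accepts it when it is public, so given $x_k = v$ the variable $b_{x_k}$ is geometric with success probability $d_v^*/d_v$ and mean $d_v/d_v^*$. Writing $b_{x_k} = g(x_k) + \epsilon_k$ with $g(v) = d_v/d_v^*$ and $\mathbb{E}[\epsilon_k \mid x_k] = 0$, Theorem~\ref{SLLN} applied to $g$ gives
\begin{align*}
\frac{1}{r}\sum_{k=1}^r g(x_k) \longrightarrow \sum_{v_i \in V^*} \frac{d_i^*}{D^*}\cdot \frac{d_i}{d_i^*} = \frac{1}{D^*}\sum_{v_i \in V^*} d_i,
\end{align*}
which matches the target value.

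The remaining step, and the main technical obstacle, is showing that the residual $(1/r)\sum_k \epsilon_k$ vanishes in the limit. The key observation is that the fresh randomness used for rejection sampling at different steps is mutually independent, so conditional on the trajectory $(x_1,\dots,x_r)$ the $\epsilon_k$ are independent and centered, with conditional variance equal to that of a geometric random variable with parameter $d_{x_k}^*/d_{x_k} \geq 1/d_{\max}$, hence uniformly bounded. A Chebyshev bound on the conditional second moment of $(1/r)\sum_k \epsilon_k$ therefore drives this residual to $0$ (with the standard Borel--Cantelli upgrade to almost-sure convergence along a subsequence), completing the proof of the second identity.
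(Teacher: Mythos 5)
Your proposal is correct, and the two computations at its core are exactly the ones the paper uses: $Q(k)=d_{x_k}$ is deterministic given the current state, and $Q'(k)$ is geometric with success probability $d_{x_k}^*/d_{x_k}$, hence conditional mean $d_{x_k}/d_{x_k}^*$; summing either against the stationary weights $d_i^*/D^*$ gives the stated formulas. Where you diverge is in what you prove. The lemma only asserts a value for $\mathbb{E}_{\bm{\pi}}[Q]$ and $\mathbb{E}_{\bm{\pi}}[Q']$, and the paper accordingly just computes these expectations by linearity plus the law of total expectation under $\bm{\pi}$ (via Lemma~\ref{lemma:1}); no convergence argument appears. You instead establish almost-sure convergence of the empirical averages $Q$ and $Q'$ to these values, invoking Theorem~\ref{SLLN} and then, for $Q'$, decomposing $Q'(k)=g(x_k)+\epsilon_k$ and killing the residual with a conditional-variance bound (using $d_{x_k}^*/d_{x_k}\ge 1/d_{\max}$, so the geometric variances are uniformly bounded). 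That last step is the genuinely extra content: the Markov-chain SLLN does not directly cover $Q'$ because $Q'(k)$ is not a function of the state alone, and your handling of the auxiliary rejection-sampling randomness is the right fix. Your route therefore proves a strictly stronger statement at the cost of more machinery; the paper's route is shorter because it answers only the expectation question. One small caution: Chebyshev plus Borel--Cantelli as you describe it gives almost-sure convergence only along a subsequence; to conclude for the full sequence you should either invoke Kolmogorov's strong law for independent centered variables with uniformly bounded variances (which applies verbatim here) or add the standard interpolation between subsequence points. This is routine and does not constitute a gap.
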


\begin{proof}
It holds that $Q(k) = d_{x_k}$ because one queries all the neighbors of $v_{x_k}$ in the exact method.  
Therefore, we have 
\begin{align*}
\mathbb{E}_{\bm{\pi}}[Q]
= \mathbb{E}_{\bm{\pi}}[Q(k)] 
= \sum_{v_i \in V^*} {\frac{d_i^*}{D^*}} \mathbb{E}[Q(k) | x_k = i] 
= \frac{1}{D^*} \sum_{v_i \in V^*} d_i^* d_i.
\end{align*}
The first equation holds because of the linearity of expectation. 
The second equation holds because of the law of total expectation and Lemma \ref{lemma:1}.

The quantity $Q'(k)$ follows the geometric distribution with success probability $d_{x_k}^*/d_{x_k}$ because we repeatedly query a neighbor of $v_{x_k}$ uniformly at random until a public neighbor of $v_{x_k}$ is firstly selected. 
Therefore, it holds true that $\mathbb{E}[Q'(k)] = d_{x_k}/d_{x_k}^*$. 
Then, we have
\begin{align*}
\mathbb{E}_{\bm{\pi}}[Q'] 
= \mathbb{E}_{\bm{\pi}}[Q'(k)] 
= \sum_{v_i \in V^*} {\frac{d_i^*}{D^*}} \frac{d_i}{d_i^*} 
= \frac{1}{D^*} \sum_{v_i \in V^*} d_i.
\end{align*}
\end{proof}

Intuitively, $\sum_{v_i \in V^*} d_i^*$ is the order of $\sum_{v_i \in V} d_i$ and $\sum_{v_i \in V^*} d_i^* d_i$ is the order of $\sum_{v_i \in V} d_i^2$.
The sum of squares of degrees is much larger than the sum of degrees for a large-scale network with a heavy-tailed degree distribution \cite{newman_networks}.
Therefore, Lemma \ref{lemma:3} indicates that the proposed method generates much fewer queries than the exact method.

\section{Estimators} \label{section:5}

\subsection{Overview}
In the conventional framework \cite{gjoka2010, gjoka2011}, one re-weights each sampled node using its public degree to correct the sampling bias. 
Therefore, existing estimators converge to the quantities of the largest public cluster.
When the original network comprises only public nodes, as assumed in the conventional framework \cite{gjoka2010, gjoka2011}, the expected values of estimators are equal to the quantities of the original network. 
However, when the original network involves private nodes, existing estimators are expected to induce biases due to private nodes.

It is not trivial to reduce biases induced by private nodes because the exact probability $p$ or the proportion of private nodes are unknown to third parties.
Furthermore, it is difficult to apply existing methods for estimating the probability $p$ or the proportion of private nodes based on the privacy labels of sampled nodes (e.g., Section 3.C.3 in Ref. \cite{gjoka2011}, Section 3.2 in Ref. \cite{katzir2011}, and Section 4.2.3 in Ref. \cite{ribeiro2010}).
This is because private nodes are not included in the sample sequence.

We propose estimators with reduced biases induced by private nodes for the network size, average degree, and density of the node label. 
We re-weight each sampled node using both its degree and its public degree to reduce biases induced by private nodes under Assumption \ref{assumption:2}. 
We theoretically show that the proposed estimators have approximately no bias induced by private nodes if all public nodes belong to the largest public cluster.
In the following, for each of the three properties, we first introduce the existing estimator and then describe our estimator.
Then, we describe heuristic estimators for the fraction of private nodes, which combine the existing and proposed estimators for the network size and average degree each.

\subsection{Network Size}

\subsubsection{Existing estimator}
The node collision estimator is effective for estimating the network size \cite{katzir2013,katzir2011}. 
In this estimator, one counts the number of collisions in the indices of pairs of the sampled nodes whose ordinal numbers in the sample sequence are far away. 
Such pairs of sampled nodes are regarded as being sampled independently of each other from the stationary distribution \cite{katzir2013, katzir2011}.

Formally, the existing estimator of the network size is defined as follows.
Let $I = \{(k,l)\ |\ m \leq |k-l| \land 1 \leq k,\ l \leq r\}$ denote the set of integer pairs that are between $1$ and $r$ and at least a threshold $m$ away. 
We set $m = 0.025 r$, as in the previous study \cite{katzir2013}.
Let $\phi_{k,l}$ denote a variable that is equal to 1 if the indices of $k$-th and $l$-th sampled public nodes are the same, i.e., $x_k = x_l$ (this is called a collision) and is equal to 0 otherwise. 
One defines the average of the number of collisions $\Phi_{\text{size}}$, the average of the weights to correct the sampling bias $\Psi_{\text{size}}$, and a size estimate $\hat{n}$ as
\begin{align*}
\Phi_{\text{size}} = \frac{1}{|I|}\sum_{(k,l) \in I} \phi_{k,l},\ \ \ 
\Psi_{\text{size}} = \frac{1}{|I|}\sum_{(k,l) \in I} \frac{d_{x_k}^*}{d_{x_l}^*},\ \ \ 
\hat{n} \triangleq \frac{\Psi_{\text{size}}}{\Phi_{\text{size}}}.
\end{align*}

We have the following lemma, which is extended from the results shown in previous studies \cite{katzir2013,katzir2011} which assume that the original network involves no private nodes. 

\begin{lemma} \label{lemma:4}
The estimator $\hat{n}$ asymptotically converges to the size of the largest public cluster, i.e., $n^*$, after many steps of our random walk.
\end{lemma}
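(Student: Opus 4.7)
The plan is to apply Theorem \ref{SLLN} (the strong law of large numbers for Markov chains) to $\Phi_{\text{size}}$ and $\Psi_{\text{size}}$ separately, and then take the ratio. The key observation is that requiring $|k - l| \geq m = 0.025 r$ forces far-apart indices, so pairs $(x_k, x_l) \in I$ become asymptotically independent samples from the stationary distribution $\bm{\pi}$ identified in Lemma \ref{lemma:1}. Concretely, $\pi_i = d_i^*/D^*$ for $v_i \in V^*$ and $\pi_i = 0$ otherwise, and the chain is ergodic (as shown in the proof of Lemma \ref{lemma:1}).

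First I would treat $\Psi_{\text{size}}$. Since for distinct $k, l$ one has $d_{x_k}^*/d_{x_l}^* = d_{x_k}^* \cdot (1/d_{x_l}^*)$, I would write
\begin{align*}
\Psi_{\text{size}} = \frac{1}{|I|} \sum_{(k,l) \in I} d_{x_k}^* \cdot \frac{1}{d_{x_l}^*} \approx \left(\frac{1}{r}\sum_{k=1}^r d_{x_k}^*\right)\left(\frac{1}{r}\sum_{l=1}^r \frac{1}{d_{x_l}^*}\right),
\end{align*}
where the approximation error (from the diagonal $k=l$ and from the $O(rm)$ excluded close pairs) is of smaller order than $r^2$ and hence vanishes after normalization by $|I| = \Theta(r^2)$. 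Each of the two single-index averages on the right is exactly of the form $\frac{1}{r}\sum_k f(x_k)$ for a bounded function $f$ on the finite state space $V^*$, so Theorem \ref{SLLN} applies and gives convergence almost surely to $\mathbb{E}_{\bm{\pi}}[d^*] = \sum_{v_i \in V^*} (d_i^*)^2/D^*$ and $\mathbb{E}_{\bm{\pi}}[1/d^*] = n^*/D^*$ respectively. Hence $\Psi_{\text{size}} \to n^* \sum_{v_i \in V^*} (d_i^*)^2 / (D^*)^2$.

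For $\Phi_{\text{size}}$, the variable $\phi_{k,l} = 1_{\{x_k = x_l\}}$ cannot be factored across the two indices, so the argument is a little more delicate. I would bound $\mathbb{E}[\phi_{k,l}] = \sum_{v_i \in V^*} \mathrm{Pr}[x_k = i] \cdot \mathrm{Pr}[x_l = i \mid x_k = i]$ and use the fact that the ergodic chain mixes: for $|k - l| \geq m \to \infty$, the conditional distribution of $x_l$ given $x_k$ is uniformly close in total variation to $\bm{\pi}$. Hence each term satisfies $\mathrm{Pr}[x_k = x_l] \to \sum_{v_i \in V^*} \pi_i^2 = \sum_{v_i \in V^*} (d_i^*)^2/(D^*)^2$, and averaging over $(k,l) \in I$ transfers this pointwise limit to $\Phi_{\text{size}}$ (a second-moment/variance argument controls fluctuations around the mean). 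Combining both limits by the continuous mapping theorem yields
\begin{align*}
\hat{n} = \frac{\Psi_{\text{size}}}{\Phi_{\text{size}}} \;\longrightarrow\; \frac{\,n^* \sum_{v_i \in V^*} (d_i^*)^2 / (D^*)^2\,}{\sum_{v_i \in V^*} (d_i^*)^2 / (D^*)^2} = n^*.
\end{align*}

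The main obstacle is justifying the asymptotic-independence step for $\Phi_{\text{size}}$ rigorously: unlike $\Psi_{\text{size}}$, the collision indicator does not split into a product of single-index functionals, so one needs a quantitative mixing-time bound plus a second-moment estimate to turn the pointwise limit of $\mathbb{E}[\phi_{k,l}]$ into almost-sure convergence of the full double average. Everything else (the algebraic simplification of the limits and the conclusion $\hat n \to n^*$) is routine once the two SLLN-type limits are in hand.
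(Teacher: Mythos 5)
Your proposal is correct and follows essentially the same route as the paper: compute $\mathbb{E}_{\bm{\pi}}[\Phi_{\text{size}}] = \sum_{v_i \in V^*} (d_i^*/D^*)^2$ and $\mathbb{E}_{\bm{\pi}}[\Psi_{\text{size}}] = n^* \sum_{v_i \in V^*} (d_i^*/D^*)^2$ by treating pairs with $|k-l| \geq m$ as independent draws from $\bm{\pi}$, then take the ratio. In fact you are more careful than the paper at exactly the point it glosses over --- the paper simply asserts that $\Phi_{\text{size}}$ and $\Psi_{\text{size}}$ ``intuitively converge'' to their expectations, whereas your factorization of $\Psi_{\text{size}}$ into two single-index time averages (so Theorem \ref{SLLN} applies directly) and your mixing-plus-second-moment sketch for the non-factorable collision count $\Phi_{\text{size}}$ supply the justification the paper omits.
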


\begin{proof}
First, we calculate the expected value with respect to $\bm{\pi}$ of $\Phi_{\text{size}}$:
\begin{align*}
\mathbb{E}_{\bm{\pi}}\left[\Phi_{\text{size}}\right] 
&= \mathbb{E}_{\bm{\pi}}[\phi_{k,l}] 
= \sum_{v_i \in V^*} {\left(\frac{d_i^*}{D^*}\right)^2}.
\end{align*}
The first equation holds because of the linearity of expectation. 
The second equation holds because $v_{x_k}$ and $v_{x_l}$ are sampled independently of each other from the stationary distribution. 
Then, we calculate the expected value with respect to $\bm{\pi}$ of $\Psi_{\text{size}}$:
\begin{align*}
\mathbb{E}_{\bm{\pi}}\left[\Psi_{\text{size}}\right] 
= \mathbb{E}_{\bm{\pi}}\left[\frac{d_{x_k}^*}{d_{x_l}^*}\right] 
= \sum_{v_i \in V^*} {\sum_{v_j \in V^*} {\frac{d_i^*}{d_j^*}\frac{d_j^*}{D^*}\frac{d_i^*}{D^*}}} 
= n^* \sum_{v_i \in V^*} {\left(\frac{d_i^*}{D^*}\right)^2}.
\end{align*}
Quantities $\Phi_{\text{size}}$ and $\Psi_{\text{size}}$ intuitively converge to their respective expected values with respect to $\bm{\pi}$ after many steps of our random walk.
Therefore, we conclude $\hat{n}$ asymptotically converges to $n^*$.
\end{proof}

Quantity $n^*$ depends on a set of privacy labels of the nodes, $\mathcal{L}_{\text{pri}}$.
Therefore, to quantify the bias induced by private nodes of the expected value $n^*$, we attempt to derive the expected value of $n^*$ given a set of privacy labels, $\mathcal{L}_{\text{pri}}$, under Assumption \ref{assumption:2}.
Let $\mathbb{E}_{\text{pri}}[X]$ denote the expected value of a random variable $X$ with respect to $\mathcal{L}_{\text{pri}}$ under Assumption \ref{assumption:2}.
To simplify the calculation, we approximate $\mathbb{E}_{\text{pri}}[n^*]$ under the condition that all the public nodes belong to the largest public cluster. 
Under this condition, it holds that $\text{Pr}[v_i \in V^*] = 1-p$ because of Assumption \ref{assumption:2}.

The following lemma holds regarding the expected value of the existing estimator.
\begin{lemma} \label{lemma:5}
If all the public nodes belong to the largest public cluster, we have
\begin{align*}
  \mathbb{E}_{\text{pri}}[n^*] = (1-p)n.
\end{align*}
\end{lemma}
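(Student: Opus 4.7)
The plan is to use linearity of expectation together with the hypothesis that all public nodes belong to the largest public cluster. Under this hypothesis, $V^{*}$ coincides with the set of all public nodes in $G$, so we can write
\begin{equation*}
n^{*} = |V^{*}| = \sum_{i=1}^{n} \mathbf{1}_{\{l_{\text{pri}}(i) = \text{public}\}}.
\end{equation*}

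From here, I would apply linearity of expectation over $\mathcal{L}_{\text{pri}}$, obtaining
\begin{equation*}
\mathbb{E}_{\text{pri}}[n^{*}] = \sum_{i=1}^{n} \Pr[l_{\text{pri}}(i) = \text{public}].
\end{equation*}
By Assumption \ref{assumption:2}, each node is independently public with probability $1-p$, so every term in the sum equals $1-p$, yielding $\mathbb{E}_{\text{pri}}[n^{*}] = (1-p)n$.

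There is essentially no obstacle here once the hypothesis is invoked: the only subtle point worth flagging in the writeup is that, in general, $n^{*}$ is the size of the \emph{largest} connected component induced by the public nodes (not the total count of public nodes), and the two quantities coincide precisely because the hypothesis rules out the existence of multiple public clusters. This is where the approximation in the statement comes from, and I would state it explicitly before invoking linearity of expectation so that the reader sees why no correction for non-largest public clusters is needed.
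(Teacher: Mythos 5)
Your proof is correct and follows essentially the same route as the paper's: write $n^*$ as a sum of per-node indicators, apply linearity of expectation, and use that each node is independently public with probability $1-p$ under Assumption~\ref{assumption:2}. Your explicit remark that the hypothesis is exactly what makes $V^*$ coincide with the full set of public nodes (so no correction for isolated public clusters is needed) is the same point the paper makes when it asserts $\text{Pr}[v_i \in V^*] = 1-p$ under that condition.
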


\begin{proof}
We define a random variable $X_{\text{size}}(i) = 1_{{\{v_i \in V^*\}}}$ for each node $v_i \in V$.
Then, it holds that $n^* = \sum_{v_i \in V} X_{\text{size}}(i)$. 
The expected value of $n^*$ with respect to $\mathcal{L}_{\text{pri}}$ under the given condition is given by
\begin{align*}
  \mathbb{E}_{\text{pri}}[n^*] &= \sum_{v_i \in V} \mathbb{E}_{\text{pri}}[X_{\text{size}} (i)] = \sum_{v_i \in V} \text{Pr}[v_i \in V^*] = (1-p)n.
\end{align*}
The first equation holds based on the linearity of expectation.
The second equation holds based on the law of total expectation.
\end{proof}

Lemma \ref{lemma:5} implies that the expected value of the existing estimator has the bias $1-p$.

\subsubsection{Proposed estimator} \label{section:5.2.2}
We modify the weight for each pair of sampled nodes $(v_{x_k}, v_{x_l})$ such that $(k, l) \in I$ to reduce the bias of the expected value induced by private nodes. 
Specifically, we define the average of the modified weights $\Psi'_{\text{size}}$ and the proposed estimator $\hat{n}'$ as follows:
\begin{align*}
\Psi'_{\text{size}} = \frac{1}{|I|}\sum_{(k,l) \in I} \frac{d_{x_k}}{d_{x_l}^*},\ \ \ 
\hat{n}' \triangleq \frac{\Psi'_{\text{size}}}{\Phi_{\text{size}}}.
\end{align*}

The following lemma holds in regard to the expected value of the proposed estimator. 
\begin{lemma} \label{lemma:6}
The estimator $\hat{n}'$ asymptotically converges to 
\begin{align*}
\tilde{n} = n^* \frac{\sum_{v_i \in V^*} d_i^* d_i}{\sum_{v_i \in V^*} (d_i^*)^2}
\end{align*}
after many steps of our random walk.
\end{lemma}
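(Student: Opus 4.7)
The plan is to mirror the proof of Lemma \ref{lemma:4} step by step, since $\hat{n}'$ differs from $\hat{n}$ only in the numerator weight, where $d_{x_k}^*$ is replaced by $d_{x_k}$. First I would recall from Lemma \ref{lemma:4}'s proof that $\Phi_{\text{size}}$ asymptotically converges to $\mathbb{E}_{\bm{\pi}}[\phi_{k,l}] = \sum_{v_i \in V^*}(d_i^*/D^*)^2$ after many steps of the random walk, using that for $(k,l) \in I$ the samples $v_{x_k}$ and $v_{x_l}$ are approximately independent draws from the stationary distribution $\bm{\pi}$ given in Lemma \ref{lemma:1}.

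Next I would compute the expected value of $\Psi'_{\text{size}}$ under the same near-independence. By the linearity of expectation, $\mathbb{E}_{\bm{\pi}}[\Psi'_{\text{size}}] = \mathbb{E}_{\bm{\pi}}[d_{x_k}/d_{x_l}^*]$, and summing over the product stationary distribution,
\begin{align*}
\mathbb{E}_{\bm{\pi}}\left[\frac{d_{x_k}}{d_{x_l}^*}\right]
= \sum_{v_i \in V^*}\sum_{v_j \in V^*} \frac{d_i}{d_j^*} \cdot \frac{d_i^*}{D^*} \cdot \frac{d_j^*}{D^*}
= \frac{n^*}{(D^*)^2}\sum_{v_i \in V^*} d_i^* d_i,
\end{align*}
where the cancellation $d_j^*/d_j^* = 1$ collapses the inner sum to $n^* = |V^*|$.

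Finally I would form the ratio. Since $\Phi_{\text{size}}$ and $\Psi'_{\text{size}}$ each converge to their expected values with respect to $\bm{\pi}$ after many steps, and $\mathbb{E}_{\bm{\pi}}[\Phi_{\text{size}}] > 0$, the continuous mapping theorem (or simply the algebra of limits) gives
\begin{align*}
\hat{n}' = \frac{\Psi'_{\text{size}}}{\Phi_{\text{size}}} \longrightarrow \frac{n^*\sum_{v_i \in V^*} d_i^* d_i / (D^*)^2}{\sum_{v_i \in V^*} (d_i^*)^2 / (D^*)^2} = n^* \frac{\sum_{v_i \in V^*} d_i^* d_i}{\sum_{v_i \in V^*} (d_i^*)^2} = \tilde{n}.
\end{align*}

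The only nontrivial point is the same one finessed in Lemma \ref{lemma:4}: the ordered pairs $(v_{x_k}, v_{x_l})$ with $|k-l| \geq m$ are not literally independent, so strictly speaking one uses the fact that under ergodicity the lag-$m$ joint distribution is close to the product of marginals for $m$ on the order of the mixing time, and averages over $|I|$ such pairs. I would handle this exactly as the earlier proof does — invoking the informal ``sampled independently of each other from the stationary distribution'' reading of $I$ — rather than re-deriving the mixing bound, since no new phenomenon is introduced by swapping $d_{x_k}^*$ for $d_{x_k}$ in the numerator.
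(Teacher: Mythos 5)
Your proposal is correct and follows essentially the same route as the paper: compute $\mathbb{E}_{\bm{\pi}}[\Psi'_{\text{size}}] = n^*\sum_{v_i \in V^*} d_i^* d_i/(D^*)^2$ by the same product-of-marginals calculation used for Lemma \ref{lemma:4}, then take the ratio with $\mathbb{E}_{\bm{\pi}}[\Phi_{\text{size}}]$. Your closing remark about the approximate independence of far-apart pairs is the same point the paper finesses with the word ``intuitively,'' so nothing is missing relative to the paper's own level of rigor.
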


\begin{proof}
As with the proof of the Lemma \ref{lemma:4}, we have
\begin{align*}
\mathbb{E}_{\bm{\pi}}\left[\Psi'_{\text{size}}\right] 
= \mathbb{E}_{\bm{\pi}}\left[\frac{d_{x_k}}{d_{x_l}^*}\right] 
= n^* \sum_{v_i \in V^*} {\frac{d_i^* d_i}{(D^*)^2}}.
\end{align*}
Quantities $\Phi_{\text{size}}$ and $\Psi'_{\text{size}}$ intuitively converge to the respective expected values after many steps of our random walk.
Therefore, $\hat{n}'$ asymptotically converges to $\tilde{n}$.
\end{proof}

When the original network, $G$, involves no private nodes, the following proposition regarding each estimator and each expected value holds.

\begin{proposition} \label{proposition:1}
When the original network, $G$, involves no private nodes, two estimators, $\hat{n}$ and $\hat{n}'$, are equal, and two expected values, $n^*$ and $\tilde{n}$, are equal to the true quantity, $n$.
\end{proposition}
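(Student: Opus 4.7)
The plan is to observe that the hypothesis "no private nodes" collapses all the starred quantities to their unstarred counterparts, after which the three claims follow by direct substitution into the definitions given earlier in Sections 3.1 and 5.2.

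First I would note that under Assumption \ref{assumption:2} (or equivalently $p=0$), we have $V^* = V$, so in particular $n^* = |V^*| = |V| = n$, which establishes the second claim. Moreover, for every node $v_i \in V$ every neighbor in $\Gamma(i)$ is a public node, so $d_i^* = d_i$. Hence for every sampled node $v_{x_k}$ we get $d_{x_k}^* = d_{x_k}$.

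Next I would plug these equalities into the weight definitions: since $d_{x_k}/d_{x_l}^* = d_{x_k}^*/d_{x_l}^*$ for every pair $(k,l) \in I$, we obtain $\Psi'_{\text{size}} = \Psi_{\text{size}}$, and dividing by the common quantity $\Phi_{\text{size}}$ yields $\hat{n}' = \hat{n}$. This proves the first claim. For the third claim, I would substitute $d_i^* = d_i$ into the formula of Lemma \ref{lemma:6} to get
\begin{align*}
\tilde{n} = n^* \frac{\sum_{v_i \in V^*} d_i^* d_i}{\sum_{v_i \in V^*} (d_i^*)^2} = n \cdot \frac{\sum_{v_i \in V} d_i^2}{\sum_{v_i \in V} d_i^2} = n,
\end{align*}
which also equals $n^*$ by the second claim.

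There is no real obstacle here; the proposition is essentially a consistency check that our modified framework specializes correctly to the private-node-free setting studied in the conventional framework of \cite{gjoka2010, gjoka2011, katzir2011, katzir2013}. The only thing to be careful about is to verify all three equalities from a single observation ($d_i^* = d_i$ for all $i$) rather than arguing each one separately.
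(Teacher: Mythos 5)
Your proof is correct and follows essentially the same route as the paper: note that the absence of private nodes gives $V^* = V$ and $d_i^* = d_i$ for all $i$, so the two estimators coincide by their definitions and $n^* = \tilde{n} = n$ follows from Lemmas \ref{lemma:4} and \ref{lemma:6}. Your version just spells out the substitutions slightly more explicitly than the paper does.
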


\begin{proof}
When $G$ involves no private nodes, it holds that $V^* = V$ and $d_i^* = d_i$ for each node $v_i \in V^*$.
Thus, it holds that $\hat{n} = \hat{n}'$ because of the definitions of the estimators. 
It also holds that $n^* = n$ and $\tilde{n} = n$ because of Lemmas \ref{lemma:4} and \ref{lemma:6}.
\end{proof}

We show that the expected value $\tilde{n}$ of the proposed estimator reduces the bias induced by private nodes compared with the existing estimator.
First, we derive the expected value with respect to the set $\mathcal{L}_{\text{pri}}$ of the public degree of a public node:

\begin{lemma}\label{pubd_exp}
For any public node $v_i \in V^*$, we have
\begin{align*}
  \mathbb{E}_{\text{pri}}[d_i^*] &= (1 - p)d_i, \\
  \mathbb{E}_{\text{pri}}\left[(d_i^*)^2 \right] &= (1 - p)d_i[(1-p)d_i+p].
\end{align*}
\end{lemma}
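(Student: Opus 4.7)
My plan is to exploit the fact that, under the simplifying condition that all public nodes belong to the largest public cluster, $v_i \in V^*$ is equivalent to $v_i$ being public, and $d_i^*$ then equals the number of public neighbors of $v_i$. So I would first write
\begin{equation*}
  d_i^* = \sum_{v_j \in \Gamma(i)} 1_{\{v_j \text{ is public}\}}.
\end{equation*}
By Assumption \ref{assumption:2}, the indicators $Y_j \triangleq 1_{\{v_j \text{ is public}\}}$ for $v_j \in \Gamma(i)$ are mutually independent Bernoulli random variables with $\text{Pr}[Y_j = 1] = 1 - p$, and they are also independent of the event $\{v_i \text{ is public}\}$ since the privacy labels of distinct nodes are independent. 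Hence, conditionally on $v_i \in V^*$, the random variable $d_i^*$ is still a sum of $d_i$ independent $\text{Bernoulli}(1-p)$ variables, i.e., $d_i^* \sim \text{Binomial}(d_i, 1-p)$.

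From this, the first moment is immediate by linearity of expectation: $\mathbb{E}_{\text{pri}}[d_i^*] = \sum_{v_j \in \Gamma(i)} (1-p) = (1-p) d_i$, giving the first claim. For the second moment I would use the variance-plus-mean-squared identity: since $d_i^*$ is Binomial, $\text{Var}(d_i^*) = d_i (1-p) p$, so
\begin{equation*}
  \mathbb{E}_{\text{pri}}\left[(d_i^*)^2\right]
  = \text{Var}(d_i^*) + \bigl(\mathbb{E}_{\text{pri}}[d_i^*]\bigr)^2
  = d_i (1-p) p + (1-p)^2 d_i^2
  = (1-p) d_i \bigl[(1-p) d_i + p\bigr],
\end{equation*}
which is the second claim. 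Alternatively, one could expand $(d_i^*)^2 = \sum_j Y_j^2 + \sum_{j \ne k} Y_j Y_k$, use $Y_j^2 = Y_j$ to get $d_i (1-p)$ from the diagonal, and use pairwise independence to get $d_i(d_i-1)(1-p)^2$ from the off-diagonal, which rearranges to the same expression.

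There is no real obstacle here: the only conceptual point worth stating explicitly is the independence between $v_i$'s own privacy label and the labels of its neighbors, which is what allows the conditioning on $v_i \in V^*$ to be inert under the assumption that all public nodes lie in the largest public cluster. Once that is observed, both moments reduce to standard binomial moment computations.
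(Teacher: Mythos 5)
Your proof is correct and follows essentially the same route as the paper's: both identify $d_i^*$ as a $\text{Binomial}(d_i, 1-p)$ random variable under Assumption \ref{assumption:2} and read off the two moments from standard binomial formulas. You merely make explicit two points the paper leaves implicit---that conditioning on $v_i$ being public is inert because the neighbors' labels are independent of $v_i$'s own label, and that the second moment follows from the variance-plus-mean-squared identity---which is a welcome but not substantively different elaboration.
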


\begin{proof}
The public degree $d_i^*$ follows the binomial distribution with parameters of the degree $d_i$ and $1-p$ regarding the set $\mathcal{L}_{\text{pri}}$ because each neighbor of node $v_i$ independently becomes public with the probability $1-p$ under Assumption \ref{assumption:2}.
\end{proof}

Then, we approximate the expected value of $\tilde{n}$ with respect to $\mathcal{L}_{\text{pri}}$ as a product of each expected value with respect to $\mathcal{L}_{\text{pri}}$ of each quantity in the denominator and numerator of $\tilde{n}$.
\begin{theorem}\label{theorem:3}
If all the public nodes belong to the largest public cluster, we have
\begin{align*}
\mathbb{E}_{\text{pri}}[\tilde{n}] 
\approx \frac{\mathbb{E}_{\text{pri}}[n^* ] \mathbb{E}_{\text{pri}}[\sum_{v_i \in V^*} d_i^* d_i]}{\mathbb{E}_{\text{pri}}[\sum_{v_i \in V^*} (d_i^*)^2]} 
= \alpha_p n,
\end{align*}
where 
\begin{align}
\alpha_p = \frac{(1-p) \sum_{v_i \in V} {(d_i)^2}}{\sum_{v_i \in V} d_i [(1-p)d_i + p]}. \label{eq:3}
\end{align}
\end{theorem}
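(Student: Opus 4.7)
The first ``$\approx$'' in the claim is simply the declared approximation of the expectation of a ratio by the ratio of expectations, so the substantive content lies in verifying the subsequent equality with $\alpha_p n$. My plan is to compute each of the three expectations on the right-hand side separately, using Lemma \ref{lemma:5} and Lemma \ref{pubd_exp} together with the independence structure provided by Assumption \ref{assumption:2}.

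For each of the two sums, the key step is to rewrite $\sum_{v_i \in V^*} f(i) = \sum_{v_i \in V} 1_{\{v_i \in V^*\}} f(i)$ and apply linearity of expectation. Under the hypothesis that every public node lies in the largest public cluster, the event $\{v_i \in V^*\}$ coincides with $\{l_{\text{pri}}(i) = \text{public}\}$, which depends only on $v_i$'s own privacy label. By contrast, $d_i^*$ is determined entirely by the privacy labels of the neighbors of $v_i$. Since Assumption \ref{assumption:2} makes privacy labels mutually independent across all nodes, $1_{\{v_i \in V^*\}}$ is independent of both $d_i^*$ and $(d_i^*)^2$, so the joint expectation factorizes. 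Combining $\text{Pr}[v_i \in V^*] = 1 - p$ with Lemma \ref{pubd_exp} then yields
\begin{align*}
\mathbb{E}_{\text{pri}}\!\left[\sum_{v_i \in V^*} d_i^* d_i\right] &= (1-p)^2 \sum_{v_i \in V} d_i^2, \\
\mathbb{E}_{\text{pri}}\!\left[\sum_{v_i \in V^*} (d_i^*)^2\right] &= (1-p)^2 \sum_{v_i \in V} d_i \bigl[(1-p) d_i + p\bigr].
\end{align*}

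Substituting these two expressions together with $\mathbb{E}_{\text{pri}}[n^*] = (1-p) n$ from Lemma \ref{lemma:5} into the stated approximation, the factor $(1-p)^2$ cancels between the numerator and denominator sums, and the remaining $(1-p)$ from $\mathbb{E}_{\text{pri}}[n^*]$ combines with the ratio of sums to reproduce exactly the definition of $\alpha_p n$ in equation~(\ref{eq:3}).

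The main obstacle, beyond accepting the $\mathbb{E}[A/B] \approx \mathbb{E}[A]/\mathbb{E}[B]$ heuristic itself, is the factorization argument. It succeeds only because the stated hypothesis allows us to identify $\{v_i \in V^*\}$ with a single-node event on $v_i$'s privacy label alone. Without the hypothesis, membership in the largest public cluster would depend on the privacy labels of an entire surrounding subgraph through connectedness, and $1_{\{v_i \in V^*\}}$ would cease to be independent of $d_i^*$; the clean closed form for $\alpha_p$ would then break down. This is precisely why the hypothesis is imposed.
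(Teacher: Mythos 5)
Your proposal is correct and follows essentially the same route as the paper's proof: decompose each sum over $V^*$ as a sum over $V$ weighted by $1_{\{v_i \in V^*\}}$, factor the expectation into $\text{Pr}[v_i \in V^*]$ times $\mathbb{E}_{\text{pri}}[d_i^* d_i]$ (resp.\ $\mathbb{E}_{\text{pri}}[(d_i^*)^2]$) via Lemma \ref{pubd_exp}, and combine with Lemma \ref{lemma:5}. Your explicit justification of the factorization --- that under the hypothesis the event $\{v_i \in V^*\}$ depends only on $v_i$'s own label while $d_i^*$ depends only on the neighbors' labels, so the two are independent under Assumption \ref{assumption:2} --- is a point the paper leaves implicit, but the argument is otherwise identical.
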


\begin{proof}
We define a random variables $X_{\text{size}} (i) = d_i^* d_i 1_{{\{v_i \in V^*\}}}$ and $Y_{\text{size}} (i) = (d_i^*)^2 1_{{\{v_i \in V^*\}}}$ for each node $v_i \in V$. 
Let $X_{\text{size}} = \sum_{v_i \in V^*} d_i^* d_i$ and $Y_{\text{size}} = \sum_{v_i \in V^*} (d_i^*)^2$. 
It holds that $X_{\text{size}} = \sum_{v_i \in V} X_{\text{size}} (i)$ and $Y_{\text{size}} = \sum_{v_i \in V} Y_{\text{size}} (i)$. 
We obtain the expected value of $X_{\text{size}}$ with respect to $\mathcal{L}_{\text{pri}}$:
\begin{align*}
  \mathbb{E}_{\text{pri}}[X_{\text{size}}] 
  = \sum_{v_i \in V}\text{Pr}[v_i \in V^*]\mathbb{E}_{\text{pri}}[d_i^* d_i] 
  = (1-p)^2 \sum_{v_i \in V}(d_i)^2.
\end{align*}
The second equation holds because of Lemma \ref{pubd_exp}. 
We note that the degree $d_i$ is constant with respect to $\mathcal{L}_{\text{pri}}$. 
Similarly, the expected value of $Y_{\text{size}}$ with respect to $\mathcal{L}_{\text{pri}}$ is obtained as follows:
\begin{align*}
  \mathbb{E}_{\text{pri}} [Y_{\text{size}}] 
  = \sum_{v_i \in V}\text{Pr}[v_i \in V^*]\mathbb{E}_{\text{pri}}[(d_i^*)^2]
  = (1-p)^2 \sum_{v_i \in V} d_i [(1-p)d_i + p].
\end{align*}
Theorem \ref{theorem:3} holds because of the above equations and Lemma \ref{lemma:5}.
\end{proof}

We empirically find that the coefficient $\alpha_p$ is almost equal to 1 for various values of $p$ in different social networks (see Section \ref{section:6.1.2} for details).
This is because the sum of squares of degrees $\sum_{v_i \in V} {(d_i)^2}$ is considerably larger than the sum of degrees $\sum_{v_i \in V} d_i$ in large-scale networks with heavy-tailed degree distributions \cite{newman_networks}.
Therefore, Theorem \ref{theorem:3} indicates that the expected value of the proposed estimator has approximately no bias with respect to $\mathcal{L}_{\text{pri}}$ if all the public nodes belong to the largest public cluster.

In practice, it rarely holds true that all public nodes belong to the largest public cluster of a large-scale social network. 
Therefore, the expected values of estimators have biases induced by public nodes that do not belong to $C^*$.
Nevertheless, we numerically find that the proposed estimators have smaller biases induced by private nodes than the existing estimators in empirical social networks (see Section \ref{section:6.2.1} for details).

\subsection{Average Degree}
\subsubsection{Existing estimator}
An existing estimator of the average degree \cite{gjoka2010, gjoka2011, dasgupta2014}, denoted by $\hat{d}_{\text{avg}}$, is defined as
\begin{align*}
\Phi_{\text{avg}} = \frac{1}{r} \sum_{k=1}^r {\frac{1}{d_{x_k}^*}}, \ \ \ 
\hat{d}_{\text{avg}} \triangleq \frac{1}{\Phi_{\text{avg}}}.
\end{align*}
We have the following lemma derived from the previous study \cite{dasgupta2014} which assumes that the original network involves no private nodes.

\begin{lemma} \label{lemma:8}
The estimator $\hat{d}_{\text{avg}}$ converges to the average degree of the largest public cluster, i.e., $d_{\text{avg}}^*$, after many steps of our random walk.
\end{lemma}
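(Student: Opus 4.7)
The plan is to invoke the strong law of large numbers for Markov chains (Theorem \ref{SLLN}) together with the stationary distribution identified in Lemma \ref{lemma:1}, and then apply a continuous mapping argument to pass from the convergence of $\Phi_{\text{avg}}$ to that of its reciprocal $\hat{d}_{\text{avg}}$.

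First, I would recall from Lemma \ref{lemma:1} that our random walk, restricted to $C^* = (V^*, E^*)$, is an ergodic Markov chain with unique stationary distribution $\pi_i = d_i^*/D^*$ for $v_i \in V^*$. Since $\Phi_{\text{avg}}$ is a Markov chain average of the function $f(v_i) = 1/d_i^*$, which is well-defined on $V^*$ (each public node visited by the walk has $d_i^* \geq 1$), Theorem \ref{SLLN} guarantees that
\begin{align*}
\Phi_{\text{avg}} = \frac{1}{r} \sum_{k=1}^r \frac{1}{d_{x_k}^*} \ \longrightarrow\ \mathbb{E}_{\bm{\pi}}\!\left[\frac{1}{d_{x_k}^*}\right]
\end{align*}
almost surely as $r \to \infty$.

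Next, I would compute this expectation directly:
\begin{align*}
\mathbb{E}_{\bm{\pi}}\!\left[\frac{1}{d_{x_k}^*}\right] = \sum_{v_i \in V^*} \frac{d_i^*}{D^*} \cdot \frac{1}{d_i^*} = \frac{n^*}{D^*} = \frac{1}{d_{\text{avg}}^*}.
\end{align*}
Finally, since $x \mapsto 1/x$ is continuous at $1/d_{\text{avg}}^* > 0$, the continuous mapping theorem yields $\hat{d}_{\text{avg}} = 1/\Phi_{\text{avg}} \to d_{\text{avg}}^*$.

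There is no real obstacle here; the argument is essentially the same one used implicitly in the proof of Lemma \ref{lemma:4}, just applied to the single-node statistic $1/d_{x_k}^*$ rather than the paired statistic. The only minor subtlety worth a sentence in the proof is to note that the seed lies in $C^*$ by Assumption \ref{assumption:3}, so the walk stays in $V^*$ and $d_{x_k}^* \geq 1$ for every $k$, ensuring the reciprocal is always well-defined.
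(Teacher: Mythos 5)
Your proposal is correct and follows essentially the same route as the paper's proof: apply Theorem \ref{SLLN} to the function $f(v_i)=1/d_i^*$ under the stationary distribution from Lemma \ref{lemma:1}, compute $\mathbb{E}_{\bm{\pi}}[1/d_{x_k}^*]=n^*/D^*=1/d_{\text{avg}}^*$, and take reciprocals. The extra remarks about $d_{x_k}^*\geq 1$ and continuity of $x\mapsto 1/x$ are sensible but not present in (and not needed beyond) the paper's argument.
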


\begin{proof}
We calculate the expected value of $\Phi_{\text{avg}}$ with respect to $\bm{\pi}$ as follows:
\begin{align*}
\mathbb{E}_{\bm{\pi}}\left[\Phi_{\text{avg}}\right] 
= \mathbb{E}_{\bm{\pi}}\left[\frac{1}{d_{x_k}^*}\right] 
= \sum_{v_i \in V^*} \frac{d_i^*}{D^*} \frac{1}{d_i^*} = \frac{1}{d_{\text{avg}}^*}.
\end{align*}
The quantity $\Phi_{\text{avg}}$ converges to the expected value after many steps because of Theorem \ref{SLLN}.
Therefore, $\hat{d}_{\text{avg}} = 1/\Phi_{\text{avg}}$ converges to $d_{\text{avg}}^*$ after many steps of our random walk.
\end{proof}

Then, we quantify the bias of the expected value of the existing estimator induced by private nodes.
We approximate the expected value of $d_{\text{avg}}^*$ with respect to $\mathcal{L}_{\text{pri}}$ as a product of the expected value with respect to $\mathcal{L}_{\text{pri}}$ of each quantity in the denominator and numerator of $d_{\text{avg}}^*$.
\begin{lemma}\label{lemma:9}
If all the public nodes belong to the largest public cluster, we have
\begin{align*}
  \mathbb{E}_{\text{pri}}[d_{\text{avg}}^*] \approx \frac{\mathbb{E}_{\text{pri}}[D^*]}{\mathbb{E}_{\text{pri}}[n^*]} = (1-p)d_{\text{avg}}.
\end{align*}
\end{lemma}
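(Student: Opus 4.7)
The plan is to mirror the structure of the proof of Theorem \ref{theorem:3}, which handled the more intricate estimator for the network size, and specialize it to the simpler ratio $d_{\text{avg}}^* = D^*/n^*$. The approximation in the statement replaces $\mathbb{E}_{\text{pri}}[D^*/n^*]$ with $\mathbb{E}_{\text{pri}}[D^*]/\mathbb{E}_{\text{pri}}[n^*]$, so the real work is just to compute each of these two expectations separately and then take the quotient.

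First I would dispose of the denominator by quoting Lemma \ref{lemma:5}, which gives $\mathbb{E}_{\text{pri}}[n^*] = (1-p)n$ under the standing assumption that every public node lies in the largest public cluster. For the numerator, I would write $D^* = \sum_{v_i \in V} d_i^* \, 1_{\{v_i \in V^*\}}$, define $Z_{\text{avg}}(i) = d_i^*\, 1_{\{v_i \in V^*\}}$, and apply linearity of expectation. Conditioning on whether $v_i$ is public (using Assumption \ref{assumption:2}, so $\Pr[v_i \in V^*] = 1-p$) together with Lemma \ref{pubd_exp}, which gives $\mathbb{E}_{\text{pri}}[d_i^*] = (1-p)d_i$, yields
\begin{align*}
\mathbb{E}_{\text{pri}}[D^*] = \sum_{v_i \in V} \Pr[v_i \in V^*]\, \mathbb{E}_{\text{pri}}[d_i^*] = (1-p)^2 \sum_{v_i \in V} d_i = (1-p)^2 D.
\end{align*}

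Dividing then gives
\begin{align*}
\frac{\mathbb{E}_{\text{pri}}[D^*]}{\mathbb{E}_{\text{pri}}[n^*]} = \frac{(1-p)^2 D}{(1-p) n} = (1-p) \frac{D}{n} = (1-p) d_{\text{avg}},
\end{align*}
which is exactly the claimed formula. The approximation symbol in the statement refers only to the standard ratio-of-expectations heuristic $\mathbb{E}[X/Y] \approx \mathbb{E}[X]/\mathbb{E}[Y]$, which is invoked here in the same way as in Theorem \ref{theorem:3} and needs no further justification beyond citing that precedent.

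I do not anticipate a genuine technical obstacle; the only subtle point is making sure that Lemma \ref{pubd_exp} continues to apply when we have additionally conditioned on $v_i$ being public. This is immediate because the privacy labels of the neighbors of $v_i$ are independent of the privacy label of $v_i$ itself under Assumption \ref{assumption:2}, so the binomial$(d_i, 1-p)$ distribution of $d_i^*$ is unaffected by the conditioning. Once that is noted, the computation is a two-line application of linearity of expectation and the result drops out.
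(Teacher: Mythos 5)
Your proposal is correct and follows essentially the same route as the paper: decompose $D^*$ as $\sum_{v_i \in V} d_i^*\,1_{\{v_i \in V^*\}}$, apply linearity and $\Pr[v_i \in V^*] = 1-p$ together with $\mathbb{E}_{\text{pri}}[d_i^*] = (1-p)d_i$ to get $\mathbb{E}_{\text{pri}}[D^*] = (1-p)^2 D$, then divide by $\mathbb{E}_{\text{pri}}[n^*] = (1-p)n$ from Lemma \ref{lemma:5}. Your extra remark that the independence of privacy labels under Assumption \ref{assumption:2} justifies the factorization after conditioning on $v_i$ being public is a point the paper leaves implicit, but it does not change the argument.
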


\begin{proof}
We define a random variable $X_{\text{avg}} (i) = d_i^* 1_{{\{v_i \in V^*\}}}$ for each node $v_i \in V$. 
It holds that $D^* = \sum_{v_i \in V} X_{\text{avg}}(i)$. 
The expected value with respect to $\mathcal{L}_{\text{pri}}$ of $D^*$ is derived as 
\begin{align*}
  \mathbb{E}_{\text{pri}}[D^*] 
  = \sum_{v_i \in V}\text{Pr}[v_i \in V^*]\mathbb{E}_{\text{pri}}[d_i^*] 
  = (1-p)^2 D.
\end{align*}
Consequently, it follows that Lemma \ref{lemma:9} holds because of the above equation and Lemma \ref{lemma:5}.
\end{proof}

Lemma \ref{lemma:9} implies that the expected value of the existing estimator has the bias $1-p$.

\subsubsection{Proposed estimator}
We modify the weight for each sampled node to reduce the bias of the expected value induced by private nodes.
We define the average of the modified weights $\Psi'_{\text{size}}$ and the proposed estimator $\hat{d}_{\text{avg}}'$ as follows:
\begin{align*}
\Phi_{\text{avg}}' = \frac{1}{r} \sum_{k=1}^r {\frac{1}{d_{x_k}}}, \ \ \ 
\hat{d}_{\text{avg}}' \triangleq \frac{1}{\Phi_{\text{avg}}'}.
\end{align*}

We have the following lemma regarding the proposed estimator.
\begin{lemma}\label{lemma:10}
The estimator $\hat{d}_{\text{avg}}'$ converges to
\begin{align*}
\tilde{d}_{\text{avg}} = \frac{D^*}{\sum_{v_i \in V^*} d_i^*/d_i}
\end{align*}
after many steps of our random walk.
\end{lemma}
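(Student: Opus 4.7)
The plan is to mimic the proof of Lemma \ref{lemma:8} almost verbatim, replacing the weight $1/d_{x_k}^*$ with the modified weight $1/d_{x_k}$. The core idea is that $\{x_k\}$ is an ergodic Markov chain on $V^*$ with stationary distribution $\bm{\pi}$ given by Lemma \ref{lemma:1}, so the sample average $\Phi_{\text{avg}}'$ converges almost surely to $\mathbb{E}_{\bm{\pi}}[1/d_{x_k}]$ by the strong law of large numbers (Theorem \ref{SLLN}).

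First I would compute the expected value of $\Phi_{\text{avg}}'$ under $\bm{\pi}$. By linearity of expectation and the definition of $\bm{\pi}$ from Lemma \ref{lemma:1},
\begin{align*}
\mathbb{E}_{\bm{\pi}}\!\left[\Phi_{\text{avg}}'\right]
= \mathbb{E}_{\bm{\pi}}\!\left[\frac{1}{d_{x_k}}\right]
= \sum_{v_i \in V^*} \frac{d_i^*}{D^*}\cdot\frac{1}{d_i}
= \frac{1}{D^*}\sum_{v_i \in V^*}\frac{d_i^*}{d_i}.
\end{align*}
Note that $d_i \geq d_i^* \geq 1$ for every $v_i \in V^*$, so each summand is well-defined and $1/d_{x_k}$ is a bounded function on the finite state space $V^*$. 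Hence Theorem \ref{SLLN} applies and $\Phi_{\text{avg}}'$ converges almost surely to $\frac{1}{D^*}\sum_{v_i \in V^*} d_i^*/d_i$.

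Finally, I would take reciprocals: since the limit is strictly positive (each $d_i^*/d_i > 0$), the continuous mapping theorem yields
\begin{align*}
\hat{d}_{\text{avg}}' = \frac{1}{\Phi_{\text{avg}}'} \longrightarrow \frac{D^*}{\sum_{v_i \in V^*} d_i^*/d_i} = \tilde{d}_{\text{avg}}
\end{align*}
almost surely as $r \to \infty$. There is essentially no obstacle here beyond what was already handled in Lemma \ref{lemma:8}; the only minor subtlety is the reciprocal step, which is routine because the limiting denominator is bounded away from zero on the finite state space $V^*$.
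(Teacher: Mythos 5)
Your proposal is correct and follows the paper's own argument essentially verbatim: compute $\mathbb{E}_{\bm{\pi}}[1/d_{x_k}] = \sum_{v_i \in V^*} (d_i^*/D^*)(1/d_i)$ via Lemma \ref{lemma:1}, invoke Theorem \ref{SLLN} for the convergence of $\Phi_{\text{avg}}'$, and take reciprocals. Your added remarks about boundedness and the positivity of the limit when passing to $1/\Phi_{\text{avg}}'$ are a minor but harmless elaboration of a step the paper leaves implicit.
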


\begin{proof}
We calculate the expected value of $\Phi_{\text{avg}}'$ with respect to $\bm{\pi}$ as follows:
\begin{align*}
\mathbb{E}_{\bm{\pi}}[\Phi_{\text{avg}}'] 
= \mathbb{E}_{\bm{\pi}}\left[\frac{1}{d_{x_k}}\right] 
= \sum_{v_i \in V^*} \frac{d_i^*}{D^*} \frac{1}{d_i} 
= \frac{1}{\tilde{d}_{\text{avg}}}.
\end{align*}
The quantity $\Phi_{\text{avg}}'$ converges to the expected value because of Theorem \ref{SLLN}, and hence, the estimator $\hat{d}_{\text{avg}}'$ converges to $\tilde{d}_{\text{avg}}$ after many steps of our random walk.
\end{proof}

The following proposition holds as well as Proposition \ref{proposition:1}.
\begin{proposition}\label{proposition:2}
When the original network, $G$, involves no private nodes, two estimators, $\hat{d}_{\text{avg}}$ and $\hat{d}_{\text{avg}}'$, are equal, and two expected values, $d_{\text{avg}}^*$ and $\tilde{d}_{\text{avg}}$, are equal to the true quantity, $d_{\text{avg}}$.
\end{proposition}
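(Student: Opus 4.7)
The plan is to mirror the argument used for Proposition \ref{proposition:1} almost verbatim, since the structural observation is the same: in the absence of private nodes, the public degree collapses onto the degree, and the largest public cluster coincides with the whole graph. Concretely, I would begin by noting that under the hypothesis that $G$ involves no private nodes, Assumption \ref{assumption:2} degenerates ($p=0$) and every node is public, so $V^* = V$, $E^* = E$, and hence $d_i^* = d_i$ for every $v_i \in V$. In particular $n^* = n$ and $D^* = D$.

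Next I would compare the two estimators term by term. The only difference between $\hat{d}_{\text{avg}}$ and $\hat{d}_{\text{avg}}'$ lies in the denominators inside the averages, namely $d_{x_k}^*$ versus $d_{x_k}$. Since $d_{x_k}^* = d_{x_k}$ for every sampled index under the hypothesis, we get $\Phi_{\text{avg}} = \Phi_{\text{avg}}'$ and therefore $\hat{d}_{\text{avg}} = \hat{d}_{\text{avg}}'$ directly from the definitions.

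For the expected quantities, I would invoke Lemmas \ref{lemma:8} and \ref{lemma:10}. By Lemma \ref{lemma:8}, the limit of the existing estimator is $d_{\text{avg}}^* = D^*/n^* = D/n = d_{\text{avg}}$. By Lemma \ref{lemma:10}, the limit of the proposed estimator is
\begin{align*}
\tilde{d}_{\text{avg}} = \frac{D^*}{\sum_{v_i \in V^*} d_i^*/d_i} = \frac{D}{\sum_{v_i \in V} 1} = \frac{D}{n} = d_{\text{avg}},
\end{align*}
which finishes the proposition.

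There is no real obstacle here; the statement is a sanity check that the modified weights do not break agreement with the classical estimator in the privacy-free regime. The only point requiring a bit of care is making sure that $d_i > 0$ for all $v_i \in V$ so that the sum $\sum_{v_i \in V^*} d_i^*/d_i$ is well defined, but this follows from the standing assumption that $G$ is a connected graph, so every node has degree at least one.
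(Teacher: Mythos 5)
Your proposal is correct and follows exactly the argument the paper uses (the paper simply states that Proposition \ref{proposition:2} holds ``as well as Proposition \ref{proposition:1}'', whose proof is the same substitution $V^*=V$, $d_i^*=d_i$ followed by an appeal to the convergence lemmas). Your additional remark that $d_i>0$ by connectedness is a harmless extra check; nothing further is needed.
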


Finally, we have the following theorem. 

\begin{theorem}\label{theorem:4}
If all the public nodes belong to the largest public cluster, we have
\begin{align*}
  \mathbb{E}_{\text{pri}}[\tilde{d}_{\text{avg}}] \approx 
  \frac{\mathbb{E}_{\text{pri}}[D^*]}{\mathbb{E}_{\text{pri}} \left[\sum_{v_i \in V^*} d_i^*/d_i \right]} 
  = d_{\text{avg}}.
\end{align*}
\end{theorem}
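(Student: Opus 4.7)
The plan is to mirror the strategy of Theorem~\ref{theorem:3}: approximate $\mathbb{E}_{\text{pri}}[\tilde{d}_{\text{avg}}]$ by the ratio of the expectations of the numerator and denominator of $\tilde{d}_{\text{avg}} = D^*/\sum_{v_i \in V^*} d_i^*/d_i$, then compute each expectation separately under the assumption that all public nodes belong to the largest public cluster. Justifying the ratio-of-expectations step can be done by the same heuristic already invoked in the proof of Theorem~\ref{theorem:3}; the conclusion is the same approximate equality.

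For the numerator I would reuse Lemma~\ref{lemma:9}, which already establishes $\mathbb{E}_{\text{pri}}[D^*] = (1-p)^2 D$ under the working assumption. So essentially all the new work is concentrated in the denominator.

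For the denominator, the idea is to rewrite the sum over $V^*$ as an indicator-weighted sum over all of $V$. Concretely, set
\begin{align*}
Z_{\text{avg}}(i) = \frac{d_i^*}{d_i} \mathbf{1}_{\{v_i \in V^*\}}
\qquad (v_i \in V),
\end{align*}
so that $\sum_{v_i \in V^*} d_i^*/d_i = \sum_{v_i \in V} Z_{\text{avg}}(i)$. Using the law of total expectation, $\mathbb{E}_{\text{pri}}[Z_{\text{avg}}(i)] = \text{Pr}[v_i \in V^*]\,\mathbb{E}_{\text{pri}}[d_i^*/d_i \mid v_i \in V^*]$. Under Assumption~\ref{assumption:2} and the working assumption, $\text{Pr}[v_i \in V^*] = 1-p$. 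Moreover, conditioned on $v_i$ being public, $d_i^*$ is binomial with parameters $d_i$ and $1-p$ (by Lemma~\ref{pubd_exp}, whose derivation applies verbatim to the conditional distribution here because the privacy labels of $v_i$'s neighbors are independent of $v_i$'s own label), so $\mathbb{E}_{\text{pri}}[d_i^*/d_i \mid v_i \in V^*] = (1-p)$. Combining, $\mathbb{E}_{\text{pri}}[Z_{\text{avg}}(i)] = (1-p)^2$, and summing over $v_i \in V$ gives $\mathbb{E}_{\text{pri}}[\sum_{v_i \in V^*} d_i^*/d_i] = (1-p)^2 n$ by linearity of expectation.

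Taking the ratio yields $(1-p)^2 D/[(1-p)^2 n] = D/n = d_{\text{avg}}$ exactly, so no residual coefficient analogous to $\alpha_p$ in \eqref{eq:3} appears. I do not expect any substantive obstacle: the only delicate point is to make clear that, given $v_i \in V^*$, the public degree $d_i^*$ still has the binomial distribution of Lemma~\ref{pubd_exp} because the privacy label of $v_i$ is independent of the privacy labels of its neighbors under Assumption~\ref{assumption:2}. The ratio-of-expectations approximation inherits the same informal justification used in Theorem~\ref{theorem:3} and Lemma~\ref{lemma:9}.
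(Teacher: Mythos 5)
Your proposal is correct and follows essentially the same route as the paper's proof: the paper likewise writes $\sum_{v_i \in V^*} d_i^*/d_i$ as an indicator-weighted sum over $V$, applies the law of total expectation with $\text{Pr}[v_i \in V^*] = 1-p$ and $\mathbb{E}_{\text{pri}}[d_i^*/d_i] = 1-p$ to get $(1-p)^2 n$, and divides by $\mathbb{E}_{\text{pri}}[D^*] = (1-p)^2 D$ from Lemma~\ref{lemma:9}. Your explicit remark that the binomial distribution of $d_i^*$ is unaffected by conditioning on $v_i$ being public (by independence of the labels) is a useful clarification the paper leaves implicit.
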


\begin{proof}
We define a random variable $\tilde{X}_{\text{avg}} (i) = 1_{{\{v_i \in V^*\}}} d_i^* / d_i$ for each node $v_i \in V$. 
Let $\tilde{X}_{\text{avg}} = \sum_{v_i \in V^*} d_i^*/d_i$. 
It holds that $\tilde{X}_{\text{avg}} = \sum_{v_i \in V} \tilde{X}_{\text{avg}} (i)$. 
We obtain the expected value of $\tilde{X}_{\text{avg}}$ with respect to $\mathcal{L}_{\text{pri}}$:
\begin{align*}
  \mathbb{E}_{\text{pri}}[\tilde{X}_{\text{avg}}] 
  = \sum_{v_i \in V}\text{Pr}[v_i \in V^*]\mathbb{E}_{\text{pri}}\left[\frac{d_i^*}{d_i}\right] 
  = (1-p)^2 n.
\end{align*}
Theorem \ref{theorem:4} holds because of the equation of $\mathbb{E}_{\text{pri}}[D^*]=(1-p)^2 D$ and the above equation.
\end{proof}

\subsection{Node's Label Density} \label{section:5.4}
\subsubsection{Existing estimator}
An existing estimator of the density of node label $l$, denoted by $\hat{\rho}(l)$, is defined as follows \cite{gjoka2010, gjoka2011, ribeiro2010}:
\begin{align*}
\Phi_{\text{label}} = \frac{1}{r} \sum_{k=1}^r \frac{1_{{\{l \in \mathcal{L}(x_k)\}}}}{d_{x_k}^*},\ \ \ 
\hat{\rho}(l) \triangleq \frac{\Phi_{\text{label}}}{\Phi_{\text{avg}}}.
\end{align*}
Note that the label of interest $l$ of the node does not include the privacy label of the node because the sample sequence contains only public nodes.

We have the following lemma derived from the previous study \cite{ribeiro2010} which assumes that the original network involves no private nodes.

\begin{lemma}\label{lemma:11}
The estimator $\hat{\rho}(l)$ converges to the density of node label $l$ of the largest public cluster, i.e., $\rho^*(l)$, after many steps of our random walk.
\end{lemma}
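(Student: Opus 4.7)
The plan is to mimic the proof of Lemma \ref{lemma:8}: compute the expected value of $\Phi_{\text{label}}$ with respect to the stationary distribution $\bm{\pi}$ from Lemma \ref{lemma:1}, invoke the strong law of large numbers (Theorem \ref{SLLN}) to show that $\Phi_{\text{label}}$ converges almost surely to this expectation, and then combine with the already-established convergence of $\Phi_{\text{avg}}$ to $1/d_{\text{avg}}^*$ via the continuous mapping theorem applied to the ratio.

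Concretely, I would first apply Theorem \ref{SLLN} with the function $f(i) = 1_{\{l \in \mathcal{L}(i)\}}/d_i^*$ on $V^*$, noting that $f$ is well-defined since $d_i^* > 0$ for every $v_i \in V^*$ (the walker stays in a connected public cluster with at least two public nodes, so every state has at least one public neighbor). Using $\pi_i = d_i^*/D^*$ for $v_i \in V^*$, the expectation collapses to
\begin{align*}
\mathbb{E}_{\bm{\pi}}[\Phi_{\text{label}}] = \sum_{v_i \in V^*} \frac{d_i^*}{D^*} \cdot \frac{1_{\{l \in \mathcal{L}(i)\}}}{d_i^*} = \frac{1}{D^*} \sum_{v_i \in V^*} 1_{\{l \in \mathcal{L}(i)\}} = \frac{n^* \rho^*(l)}{D^*},
\end{align*}
where the last equality uses the definition $\rho^*(l) = \sum_{v_i \in V^*} 1_{\{l \in \mathcal{L}(i)\}}/n^*$. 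Here the key cancellation of $d_i^*$ in the summand with $d_i^*$ in the stationary weight is what makes the sampling bias vanish, as in the standard re-weighted random walk argument.

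Next, from Lemma \ref{lemma:8} we already know that $\Phi_{\text{avg}}$ converges almost surely to $1/d_{\text{avg}}^* = n^*/D^*$. Taking the ratio and applying the continuous mapping theorem (valid because $1/d_{\text{avg}}^* > 0$ so we can divide on the limit event of probability one), we obtain
\begin{align*}
\hat{\rho}(l) = \frac{\Phi_{\text{label}}}{\Phi_{\text{avg}}} \longrightarrow \frac{n^* \rho^*(l)/D^*}{n^*/D^*} = \rho^*(l),
\end{align*}
which is the desired claim.

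I do not expect any genuine obstacle. The only subtlety worth a sentence is the domain of $f$: since $\hat{\rho}(l)$ is only ever evaluated on the sampled indices and every sampled node lies in $V^*$ with $d_{x_k}^* \geq 1$, the summands of both $\Phi_{\text{label}}$ and $\Phi_{\text{avg}}$ are always finite, so Theorem \ref{SLLN} applies on the finite state space $V^*$ without further justification. The rest is bookkeeping identical to the existing proofs of Lemmas \ref{lemma:4} and \ref{lemma:8}.
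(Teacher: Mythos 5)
Your proof is correct and follows essentially the same route as the paper's: compute $\mathbb{E}_{\bm{\pi}}[\Phi_{\text{label}}] = \frac{1}{D^*}\sum_{v_i \in V^*} 1_{\{l \in \mathcal{L}(i)\}}$ via the cancellation of $d_i^*$, invoke Theorem \ref{SLLN} for both $\Phi_{\text{label}}$ and $\Phi_{\text{avg}}$, and take the ratio. Your added remarks on the well-definedness of $f$ and the continuous mapping theorem are fine but not needed beyond what the paper already does.
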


\begin{proof}
We calculate the expected value of $\Phi_{\text{label}}$ with respect to $\bm{\pi}$ as follows:
\begin{align*}
\mathbb{E}_{\bm{\pi}}\left[\Phi_{\text{label}}\right] 
= \mathbb{E}_{\bm{\pi}}\left[\frac{1_{{\{l \in \mathcal{L}(x_k)\}}}}{d_{x_k}^*}\right] 
= \sum_{v_i \in V^*} \frac{d_i^*}{D^*} \frac{1_{{\{l \in \mathcal{L}(i)\}}}}{d_{i}^*}
= \frac{1}{D^*} \sum_{v_i \in V^*} 1_{{\{l \in \mathcal{L}(i)\}}}.
\end{align*}
The quantity $\Phi_{\text{label}}$ converges to the expected value $\mathbb{E}_{\bm{\pi}}\left[\Phi_{\text{label}}\right] = \sum_{v_i \in V^*} 1_{{\{l \in \mathcal{L}(i)\}}}/D^*$ after many steps of our random walk because of Theorem \ref{SLLN}.
The quantity $\Phi_{\text{avg}}$ also converges to the expected value $\mathbb{E}_{\bm{\pi}}\left[\Phi_{\text{avg}}\right] = n^*/D^*$ after many steps of our random walk (see the proof of Lemma \ref{lemma:8}).
Therefore, the estimator $\hat{\rho}(l)$ converges to $\rho^*(l) = \sum_{v_i \in V^*} 1_{{\{l \in \mathcal{L}(i)\}}}/n^*$ after many steps of our random walk.
\end{proof}

Then, we quantify the bias of the expected value $\rho^*(l)$ of the existing estimator.

\begin{lemma}\label{lemma:12}
If all the public nodes belong to the largest public cluster, we have
\begin{align*}
  \mathbb{E}_{\text{pri}}[\rho^*(l)] \approx \frac{\mathbb{E}_{\text{pri}}[\sum_{v_i \in V^*} 1_{{\{l \in \mathcal{L}(i)\}}}]}{\mathbb{E}_{\text{pri}}[n^*]} = \rho(l).
\end{align*}
\end{lemma}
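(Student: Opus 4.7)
The plan is to mimic the proof pattern used in Lemma \ref{lemma:9} and Theorem \ref{theorem:4}: first justify the approximation of an expectation of a ratio by the ratio of expectations, then compute the two expectations with respect to $\mathcal{L}_{\text{pri}}$ separately using linearity of expectation and the independent Bernoulli structure of Assumption \ref{assumption:2}. The denominator $\mathbb{E}_{\text{pri}}[n^*]$ is already handled by Lemma \ref{lemma:5}, so the real work is just to evaluate the numerator $\mathbb{E}_{\text{pri}}\bigl[\sum_{v_i \in V^*} 1_{\{l \in \mathcal{L}(i)\}}\bigr]$.

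For the numerator, I would introduce the random variable $Z(i) = 1_{\{v_i \in V^*\}} \cdot 1_{\{l \in \mathcal{L}(i)\}}$ for each $v_i \in V$, so that $\sum_{v_i \in V^*} 1_{\{l \in \mathcal{L}(i)\}} = \sum_{v_i \in V} Z(i)$. By linearity of expectation,
\begin{align*}
\mathbb{E}_{\text{pri}}\!\left[\sum_{v_i \in V} Z(i)\right] = \sum_{v_i \in V} \mathbb{E}_{\text{pri}}[Z(i)].
\end{align*}
The key observation is that $l$ is a non-privacy label, so $1_{\{l \in \mathcal{L}(i)\}}$ is deterministic with respect to $\mathcal{L}_{\text{pri}}$ and can be pulled outside the expectation. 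Under the condition that all public nodes belong to the largest public cluster, Assumption \ref{assumption:2} gives $\text{Pr}[v_i \in V^*] = 1-p$, and therefore
\begin{align*}
\mathbb{E}_{\text{pri}}[Z(i)] = (1-p) \cdot 1_{\{l \in \mathcal{L}(i)\}}.
\end{align*}
Summing over $v_i \in V$ yields $(1-p)\sum_{v_i \in V} 1_{\{l \in \mathcal{L}(i)\}} = (1-p)\, n\, \rho(l)$ by the definition of $\rho(l)$.

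Combining this with Lemma \ref{lemma:5}, which gives $\mathbb{E}_{\text{pri}}[n^*] = (1-p) n$, the factor $(1-p)n$ cancels and the ratio reduces cleanly to $\rho(l)$. The main obstacle is not a calculation but a conceptual one: justifying (at least informally, as the earlier theorems do) why $\mathbb{E}_{\text{pri}}[A/B] \approx \mathbb{E}_{\text{pri}}[A]/\mathbb{E}_{\text{pri}}[B]$ is acceptable here. I would appeal to the same reasoning invoked for Theorems \ref{theorem:3} and \ref{theorem:4}, namely that both $\sum_{v_i \in V^*} 1_{\{l \in \mathcal{L}(i)\}}$ and $n^*$ concentrate tightly around their means when $n$ is large, because each is a sum of independent Bernoulli contributions over all $v_i \in V$. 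This concentration makes the ratio-of-expectations approximation accurate in the regime of interest, and the remainder of the argument is then entirely parallel to the proofs already given in the paper.
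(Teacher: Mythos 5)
Your proof is correct and follows essentially the same route as the paper's: you introduce the same per-node random variable (the paper calls it $X_{\text{label}}(i) = 1_{\{v_i \in V^* \land l \in \mathcal{L}(i)\}}$), apply linearity of expectation, use that $1_{\{l \in \mathcal{L}(i)\}}$ is constant with respect to $\mathcal{L}_{\text{pri}}$ to get $(1-p)\sum_{v_i \in V} 1_{\{l \in \mathcal{L}(i)\}}$, and divide by Lemma~\ref{lemma:5}. The paper likewise leaves the ratio-of-expectations approximation informal, so your concentration remark is a (welcome) elaboration rather than a departure.
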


\begin{proof}
We define a random variable $X_{\text{label}}(i) = 1_{{\{v_i \in V^* \land l \in \mathcal{L}(i)\}}}$ for each node $v_i \in V$. 
Let $X_{\text{label}} = \sum_{v_i \in V^*} 1_{{\{l \in \mathcal{L}(i)\}}}$.
It holds that $X_{\text{label}} = \sum_{v_i \in V} X_{\text{label}}(i)$. 
The expected value with respect to $\mathcal{L}_{\text{pri}}$ of $X_{\text{label}}$ is derived as
\begin{align*}
  \mathbb{E}_{\text{pri}}\left[X_{\text{label}}\right] 
  = \sum_{v_i \in V} \text{Pr}[v_i \in V^*] \mathbb{E}_{\text{pri}}[1_{{\{l \in \mathcal{L}(i)\}}}]
  = (1-p) \sum_{v_i \in V} 1_{{\{l \in \mathcal{L}(i)\}}}.
\end{align*}
Note that the indicator function $1_{{\{l \in \mathcal{L}(i)\}}}$ is constant with respect to the set $\mathcal{L}_{\text{pri}}$. 
Consequently, it follows that Lemma \ref{lemma:12} holds because of the above equation and Lemma \ref{lemma:5}.
\end{proof}

In contrast to the cases of the network size and average degree, Lemma \ref{lemma:12} implies that the existing estimator $\hat{\rho}(l)$ has approximately no bias with respect to the set $\mathcal{L}_{\text{pri}}$ if all public nodes belong to the largest public cluster.
However, we empirically find that our estimator presented in the following further reduces the bias induced by private nodes.

\subsubsection{Proposed estimator}
We modify the weight for each sampled node to reduce the bias of the expected value induced by private nodes.
We define the average of the modified weights $\Phi_{\text{label}}'$ and the proposed estimator $\hat{\rho}'(l)$ as follows:
\begin{align*}
\Phi_{\text{label}}' = \frac{1}{r} \sum_{k=1}^r \frac{1_{{\{l \in \mathcal{L}(x_k)\}}}}{d_{x_k}},\ \ \ 
\hat{\rho}'(l) \triangleq \frac{\Phi_{\text{label}}'}{\Phi'_{\text{avg}}}.
\end{align*}

The following lemma holds regarding the expected value of the proposed estimator.
\begin{lemma}\label{lemma:13}
The estimator $\hat{\rho}'(l)$ converges to
\begin{align*}
\tilde{\rho}(l) = \frac{\sum_{v_i \in V^*} 1_{{\{l \in \mathcal{L}(i)\}}} d_i^* / d_i}{\sum_{v_i \in V^*} d_i^*/d_i}
\end{align*}
after many steps of our random walk.
\end{lemma}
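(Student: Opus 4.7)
The plan is to mirror the proofs of Lemmas~\ref{lemma:8} and \ref{lemma:11}: compute the stationary expectations of the numerator and denominator of $\hat{\rho}'(l)$ separately, invoke Theorem~\ref{SLLN} to upgrade these into almost-sure convergence along the walk, and then take the ratio. Lemma~\ref{lemma:1} has already identified the stationary distribution of our random walk as $\bm{\pi}$ with $\pi_i = 1_{\{v_i \in V^*\}} d_i^*/D^*$, so every expectation $\mathbb{E}_{\bm{\pi}}[\,\cdot\,]$ can be written as an explicit sum over $V^*$.

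First I would compute
\begin{align*}
\mathbb{E}_{\bm{\pi}}\!\left[\Phi_{\text{label}}'\right]
= \mathbb{E}_{\bm{\pi}}\!\left[\frac{1_{\{l \in \mathcal{L}(x_k)\}}}{d_{x_k}}\right]
= \sum_{v_i \in V^*} \frac{d_i^*}{D^*}\,\frac{1_{\{l \in \mathcal{L}(i)\}}}{d_i}
= \frac{1}{D^*}\sum_{v_i \in V^*} \frac{1_{\{l \in \mathcal{L}(i)\}}\,d_i^*}{d_i},
\end{align*}
which is finite since $d_i \ge d_i^* \ge 1$ on the state space $V^*$. Applying Theorem~\ref{SLLN} to the bounded function $f(i) = 1_{\{l \in \mathcal{L}(i)\}}/d_i$ on $V^*$ (extended by $0$ elsewhere) yields the almost-sure convergence of $\Phi_{\text{label}}'$ to this expected value after many steps of our walk.

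Next I would recycle the computation from the proof of Lemma~\ref{lemma:10}, which already establishes that $\Phi'_{\text{avg}}$ converges almost surely to $\mathbb{E}_{\bm{\pi}}[\Phi'_{\text{avg}}] = (1/D^*)\sum_{v_i \in V^*} d_i^*/d_i = 1/\tilde{d}_{\text{avg}}$. Taking the ratio of the two limits, the $1/D^*$ factors cancel, and by the continuous mapping theorem (applied to division, which is continuous away from zero and the denominator limit is strictly positive whenever $V^* \neq \emptyset$) we conclude
\begin{align*}
\hat{\rho}'(l) = \frac{\Phi_{\text{label}}'}{\Phi'_{\text{avg}}} \longrightarrow
\frac{\sum_{v_i \in V^*} 1_{\{l \in \mathcal{L}(i)\}} d_i^*/d_i}{\sum_{v_i \in V^*} d_i^*/d_i} = \tilde{\rho}(l).
\end{align*}

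There is no serious obstacle in this proof: the whole argument is a direct specialization of the re-weighted random walk template already used for $\hat{d}_{\text{avg}}'$, with the indicator $1_{\{l \in \mathcal{L}(i)\}}$ inserted into the numerator weight. The only minor subtlety worth flagging is that one should note $d_{x_k} \geq 1$ for every sampled public node (since $v_{x_k} \in V^*$ implies it has at least one public neighbor along the walk, hence $d_{x_k} \ge d_{x_k}^* \ge 1$), so the weights $1/d_{x_k}$ in $\Phi_{\text{label}}'$ are well-defined; and that the denominator limit $1/\tilde{d}_{\text{avg}}$ is strictly positive, which licenses passing the limit through the ratio.
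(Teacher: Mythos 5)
Your proposal is correct and follows essentially the same route as the paper: compute $\mathbb{E}_{\bm{\pi}}[\Phi_{\text{label}}'] = \frac{1}{D^*}\sum_{v_i \in V^*} 1_{\{l \in \mathcal{L}(i)\}} d_i^*/d_i$ under the stationary distribution from Lemma~\ref{lemma:1}, reuse the limit of $\Phi'_{\text{avg}}$ from Lemma~\ref{lemma:10}, invoke Theorem~\ref{SLLN}, and pass to the ratio. Your added remarks on the positivity of the denominator and the well-definedness of the weights $1/d_{x_k}$ are minor refinements the paper leaves implicit.
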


\begin{proof}
We calculate the expected value of $\Phi_{\text{label}}'$ with respect to $\bm{\pi}$ as follows:
\begin{align*}
\mathbb{E}_{\bm{\pi}}[\Phi_{\text{label}}'] 
= \mathbb{E}_{\bm{\pi}}\left[\frac{1_{{\{l \in \mathcal{L}(x_k)\}}}}{d_{x_k}}\right] 
= \frac{1}{D^*} \sum_{v_i \in V^*} \frac{d_i^*}{d_i} 1_{{\{l \in \mathcal{L}(i)\}}}.
\end{align*}
Since $\Phi'_{\text{avg}}$ converges to the expected value $\mathbb{E}_{\bm{\pi}}[\Phi'_{\text{avg}}] = (\sum_{v_i \in V^*} d_i^*/d_i)/D^*$ because of Theorem \ref{SLLN}, $\hat{d}_{\text{avg}}$ converges to $\tilde{d}_{\text{avg}}$ after many steps of our random walk.
\end{proof}

We have the following proposition.
\begin{proposition}\label{proposition:3}
When the original network $G$ involves no private nodes, two estimators $\hat{\rho}(l)$ and $\hat{\rho}'(l)$ are equal, and two expected values, $\rho^*(l)$ and $\tilde{\rho}(l)$, are equal to the true quantity $\rho(l)$.
\end{proposition}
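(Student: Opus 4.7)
The plan is to mirror the argument used for Propositions \ref{proposition:1} and \ref{proposition:2}, exploiting the fact that the absence of private nodes collapses the two weight schemes (using $d_i^*$ versus $d_i$) into the same quantity. First I would observe that if $G$ contains no private nodes, then every node is public and the largest public cluster coincides with the whole graph, so $V^* = V$ and $d_i^* = d_i$ for every $v_i \in V^*$. This is exactly the same opening step used in the proofs of Propositions \ref{proposition:1} and \ref{proposition:2}, and here it does all the heavy lifting.

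Next, I would verify equality of the two estimators. Comparing the definitions, $\Phi_{\text{label}}$ differs from $\Phi_{\text{label}}'$ only in the denominator inside the sum ($d_{x_k}^*$ versus $d_{x_k}$), and $\Phi_{\text{avg}}$ differs from $\Phi_{\text{avg}}'$ in the same way. Since $d_{x_k}^* = d_{x_k}$ for every sampled node, both numerators and denominators coincide termwise, giving $\Phi_{\text{label}} = \Phi_{\text{label}}'$ and $\Phi_{\text{avg}} = \Phi_{\text{avg}}'$, and hence $\hat{\rho}(l) = \hat{\rho}'(l)$.

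Then I would handle the two expected values. Starting from the definition $\rho(l) = \sum_{v_i \in V} 1_{\{l \in \mathcal{L}(i)\}}/n$, the identity $V^* = V$ together with $n^* = n$ gives $\rho^*(l) = \rho(l)$ directly. For $\tilde{\rho}(l)$, I would substitute $d_i^* = d_i$ into the expression from Lemma \ref{lemma:13}: every ratio $d_i^*/d_i$ reduces to $1$, so the numerator becomes $\sum_{v_i \in V} 1_{\{l \in \mathcal{L}(i)\}}$ and the denominator becomes $n$, yielding $\tilde{\rho}(l) = \rho(l)$.

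There is no real obstacle here — the claim is essentially a consistency check that the proposed re-weighting degenerates to the standard re-weighted random walk estimator in the private-node-free regime. The only thing to be a little careful about is bookkeeping: one must confirm that the substitution $d_i^* = d_i$ is legitimate for every sampled index (which follows because the walk only visits public nodes, and here every node is public), and that $\Phi_{\text{avg}}'$ is well defined as a divisor (which holds since $d_{x_k} \geq 1$ for every sampled node). Once these trivial points are noted, the proposition follows by direct substitution, exactly paralleling Propositions \ref{proposition:1} and \ref{proposition:2}.
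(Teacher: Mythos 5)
Your argument is correct and matches the paper's approach: the paper proves only Proposition 1 explicitly (via $V^*=V$, $d_i^*=d_i$, and direct substitution into the estimator definitions and the limiting expressions) and leaves Propositions 2 and 3 as immediate analogues, which is exactly the substitution you carry out for $\hat{\rho}(l)$, $\hat{\rho}'(l)$, $\rho^*(l)$, and $\tilde{\rho}(l)$. No gaps.
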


Finally, we have the following theorem.

\begin{theorem}\label{theorem:5}
If all the public nodes belong to the largest public cluster, we have
\begin{align*}
  \mathbb{E}_{\text{pri}}[\tilde{\rho}(l)] \approx 
  \frac{\mathbb{E}_{\text{pri}}[\sum_{v_i \in V^*} 1_{{\{l \in \mathcal{L}(i)\}}} d_i^* / d_i]}{\mathbb{E}_{\text{pri}}[\sum_{v_i \in V^*} d_i^*/d_i]} 
  = \rho(l).
\end{align*}
\end{theorem}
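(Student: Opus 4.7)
The plan is to mirror the strategy already used in the proofs of Theorems \ref{theorem:3} and \ref{theorem:4}. I would first invoke the heuristic approximation $\mathbb{E}_{\text{pri}}[A/B] \approx \mathbb{E}_{\text{pri}}[A]/\mathbb{E}_{\text{pri}}[B]$ for the ratio defining $\tilde{\rho}(l)$, and then compute the numerator and denominator separately under Assumption \ref{assumption:2} by introducing a per-node random variable, applying linearity of expectation, and using the law of total expectation together with Lemma \ref{pubd_exp}.

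The denominator requires no new work: the quantity $\mathbb{E}_{\text{pri}}[\sum_{v_i \in V^*} d_i^*/d_i]$ already appears in the proof of Theorem \ref{theorem:4}, where it was shown to equal $(1-p)^2 n$ under the hypothesis that all public nodes lie in the largest public cluster. I would simply cite that computation.

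For the numerator, I would define $\tilde{X}_{\text{label}}(i) = 1_{\{v_i \in V^* \land l \in \mathcal{L}(i)\}} d_i^*/d_i$ for each $v_i \in V$, so that $\tilde{X}_{\text{label}} \triangleq \sum_{v_i \in V^*} 1_{\{l \in \mathcal{L}(i)\}} d_i^*/d_i = \sum_{v_i \in V} \tilde{X}_{\text{label}}(i)$. The indicator $1_{\{l \in \mathcal{L}(i)\}}$ is constant with respect to $\mathcal{L}_{\text{pri}}$ (as already noted in the proof of Lemma \ref{lemma:12}), and under Assumption \ref{assumption:2} the event $\{v_i \in V^*\}$, which depends only on $v_i$'s own privacy label, is independent of $d_i^*$, which depends only on the privacy labels of $v_i$'s neighbors. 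Thus by the law of total expectation and Lemma \ref{pubd_exp},
\begin{align*}
\mathbb{E}_{\text{pri}}[\tilde{X}_{\text{label}}]
= \sum_{v_i \in V} 1_{\{l \in \mathcal{L}(i)\}} \,\text{Pr}[v_i \in V^*]\, \mathbb{E}_{\text{pri}}\!\left[\tfrac{d_i^*}{d_i}\right]
= (1-p)^2 \sum_{v_i \in V} 1_{\{l \in \mathcal{L}(i)\}}
= (1-p)^2 n\,\rho(l).
\end{align*}
Dividing numerator by denominator then yields $\rho(l)$, matching the desired right-hand side.

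The only subtle point — which is really the same as in Theorems \ref{theorem:3} and \ref{theorem:4} — is that the first step is the approximation $\mathbb{E}_{\text{pri}}[A/B] \approx \mathbb{E}_{\text{pri}}[A]/\mathbb{E}_{\text{pri}}[B]$ rather than a strict equality; the paper consistently treats this as acceptable (denoted by $\approx$) because the denominator $\sum_{v_i \in V^*} d_i^*/d_i$ concentrates tightly around its expectation in large networks. Everything after that step is an exact computation, so I would expect the writeup to be short and symmetric to the preceding two theorems.
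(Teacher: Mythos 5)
Your proposal is correct and follows essentially the same route as the paper's proof: approximate the expectation of the ratio by the ratio of expectations, reuse the computation $\mathbb{E}_{\text{pri}}[\sum_{v_i \in V^*} d_i^*/d_i] = (1-p)^2 n$ from Theorem \ref{theorem:4}, and evaluate the numerator via linearity, the law of total expectation, and Lemma \ref{pubd_exp} (using that $\{v_i \in V^*\}$ and $d_i^*$ depend on disjoint sets of privacy labels). Your explicit inclusion of the label indicator in the per-node variable $\tilde{X}_{\text{label}}(i)$ is in fact slightly cleaner than the paper's own definition, which omits it there but uses it in the subsequent sum.
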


\begin{proof}
We define a random variable $\tilde{X}_{\text{label}} (i) = 1_{{\{v_i \in V^*\}}} d_i^*/d_i$ for each node $v_i \in V$. 
Let $\tilde{X}_{\text{label}} = \sum_{v_i \in V^*} 1_{{\{l \in \mathcal{L}(i)\}}} d_i^* / d_i$. 
It holds that $\tilde{X}_{\text{label}} = \sum_{v_i \in V} \tilde{X}_{\text{label}} (i)$. 
We obtain the expected value of $\tilde{X}_{\text{label}}$ with respect to $\mathcal{L}_{\text{pri}}$:
\begin{align*}
  \mathbb{E}_{\text{pri}}[\tilde{X}_{\text{label}}] 
  = \sum_{v_i \in V} \text{Pr}[v_i \in V^*]\mathbb{E}_{\text{pri}}\left[\frac{d_i^*}{d_i} 1_{{\{l \in \mathcal{L}(i)\}}}\right]
  = (1-p)^2 \sum_{v_i \in V} 1_{{\{l \in \mathcal{L}(i)\}}}.
\end{align*}
Theorem \ref{theorem:5} holds because of the above equation and equation $\mathbb{E}_{\text{pri}}[\sum_{v_i \in V^*} d_i^*/d_i] = (1-p)^2 n$ (see the proof of Theorem \ref{theorem:4}).
\end{proof}

\subsection{Density of the private label of the node} \label{section:5.5}
The proposed estimator $\hat{\rho}'(l)$ is not applicable to the estimation of the density of the private label of the node (i.e., the fraction of private nodes).
This is because the sample sequence does not contain private nodes.
However, it is possible to intuitively estimate the probability $p$ using the existing and proposed estimators of the network size and average degree each.
Note that the probability $p$ is almost equal to the density of the private label in a large-scale social network under Assumption \ref{assumption:2}.
We denote by $\hat{p}_{\text{size}}$ the estimator of the probability $p$ obtained by the existing and proposed estimators of the network size.
We denote by $\hat{p}_{\text{avg}}$ the estimator of the probability $p$ obtained by the existing and proposed estimators of the average degree.
Based on Lemmas \ref{lemma:5} and \ref{lemma:9} and Theorems \ref{theorem:3} and \ref{theorem:4}, we define estimators $\hat{p}_{\text{size}}$ and $\hat{p}_{\text{avg}}$ as
\begin{align}
\hat{p}_{\text{size}} &\triangleq 1 - \hat{n}/\hat{n}', \label{eq:4} \\
\hat{p}_{\text{avg}} &\triangleq 1 - \hat{d}_{\text{avg}}/{\hat{d}_{\text{avg}}}'. \label{eq:5}
\end{align}

\subsection{Estimation in the Hidden Privacy Model}
In the hidden privacy model, we calculate each estimator using the estimated public degree of each sampled node, $\hat{d}_{x_k}^*$. 
Even in this model, Lemmas \ref{lemma:4}, \ref{lemma:6}, \ref {lemma:8}, \ref {lemma:10}, \ref{lemma:11}, and \ref{lemma:13} hold because of Lemma \ref{lemma:2}.

\section{Experiments} \label{section:6}

We numerically evaluate the proposed estimators using social network datasets. 
We aim to answer the following questions:
\begin{enumerate}
  \item Do the proposed estimators reduce the biases induced by private nodes of the existing estimators for the network size, average degree, and density of the node label? (Section \ref{section:6.2.1})
  \item Do the proposed estimators perform acceptably on social network datasets involving private nodes? (Sections \ref{section:6.2.2} and \ref{section:6.2.3})
  \item How does the proposed method for calculating the public degree of each sampled node affect the estimation accuracy and the number of queries in the hidden privacy model? (Section \ref{section:6.2.4})
  \item Is the number of additional queries generated by private nodes during seed selection small? (Section \ref{section:6.2.5})
\end{enumerate}

\subsection{Datasets}

\subsubsection{Description of the datasets} 
We use five social network datasets, i.e., the YouTube, Pokec, Orkut, Facebook, and LiveJournal datasets. 
For these five datasets, we focus on undirected and connected graphs by the following pre-processing: (1) removing the directions of edges if the original graphs are directed and then (2) deleting the nodes that are not contained in the largest connected component of the original graph.
The pre-processing does not affect any of the following experiments because the above pre-processing is performed before setting privacy labels of the nodes and no processing is added to the graph after setting privacy labels of the nodes.

The YouTube, Orkut, Facebook, and LiveJournal datasets do not contain the privacy label data of each node, respectively.
In contrast, the Pokec dataset contains all the graph data of the network involving private nodes and contains empirical privacy labels of all the nodes \cite{takac2012}.
The Pokec network involves 552,525 private nodes (i.e., approximately 33.8\% of all the nodes). 

We additionally use the dataset of the sample sequence of 1,016,275 public Facebook users obtained by Kurant et al.'s random walk in October 2010 \cite{kurant}.
The random walk is equivalent to our random walk in the ideal model because the Facebook graph as of October 2010 involves a fraction of private nodes and corresponds to the ideal model (see Refs.~\cite{kurant} for details).
This dataset contains the ID, the exact public degree, and the exact degree of each sampled public user, which allows us to compare the existing estimators and the proposed estimators.

\begin{figure*}[t]
	\begin{center}\
		\includegraphics[scale=0.22]{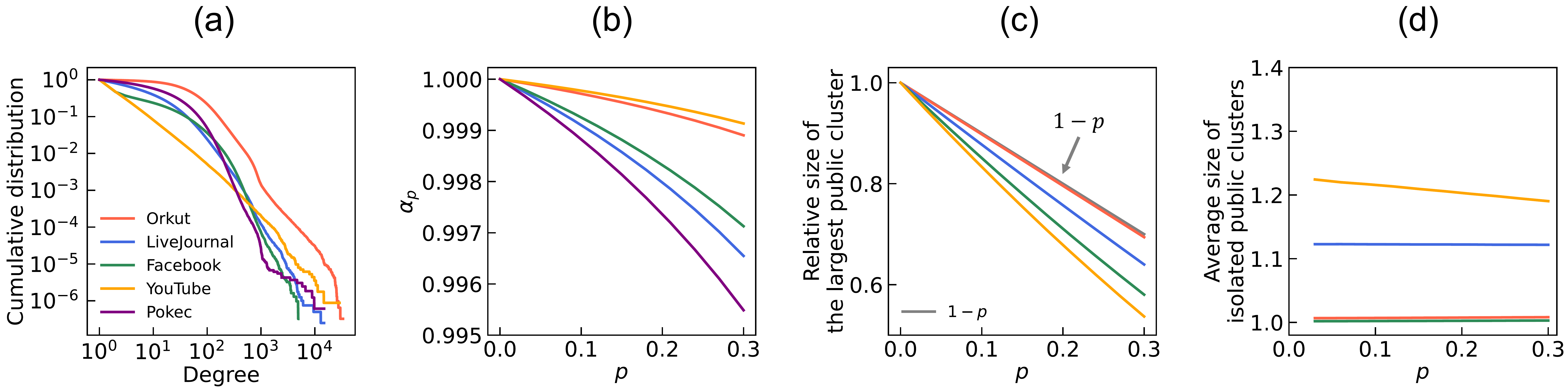}
	\end{center}
	\caption{Four properties of the network datasets. (a) Cumulative degree distribution. (b) Coefficient $\alpha_p$ as a function of $p$. (c) Relative size of the largest public cluster as a function of $p$. (d) Average size of the isolated public clusters as a function of $p$.}
	\label{fig:3}
\end{figure*}

\begin{table}[t]
\caption{Network datasets. $n$: network size, $d_{\text{avg}}$: average degree, $\bar{c}$: clustering coefficient \cite{watts1998}, and $\bar{\ell}$: average shortest-path length between nodes \cite{watts1998}.}
\label{table:1}
\begin{center}
	\begin{tabular}{l | c c c c | c | c}\hline
	Network & $n$ & $d_{\text{avg}}$ & $\bar{c}$ & $\bar{\ell}$ & Privacy-label data & Reference \rule[5pt]{0pt}{5pt} \\ \hline
	YouTube & 1,134,890 & 5.27 & 0.08 & 5.28 & Not contain & \cite{konect} \\
	Pokec & 1,632,803 & 27.32 & 0.11 & 4.68 & Contain & \cite{snap} \\
	Orkut & 3,072,441 & 76.28 & 0.17 & 4.21 & Not contain & \cite{snap} \\
	Facebook & 3,097,165 & 15.28 & 0.10 & 5.16 & Not contain & \cite{nr} \\
	LiveJournal & 3,997,962 & 17.35 & 0.28 & 5.57 & Not contain & \cite{snap} \\ \hline
  	\end{tabular}
\end{center}
\end{table}

\subsubsection{Properties of the network datasets} \label{section:6.1.2}

Table \ref{table:1} shows the network size and average degree for the five network datasets.
Table \ref{table:1} also indicates that each network has a small-world property (i.e., large clustering coefficient and small average shortest-path length between nodes \cite{watts1998}).

Figure \ref{fig:3} shows four additional network properties of each of the network datasets.
Figure  \ref{fig:3}(a) shows the cumulative degree distributions of the five datasets. 
We observe that all the five networks have heavy-tailed degree distributions.
Figure \ref{fig:3}(b) shows the coefficient $\alpha_p$, defined in Eq. \eqref{eq:3}, as a function of $p$ for the five datasets. 
We find that $\alpha_p$ is almost equal to 1.0 for $0.0 \leq p \leq 0.3$.
Figure \ref{fig:3}(c) shows the relative size of the largest public cluster averaged over 1,000 independent and random sets of private nodes as a function of $p$ for the four datasets.
The gray solid line represents the expected upper limit, $1-p$.
We observe that a large fraction of public nodes belong to $C^*$ for $0.0 \leq p \leq 0.3$ for the four datasets, which is qualitatively consistent with the previous study \cite{albert2000}.
Figure \ref{fig:3}(d) shows the average size of the isolated public clusters (i.e., the public clusters other than the largest public cluster) averaged over 1,000 independent and random sets of private nodes as a function of $p$ for the four datasets. 
The average size of the isolated public clusters is considerably small for the four datasets, which is qualitatively consistent with the finding in the previous study \cite{albert2000}.
We also observe that all the public nodes belong to the largest public cluster of the Pokec network (i.e., the network has no isolated public clusters), where the empirical privacy label is given to each node.

\begin{figure*}[]
	\begin{center}
		\includegraphics[scale=0.15]{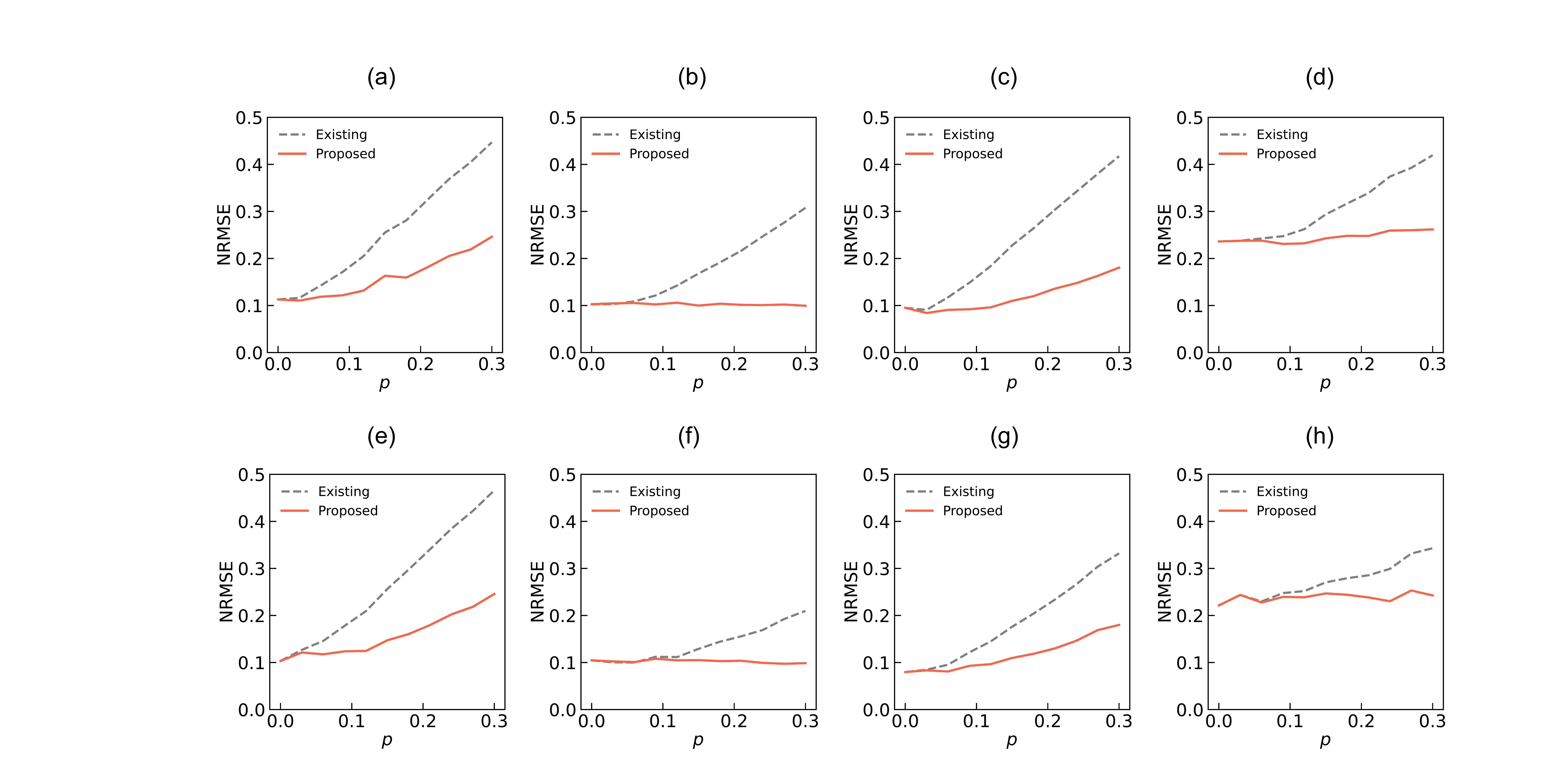}
	\end{center}
	\caption{NRMSEs of the existing and proposed estimators for the network size as a function of $p$. Panels (a) and (e) show the results for the YouTube dataset; panels (b) and (f) show the results for the Orkut dataset; panels (c) and (g) show the results for the Facebook dataset; panels (d) and (h) show the results for the LiveJournal dataset. Panels (a)--(d) show the results in the ideal model, and panels (e)--(h) show the results in the hidden privacy model. We set the sample size, $r$, as $1\%$ of the number of nodes.}
      \label{fig:4}
      \vspace{3mm}
      \begin{center}
		\includegraphics[scale=0.15]{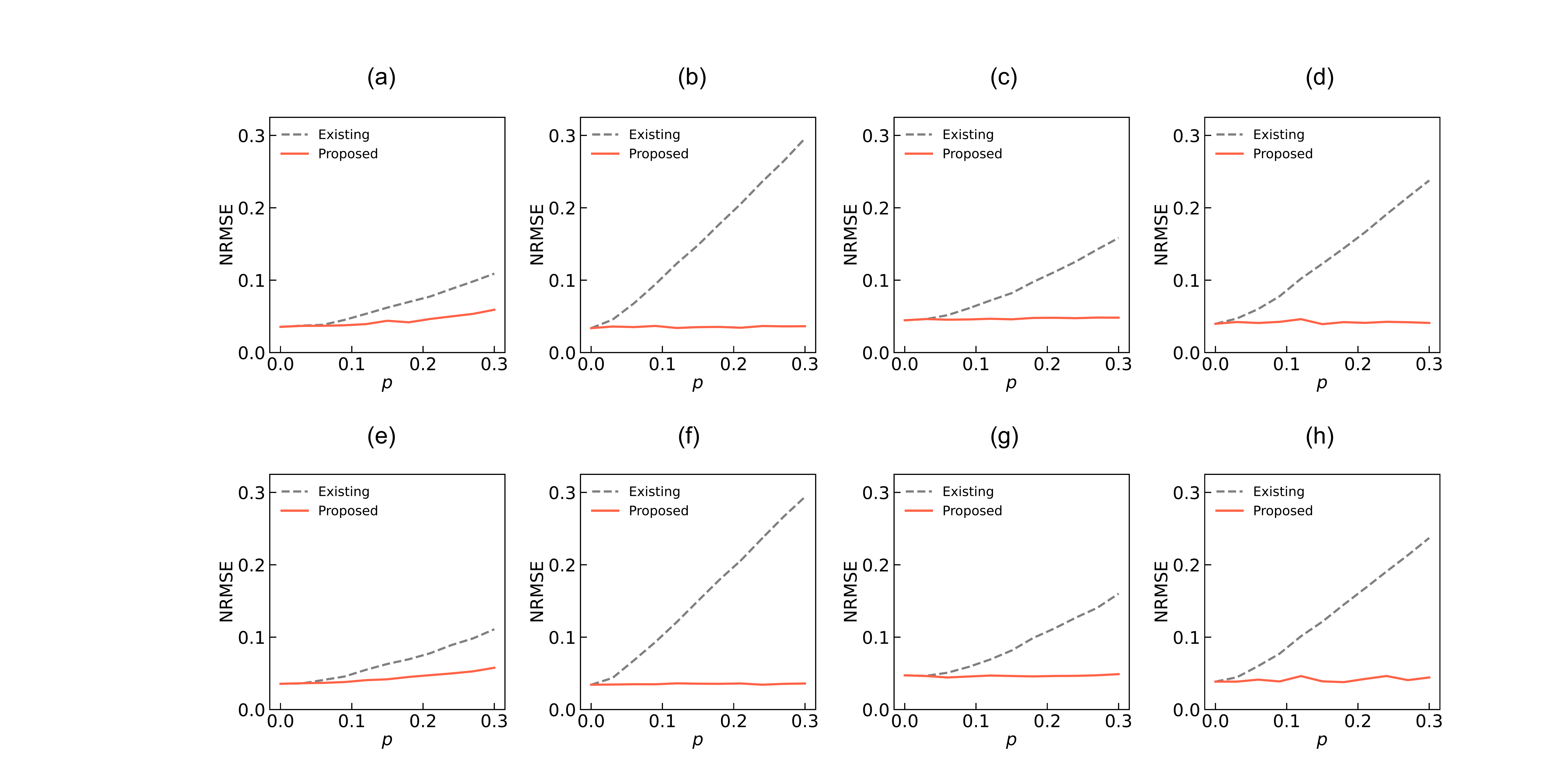}
	\end{center}
      \caption{NRMSEs of the existing and proposed estimators for the average degree as a function of $p$. Panels (a) and (e) show the results for the YouTube dataset; panels (b) and (f) show the results for the Orkut dataset; panels (c) and (g) show the results for the Facebook dataset; panels (d) and (h) show the results for the LiveJournal dataset. Panels (a)--(d) show the results in the ideal model, and panels (e)--(h) show the results in the hidden privacy model. We set the sample size, $r$, as $1\%$ of the number of nodes.}
      \label{fig:5}
\end{figure*}

\begin{figure*}[t]
	\begin{center}
		\includegraphics[scale=0.16]{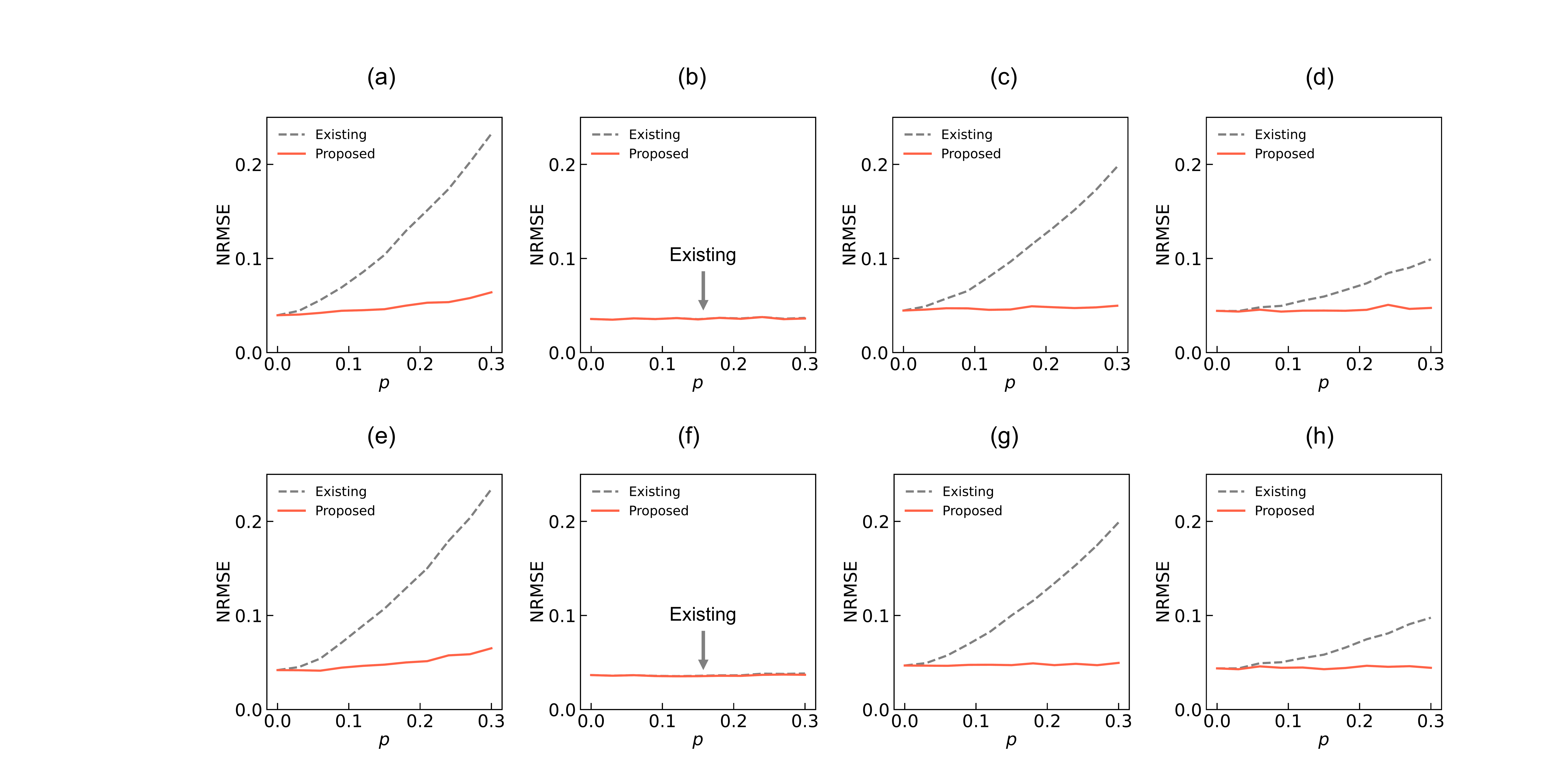}
	\end{center}
	\caption{NRMSEs of the existing and proposed estimators for the cumulative degree distribution as a function of $p$. Panels (a) and (e) show the results for the YouTube dataset; panels (b) and (f) show the results for the Orkut dataset; panels (c) and (g) show the results for the Facebook dataset; panels (d) and (h) show the results for the LiveJournal dataset. Panels (a)--(d) show the results in the ideal model, and panels (e)--(h) show the results in the hidden privacy model. We set the sample size, $r$, as $1\%$ of the number of nodes. We indicate the curves by an arrow and label when two curves heavily overlap each other.}
      \label{fig:6}
\end{figure*}

\begin{figure*}[t]
	\begin{center}
		\includegraphics[scale=0.15]{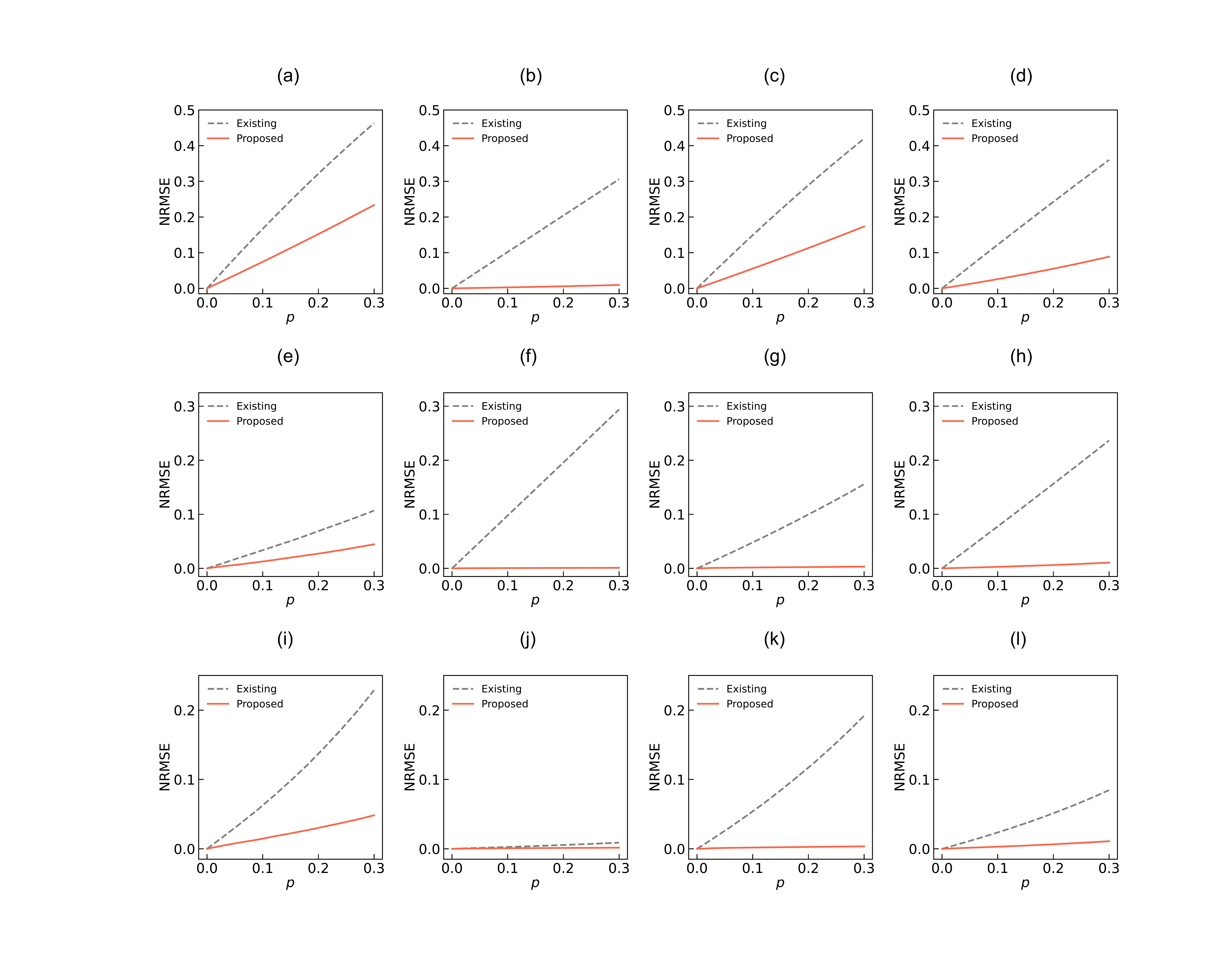}
	\end{center}
	\caption{NRMSEs of the expected values of the existing and proposed estimators as a function of $p$.
Panels (a), (e), and (i) show the results for the YouTube dataset; panels (b), (f), and (j) show the results for the Orkut dataset; panels (c), (g), and (k) show the results for the Facebook dataset; panels (d), (h), and (l) show the results for the LiveJournal dataset. 
Panels (a)--(d) show the results for the network size; panels (e)--(h) show the results for the average degree; panels (i)--(l) show the results for the cumulative degree distribution.}
      \label{fig:7}
\end{figure*}

\subsection{Results}

\subsubsection{Performance of the proposed estimators when varying the fraction of private nodes} \label{section:6.2.1}
We compare the performance of the existing and proposed estimators for the network size, average degree, and density of the node label using YouTube, Orkut, Facebook, and LiveJournal datasets.
We set the indicator function used in the estimators of the density of the node label to ensure that it returns 1 if a node has degree $k$ or larger and 0 otherwise, which is equivalent to estimating the cumulative degree distribution, $\{P(d)\}_d$ \cite{gjoka2011, gjoka2010, ribeiro2010, lee2012}.

It is practically important for an estimator to have a low bias in a single run and a small variance across multiple runs.
The normalized mean squared error (NRMSE) evaluates both the bias and variance of an estimator \cite{lee2012, katzir2013, wang2014, chen2016}.
For the network size and average degree, we measure the NRMSE of an estimator given by $\sqrt{\mathbb{E}[(\hat{x}/x - 1)^2]}$, where $x$ denotes the exact value and $\hat{x}$ denotes the estimator.
For the cumulative degree distribution, we measure the NRMSE of an estimator given by $\sqrt{\mathbb{E}[(D_{P(d)})^2]}$, where $D_{P(d)} = \sum_{d} |\hat{P}(d) - P(d)|$ is the normalized $L^1$ distance between the exact distribution $\{P(d)\}_{d}$ and the estimated distribution $\{\hat{P}(d)\}_{d}$.

We perform a single run on the YouTube, Orkut, Facebook, and LiveJournal datasets as follows.
First, we independently and randomly set the privacy label of each node as private with a given probability $p$ and otherwise public, according to Assumption \ref{assumption:2}.
Second, we choose a seed uniformly at random from the nodes in the largest public cluster.
Third, we perform our random walk with a length $r$ of $1\%$ of the number of nodes.
Finally, we calculate the existing and proposed estimators from the sampling list.
For the given $p$, we estimate the NRMSE of each estimator over 1,000 independent runs.
We vary the probability $p$ from 0.0 to 0.30 in increments of 0.03 because there were actually tens of percentages of private nodes in empirical social networks \cite{catanese2011, takac2012, dey2012, buccafurri2015}.

Figure \ref{fig:4} shows the NRMSEs of the existing and proposed estimators for the network size as a function of $p$.
The following observations apply to both access models.
First, the NRMSEs of both estimators are exactly the same when $p = 0$, which is consistent with Proposition \ref{proposition:1}. 
Second, more importantly, the proposed estimator typically improves the NRMSE when $p > 0$.
For example, the proposed estimator improves the NRMSE by approximately 67.7\% (i.e., from 0.308 to 0.100) when $p = 0.3$ in Fig. \ref{fig:4}(b).
These observations qualitatively remain the same for the estimators of the average degree (see Fig. \ref{fig:5}) and the cumulative degree distribution (see Fig. \ref{fig:6}), except for Figs. \ref{fig:6}(b) and \ref{fig:6}(f).

To further investigate the performance of the proposed estimators, we observe the NRMSEs of expected values of the existing and proposed estimators for each property (see Lemmas \ref{lemma:4}, \ref{lemma:6}, \ref{lemma:8}, \ref{lemma:10}, \ref{lemma:11}, and \ref{lemma:13} for the expected value of each estimator).
For the given $p$, we calculate the NRMSEs of the expected values of the existing and proposed estimators over 1,000 random sets of privacy labels of nodes.
Note that the expected value of each estimator does not depend on the access model.

Figure \ref{fig:7}(a)--(d) shows the NRMSEs of expected values of the existing and proposed estimators for the network size as a function of probability $p$.
The following observations apply to the four datasets.
First, the NRMSEs of both estimators are equal to zero when $p = 0$, which is consistent with Proposition \ref{proposition:1}.
Second, the proposed estimator improves the NRMSE of the expected value when $p > 0$.
These observations are qualitatively the same for the estimators of the average degree (see Fig. \ref{fig:7}(e)--(h)) and the estimators of the cumulative degree distribution (see Fig. \ref{fig:7}(i)--(l)).
The proposed estimator for the cumulative degree distribution slightly reduces the bias induced by private nodes for the Orkut dataset (see Fig. \ref{fig:7}(j)), which is qualitatively consistent with the little improvement in the NRMSE of the proposed estimator in Figs. \ref{fig:6}(b) and (f).

For the Orkut dataset, where almost all the public nodes belong to the largest public cluster (see Fig. \ref{fig:3}(c)), the proposed estimators for the network size, average degree, and cumulative degree distribution have approximately no bias for any $0.0 \leq p \leq 0.3$ (see Figs. \ref{fig:7}(b), \ref{fig:7}(f), and \ref{fig:7}(j)).
These results support Theorems \ref{theorem:3}, \ref{theorem:4}, and \ref{theorem:5}.
The biases of the existing estimators for the network size and average degree approximately increase linearly with $p$ (see Fig.~\ref{fig:7}(b) and Fig.~\ref{fig:7}(f)), which supports Lemmas \ref{lemma:5} and \ref{lemma:9}.
The existing estimator for the cumulative degree distribution has approximately no bias for any $0.0 \leq p \leq 0.3$ (see Fig.~\ref{fig:7}(j)), which supports Lemma \ref{lemma:12}.

The existing and proposed estimators have biases induced by the isolated public clusters because we assume that a random walk is allowed to traverse only the largest public cluster.
In fact, for the YouTube dataset, where the relative size of the largest public cluster is the smallest among the four datasets (see Fig. \ref{fig:3}(c)), the NRMSEs of the existing and proposed estimators relatively increase as $p$ increases (see Figs. \ref{fig:7}(a), \ref{fig:7}(e), and \ref{fig:7}(i)).
Nevertheless, the proposed estimator still has a smaller NRMSE than the existing estimator for $p > 0$.

\subsubsection{Estimation on the Pokec dataset} \label{section:6.2.2}
We evaluate the proposed estimators using the Pokec network dataset involving private nodes \cite{snap, takac2012}.
We perform a single run on the Pokec dataset as follows.
First, we apply the privacy label contained in the dataset to each node.
Second, we choose a seed uniformly at random from the nodes in the largest public cluster of the network.
Fourth, we perform our random walk with length $r$.
Finally, we calculate the existing and proposed estimators from the sampling list.
We vary the length $r$ from $0.5\%$ of the number of nodes to $5\%$ of the number of nodes in increments of $0.5\%$ of the number of nodes. 
For the given $r$, we estimate the NRMSE of each estimator over 1,000 independent runs.

Figure \ref{fig:8} shows the NRMSE of the existing and proposed estimators for the three properties as a function of the sample size.
For each property, the proposed estimator improves the NRMSE for any sample size in both the access models. 
For example, the proposed estimator for the network size improves the NRMSE by approximately $92.6\%$ (i.e., from 0.339 to 0.025) in the case of $5\%$ sample size in the ideal model (see Fig. \ref{fig:8}(a)). 

Table \ref{table:2} shows the errors (i.e., the relative error or the $L^1$ distance) of the expected values of the existing and proposed estimators for each property.
The proposed estimators improve the error by 97.3\% for the network size, 87.5\% for the average degree, and 32.1\% for the cumulative degree distribution.
Furthermore, for each property, the proposed estimator has the expected value that is almost equal to the true quantity of the whole Pokec network involving private nodes.
Although Assumption \ref{assumption:2} does not hold for the Pokec network, the proposed estimators yield the results which are consistent with Theorems \ref{theorem:3}, \ref{theorem:4}, and \ref{theorem:5}.

\begin{figure*}[t]
	\begin{center}
		\includegraphics[scale=0.16]{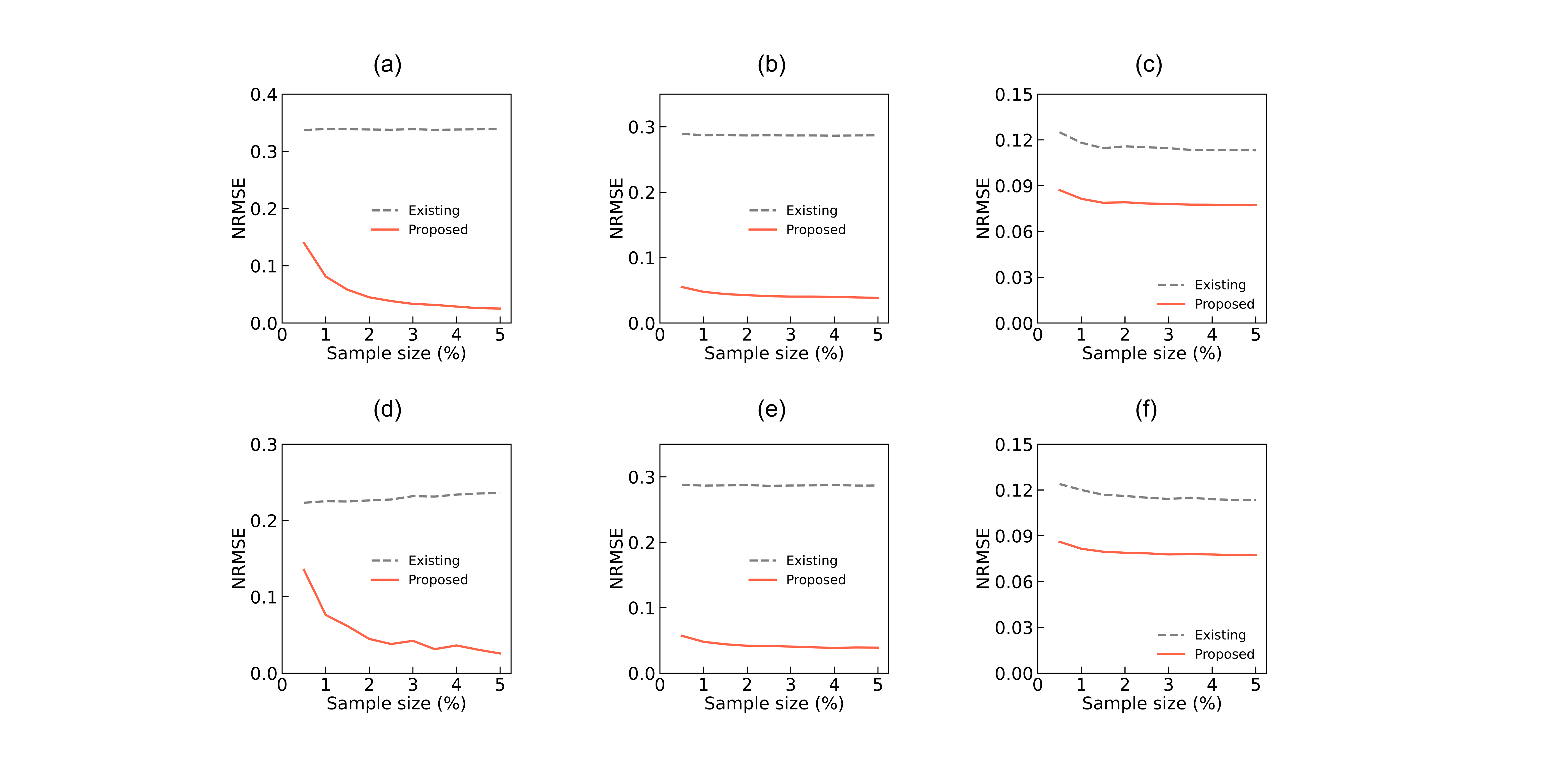}
	\end{center}
      \caption{NRMSEs of the existing and proposed estimators as a function of the sample size on the Pokec dataset. Panels (a) and (d) show the results for the network size; panels (b) and (e) show the results for the average degree; panels (c) and (f) show the results for the cumulative degree distribution. Panels (a)--(c) show the results in the ideal model, and panels (d)--(f) show the results in the hidden privacy model.}
      \label{fig:8}
\end{figure*}

\begin{table}[t]
\begin{center}
\caption{Expected errors of the existing and proposed estimators for the three properties on the Pokec dataset. For the network size and average degree, the error shows the relative error. For the cumulative degree distribution, the error shows the $L^1$ distance.}
\label{table:2}
\begin{tabular}{l | c c} \hline
	Property & Existing & Proposed \\ \hline
	Network size & 0.338 & 0.009 \\ 
	Average degree & 0.287 & 0.036 \\ 
	Cumulative degree distribution & 0.112 & 0.076 \\ \hline
  \end{tabular}
\end{center}
\end{table}

\subsubsection{Estimation on the Facebook sample dataset} \label{section:6.2.3}

We use the sample sequence of 1,016,275 public nodes obtained by Kurant et al.'s random walk on the Facebook graph in October 2010 \cite{kurant}.
The dataset contains the ID, the exact public degree, and the exact degree of each sampled public node.
Therefore, we compare the existing and proposed estimators for the network size, average degree, and cumulative degree distribution of the Facebook graph as of October 2010.

Table \ref{table:3} shows the existing and proposed estimators for the network size and average degree. 
It is difficult to calculate the error of each estimator because the true quantities of the Facebook graph as of October 2010 are unknown. 
However, we consider that the estimates are reasonable considering two findings in almost the same period.
First, Facebook reported that there were 500 million active users as of July 2010 \cite{facebook_500m}, which indicates that there were at least 500 million users, including inactive users, at that time. 
Notably, the estimates of the network size shown in Table \ref{table:3} count both active and inactive users. 
Our estimate, i.e., 657 million, is greater than 500 million, and we speculate that the difference (i.e., approximately 157 million) mainly comprises inactive users.
Second, Catanese et al. obtained an unbiased estimate of the proportion of private nodes as 0.266 using a uniform sample of Facebook users in August 2010 \cite{catanese2011}\footnote{When Catanese et al. performed a uniform sampling of users on Facebook during August 2010, the user id was 32 bit. As mentioned in Refs. \cite{gjoka2011, gjoka2010}, shortly after that, Facebook's user id went to 64 bit, and uniform sampling in the 64-bit space is typically infeasible.}. 
Based on Table \ref{table:3}, we obtain the estimates of the proportion of private nodes, i.e., $\hat{p}_{\text{size}}$ defined in Eq. \eqref{eq:4} and $\hat{p}_{\text{avg}}$ defined in Eq. \eqref{eq:5}, as 0.269 and 0.255, respectively.
These two estimates are considerably close to the ground truth value of 0.266.
Finally, Fig. \ref{fig:9} shows the existing and proposed estimators for the cumulative degree distribution.
We observe that two estimates heavily overlap each other.
This result is qualitatively consistent with our theoretical results of Lemma \ref{lemma:11} and Theorem \ref{theorem:5}.

\begin{table}[t]
\begin{center}
\caption{Estimates of the network size and average degree obtained from the Facebook sample dataset.}
\label{table:3}
\begin{tabular}{l | c c} \hline
	Property & Existing & Proposed \\ \hline
	Network size & 480,298,540 & 656,874,081 \\ 
	Average degree & 102.07 & 137.03 \\ \hline
  \end{tabular}
\end{center}
\end{table}

\begin{figure}
	\begin{center}
		\includegraphics[scale=0.2]{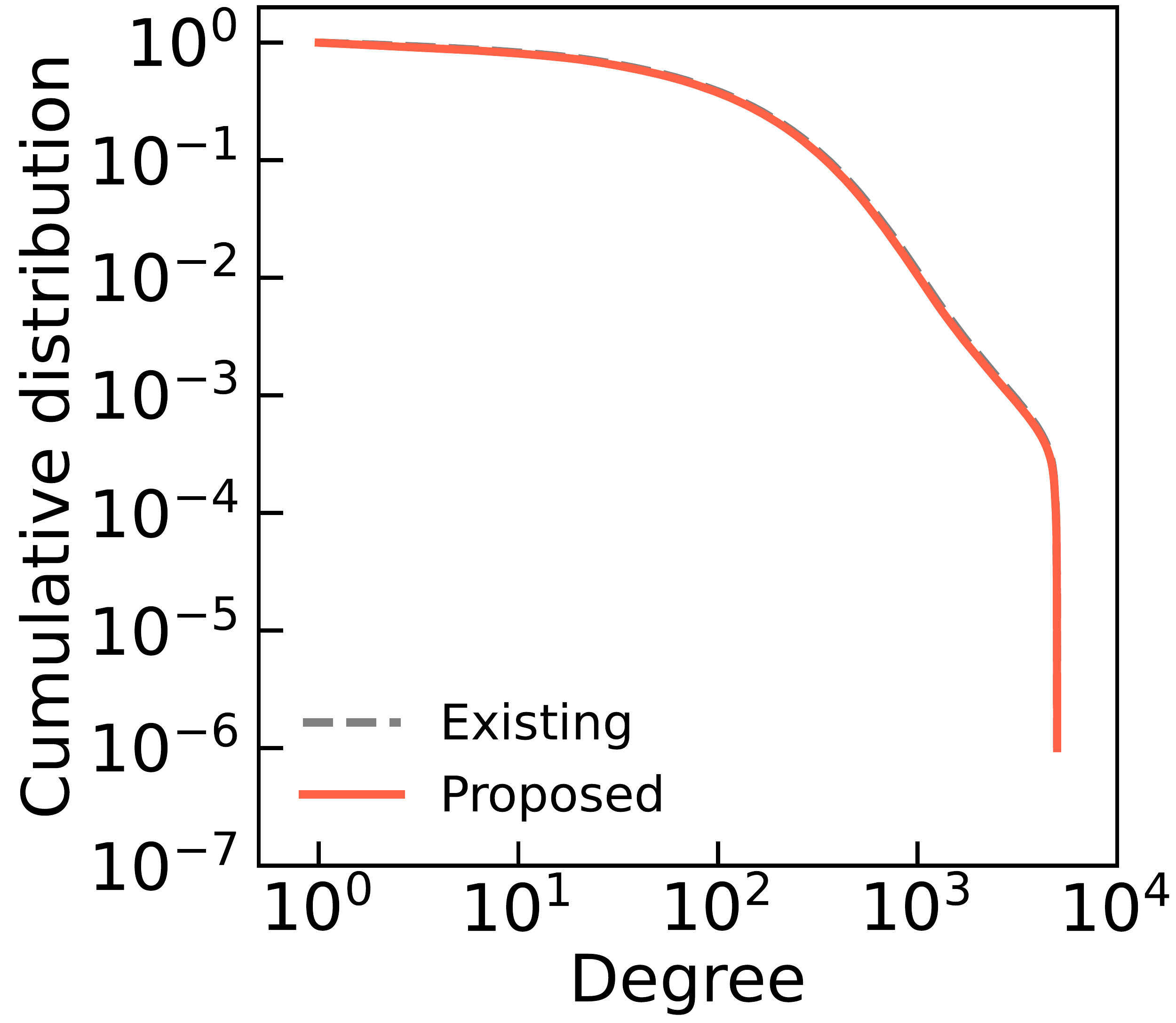}
	\end{center}
	\caption{Estimates of the cumulative degree distribution obtained from the Facebook sample dataset. Two curves heavily overlap each other.}
\label{fig:9}
\end{figure}

\subsubsection{Effectiveness of the proposed method for calculating public degree} \label{section:6.2.4}

\begin{figure*}[t]
	\begin{center}
		\includegraphics[scale=0.15]{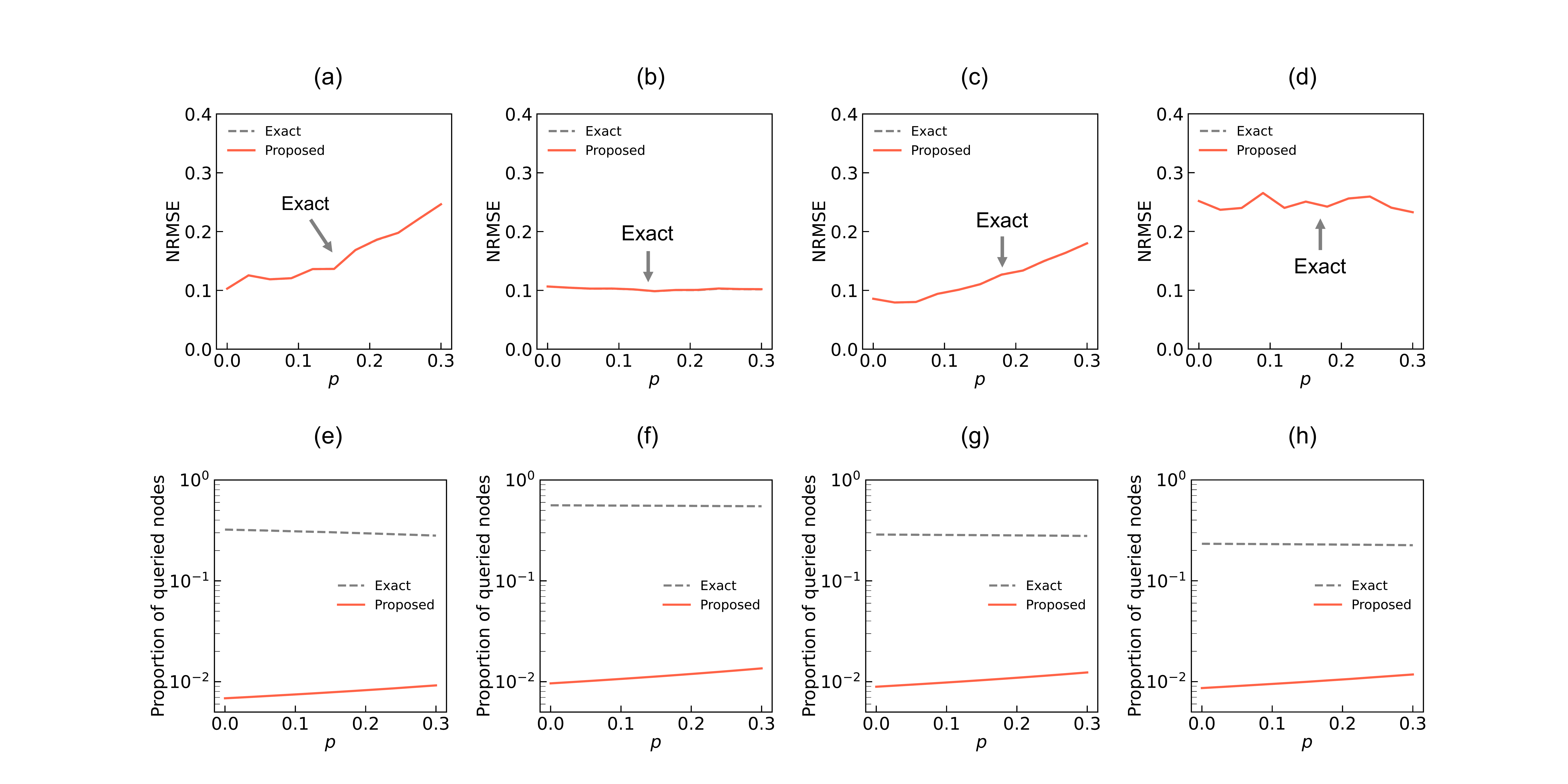}
	\end{center}
      \caption{Effects of the proposed method for calculating the public degree of each sampled node in the hidden privacy model. Panels (a) and (d) show the results for the YouTube dataset; panels (b) and (e) show the results for the Orkut dataset; panels (c) and (f) show the results for the Facebook dataset; panels (d) and (h) show the results for the LiveJournal dataset. Panels (a)--(d) show the NRMSEs of the estimators of the network size using the exact and proposed methods as a function of $p$. Panels (e)--(h) show the proportion of queried nodes using the exact and proposed methods as a function of $p$. We set the sample size as $1\%$ of the number of nodes. We indicate the curves by an arrow and label when two curves heavily overlap each other.}
      \label{fig:10}
\end{figure*}

We evaluate the proposed method for calculating the public degree of each sampled node in the hidden privacy model. 
The proposed estimator for the network size requires the public degree of each sampled node (see Section \ref{section:5.2.2}). 
Therefore, we compare the NRMSE and the proportion of queried nodes for the proposed size estimator using the proposed method and exact method, respectively.
In the exact method, one queries all the neighbors of each sampled node to calculate the exact public degree.
We perform the simulations on the YouTube, Orkut, Facebook, and LiveJournal datasets using the same procedure followed in Section \ref{section:6.2.1}.

Figure \ref{fig:10} shows the results for the four datasets.
The following observations apply to all the four datasets.
First, the proposed method achieves almost the same NRMSE as the exact method (see Fig. \ref{fig:10}(a)--(d)).
Second, although the exact method queries tens of percent of nodes which are much greater than the $1\%$ sample size, the proposed method queries approximately $1\%$ nodes (see Fig. \ref{fig:10}(e)--(h)).
These results support Lemma \ref{lemma:3}.
Therefore, the proposed method reduces the proportion of queried nodes by tens of percent while maintaining almost the estimation accuracy compared with the exact method.

\subsubsection{Seed selection} \label{section:6.2.5}
Thus far, we have assumed that we have access to some arbitrary node that belongs to the largest public cluster to begin our random walk (see Assumption \ref{assumption:3}).
In practical scenarios, we may require additional queries in seed selection by restarting a random walk from another seed in the following two cases.
The first case is when a seed is a private node.
This is the case, for example, when one selects nodes $v_3$ or $v_7$ as a seed in the graph shown in Fig. \ref{fig:1}.
The second case is when a seed is on an isolated public cluster.
This is the case, for example, when one selects nodes $v_6$, $v_8$, or $v_9$ as a seed in the graph shown in Fig. \ref{fig:1}.

We consider that the number of queries generated in each case is sufficiently small. 
We generate a small number of queries in the first case because (i) the proportion of private nodes is generally smaller than that of public nodes in empirical social networks (e.g., 27\% on Facebook \cite{catanese2011} and 34\% on Pokec \cite{takac2012}) and (ii) one query is enough to check the privacy label of a given seed.
In the second case, we may generate a small number of queries under Assumption \ref{assumption:2} owing to the following two natures of empirical networks having heavy-tailed degree distributions \cite{albert2000}. 
First, most public nodes belong to the largest public cluster. 
In our simulations, even if $p = 0.3$, 99.1\% on Orkut, 91.4\% on LiveJournal, 82.8\% on Facebook, and 76.7\% on YouTube of the public nodes belong to the largest public cluster of the original network, respectively (see Fig. \ref{fig:3}(c)). 
Second, the average size of the isolated public clusters is considerably smaller than the size of the largest public cluster.
In our simulations, the average size is only approximately one for every probability $p$ on the four datasets (see Fig. \ref{fig:3}(d)). 

The two empirical datasets involving private nodes may support Assumption \ref{assumption:3}.
Because all the public nodes belong to the largest public cluster of the Pokec network, the second case does not occur on that network.
The Facebook sample dataset yields an estimate of 657 million users and contains one million unique public users.

\section{Conclusions}
In this study, we proposed a framework based on a random walk for estimating properties of social networks involving private nodes.
Most existing estimators assume a network only composed of public nodes, which induces biases due to private nodes in networks involving private nodes.
We extended a simple random walk and the existing estimators for three properties (i.e., the network size, average degree, and density of the node label) to the case of social networks involving private nodes based on the three assumptions with respect to private nodes.
We found that the proposed estimators provide reasonable estimates on the two social network datasets involving private nodes. 

A limitation of the present study is the assumption that each node independently becomes a private node with a given probability.
Private nodes may be biased toward nodes with high or low degrees in empirical social networks.
If private nodes are biased toward high-degree nodes, the entire network may contain a large number of extremely small public clusters (see Ref. \cite{albert2000}); in this case, crawling-based sampling may hardly work on the network.
However, given that previous studies were able to sample more than tens of thousands of users in an OSN \cite{ahn2007, mislove2007, wilson2009, gjoka2010, gjoka2011}, this case may not often happen in practical scenarios.
On the other hand, if private nodes are biased toward low-degree nodes, the results for the proposed estimators may not be substantially different from those presented in this study.
This is because empirical OSNs often have heavy-tailed degree distributions (e.g., see Fig. \ref{fig:3}(a)), which will contribute to the increase in the number of private nodes with low degrees when each node becomes a private node independently at random.
Further investigating the distribution of private nodes on empirical social networks and the effects of private nodes biased toward high- or low-degree nodes on the proposed framework warrant future work.

The present framework seems to be effective for empirical social networks with heavy-tailed degree distributions.
In such networks, a large fraction of public nodes belong to the largest public cluster, which may allow us to reasonably estimate properties of the entire network involving private nodes and support the validity of Assumption \ref{assumption:3}.
To further investigate effects of the heavy-tailed degree distribution and other network properties (e.g., small-world property \cite{watts1998}) on the present framework, it may be useful to compare the performance of the proposed estimators among synthetic network models including Erd\H{o}s-R\'{e}nyi model \cite{erdos1959}, Watts-Strogatz model \cite{watts1998}, and Barab\'{a}si-Albert model \cite{barabasi1999}.

\section*{Acknowledgments}
Kazuki Nakajima was supported by JSPS KAKENHI Grant Number JP21J10415.
Kazuyuki Shudo was supported by JSPS KAKENHI Grant Number JP21H04872.

\bibliography{ref.bib}

\begin{thebibliography}{10}
\urlstyle{rm}
\expandafter\ifx\csname url\endcsname\relax
  \def\url#1{\texttt{#1}}\fi
\expandafter\ifx\csname urlprefix\endcsname\relax\def\urlprefix{URL }\fi
\expandafter\ifx\csname doiprefix\endcsname\relax\def\doiprefix{DOI: }\fi
\providecommand{\bibinfo}[2]{#2}
\providecommand{\eprint}[2][]{\url{#2}}

\bibitem{ahn2007}
\bibinfo{author}{Ahn, Y.-Y.}, \bibinfo{author}{Han, S.}, \bibinfo{author}{Kwak,
  H.}, \bibinfo{author}{Moon, S.} \& \bibinfo{author}{Jeong, H.}
\newblock \bibinfo{title}{{Analysis of Topological Characteristics of Huge
  Online Social Networking Services}}.
\newblock In \emph{\bibinfo{booktitle}{Proceedings of the 16th International
  Conference on World Wide Web}}, \bibinfo{pages}{835--844}
  (\bibinfo{year}{2007}).

\bibitem{mislove2007}
\bibinfo{author}{Mislove, A.}, \bibinfo{author}{Marcon, M.},
  \bibinfo{author}{Gummadi, K.~P.}, \bibinfo{author}{Druschel, P.} \&
  \bibinfo{author}{Bhattacharjee, B.}
\newblock \bibinfo{title}{{Measurement and Analysis of Online Social
  Networks}}.
\newblock In \emph{\bibinfo{booktitle}{Proceedings of the 7th ACM SIGCOMM
  Conference on Internet Measurement}}, \bibinfo{pages}{29--42}
  (\bibinfo{year}{2007}).

\bibitem{wilson2009}
\bibinfo{author}{Wilson, C.}, \bibinfo{author}{Boe, B.}, \bibinfo{author}{Sala,
  A.}, \bibinfo{author}{Puttaswamy, K.~P.} \& \bibinfo{author}{Zhao, B.~Y.}
\newblock \bibinfo{title}{User interactions in social networks and their
  implications}.
\newblock In \emph{\bibinfo{booktitle}{Proceedings of the 4th ACM European
  Conference on Computer Systems}}, \bibinfo{pages}{205--218}
  (\bibinfo{year}{2009}).

\bibitem{gjoka2010}
\bibinfo{author}{Gjoka, M.}, \bibinfo{author}{Kurant, M.},
  \bibinfo{author}{Butts, C.~T.} \& \bibinfo{author}{Markopoulou, A.}
\newblock \bibinfo{title}{{Walking in Facebook: A Case Study of Unbiased
  Sampling of OSNs}}.
\newblock In \emph{\bibinfo{booktitle}{2010 Proceedings IEEE INFOCOM}},
  \bibinfo{pages}{1--9} (\bibinfo{year}{2010}).

\bibitem{kwak2010}
\bibinfo{author}{Kwak, H.}, \bibinfo{author}{Lee, C.}, \bibinfo{author}{Park,
  H.} \& \bibinfo{author}{Moon, S.}
\newblock \bibinfo{title}{{What is Twitter, a Social Network or a News Media?}}
\newblock In \emph{\bibinfo{booktitle}{Proceedings of the 19th International
  Conference on World Wide Web}}, \bibinfo{pages}{591--600}
  (\bibinfo{year}{2010}).

\bibitem{gjoka2011}
\bibinfo{author}{Gjoka, M.}, \bibinfo{author}{Kurant, M.},
  \bibinfo{author}{Butts, C.~T.} \& \bibinfo{author}{Markopoulou, A.}
\newblock \bibinfo{journal}{\bibinfo{title}{{Practical Recommendations on
  Crawling Online Social Networks}}}.
\newblock {\emph{\JournalTitle{IEEE Journal on Selected Areas in
  Communications}}} \textbf{\bibinfo{volume}{29}}, \bibinfo{pages}{1872--1892}
  (\bibinfo{year}{2011}).

\bibitem{fukuda2022}
\bibinfo{author}{Fukuda, M.}, \bibinfo{author}{Nakajima, K.} \&
  \bibinfo{author}{Shudo, K.}
\newblock \bibinfo{journal}{\bibinfo{title}{Estimating the bot population on
  twitter via random walk based sampling}}.
\newblock {\emph{\JournalTitle{IEEE Access}}} \textbf{\bibinfo{volume}{10}},
  \bibinfo{pages}{17201--17211} (\bibinfo{year}{2022}).

\bibitem{katzir2011}
\bibinfo{author}{Katzir, L.}, \bibinfo{author}{Liberty, E.} \&
  \bibinfo{author}{Somekh, O.}
\newblock \bibinfo{title}{Estimating sizes of social networks via biased
  sampling}.
\newblock In \emph{\bibinfo{booktitle}{Proceedings of the 20th International
  Conference on World Wide Web}}, \bibinfo{pages}{597--606}
  (\bibinfo{year}{2011}).

\bibitem{katzir2013}
\bibinfo{author}{Hardiman, S.~J.} \& \bibinfo{author}{Katzir, L.}
\newblock \bibinfo{title}{Estimating clustering coefficients and size of social
  networks via random walk}.
\newblock In \emph{\bibinfo{booktitle}{Proceedings of the 22nd International
  Conference on World Wide Web}}, \bibinfo{pages}{539--550}
  (\bibinfo{year}{2013}).

\bibitem{gjoka2013}
\bibinfo{author}{Gjoka, M.}, \bibinfo{author}{Kurant, M.} \&
  \bibinfo{author}{Markopoulou, A.}
\newblock \bibinfo{title}{{2.5K}-graphs: {From} sampling to generation}.
\newblock In \emph{\bibinfo{booktitle}{2013 Proceedings IEEE INFOCOM}},
  \bibinfo{pages}{1968--1976} (\bibinfo{year}{2013}).

\bibitem{dasgupta2014}
\bibinfo{author}{Dasgupta, A.}, \bibinfo{author}{Kumar, R.} \&
  \bibinfo{author}{Sarlos, T.}
\newblock \bibinfo{title}{{On Estimating the Average Degree}}.
\newblock In \emph{\bibinfo{booktitle}{Proceedings of the 23rd International
  Conference on World Wide Web}}, \bibinfo{pages}{795--806}
  (\bibinfo{year}{2014}).

\bibitem{wang2014}
\bibinfo{author}{Wang, P.} \emph{et~al.}
\newblock \bibinfo{journal}{\bibinfo{title}{{Efficiently Estimating Motif
  Statistics of Large Networks}}}.
\newblock {\emph{\JournalTitle{ACM Transactions on Knowledge Discovery from
  Data}}} \textbf{\bibinfo{volume}{9}} (\bibinfo{year}{2014}).
\newblock \bibinfo{note}{Article No. 8}.

\bibitem{han2016}
\bibinfo{author}{Han, G.} \& \bibinfo{author}{Sethu, H.}
\newblock \bibinfo{title}{Waddling random walk: Fast and accurate mining of
  motif statistics in large graphs}.
\newblock In \emph{\bibinfo{booktitle}{2016 IEEE 16th International Conference
  on Data Mining (ICDM)}}, \bibinfo{pages}{181--190} (\bibinfo{year}{2016}).

\bibitem{chen2016}
\bibinfo{author}{Chen, X.}, \bibinfo{author}{Li, Y.}, \bibinfo{author}{Wang,
  P.} \& \bibinfo{author}{Lui, J.}
\newblock \bibinfo{journal}{\bibinfo{title}{{A General Framework for Estimating
  Graphlet Statistics via Random Walk}}}.
\newblock {\emph{\JournalTitle{Proceedings of the VLDB Endowment}}}
  \textbf{\bibinfo{volume}{10}}, \bibinfo{pages}{253--264}
  (\bibinfo{year}{2016}).

\bibitem{nakajima2020jip}
\bibinfo{author}{Nakajima, K.} \& \bibinfo{author}{Shudo, K.}
\newblock \bibinfo{journal}{\bibinfo{title}{Estimating high betweenness
  centrality nodes via random walk in social networks}}.
\newblock {\emph{\JournalTitle{Journal of Information Processing}}}
  \textbf{\bibinfo{volume}{28}}, \bibinfo{pages}{436--444}
  (\bibinfo{year}{2020}).

\bibitem{bera2020}
\bibinfo{author}{Bera, S.~K.} \& \bibinfo{author}{Seshadhri, C.}
\newblock \bibinfo{title}{How to count triangles, without seeing the whole
  graph}.
\newblock In \emph{\bibinfo{booktitle}{Proceedings of the 26th ACM SIGKDD
  International Conference on Knowledge Discovery \& Data Mining}},
  \bibinfo{pages}{306--316} (\bibinfo{year}{2020}).

\bibitem{catanese2011}
\bibinfo{author}{Catanese, S.~A.}, \bibinfo{author}{De~Meo, P.},
  \bibinfo{author}{Ferrara, E.}, \bibinfo{author}{Fiumara, G.} \&
  \bibinfo{author}{Provetti, A.}
\newblock \bibinfo{title}{{Crawling Facebook for Social Network Analysis
  Purposes}}.
\newblock In \emph{\bibinfo{booktitle}{Proceedings of the International
  Conference on Web Intelligence, Mining and Semantics}}
  (\bibinfo{year}{2011}).
\newblock \bibinfo{note}{Article No. 52}.

\bibitem{takac2012}
\bibinfo{author}{Takac, L.} \& \bibinfo{author}{Zabovsky, M.}
\newblock \bibinfo{title}{{Data analysis in public social networks}}.
\newblock In \emph{\bibinfo{booktitle}{International Scientific Conference and
  International Workshop Present Day Trends of Innovations}}
  (\bibinfo{year}{2012}).

\bibitem{kurant}
\bibinfo{author}{Kurant, M.}, \bibinfo{author}{Gjoka, M.},
  \bibinfo{author}{Butts, C.~T.} \& \bibinfo{author}{Markopoulou, A.}
\newblock \bibinfo{title}{{Walking on a Graph with a Magnifying Glass:
  Stratified Sampling via Weighted Random Walks}}.
\newblock In \emph{\bibinfo{booktitle}{Proceedings of the ACM SIGMETRICS Joint
  International Conference on Measurement and Modeling of Computer Systems}},
  \bibinfo{pages}{281--292} (\bibinfo{year}{2011}).

\bibitem{ribeiro2010}
\bibinfo{author}{Ribeiro, B.} \& \bibinfo{author}{Towsley, D.}
\newblock \bibinfo{title}{{Estimating and Sampling Graphs with Multidimensional
  Random Walks}}.
\newblock In \emph{\bibinfo{booktitle}{Proceedings of the 10th ACM SIGCOMM
  Conference on Internet Measurement}}, \bibinfo{pages}{390--403}
  (\bibinfo{year}{2010}).

\bibitem{lee2012}
\bibinfo{author}{Lee, C.-H.}, \bibinfo{author}{Xu, X.} \& \bibinfo{author}{Eun,
  D.~Y.}
\newblock \bibinfo{journal}{\bibinfo{title}{Beyond random walk and
  metropolis-hastings samplers: Why you should not backtrack for unbiased graph
  sampling}}.
\newblock {\emph{\JournalTitle{Proceedings of the 12th ACM
  SIGMETRICS/PERFORMANCE Joint International Conference on Measurement and
  Modeling of Computer Systems}}} \textbf{\bibinfo{volume}{40}},
  \bibinfo{pages}{319–330} (\bibinfo{year}{2012}).

\bibitem{li2015}
\bibinfo{author}{Li, R.-H.}, \bibinfo{author}{Yu, J.~X.}, \bibinfo{author}{Qin,
  L.}, \bibinfo{author}{Mao, R.} \& \bibinfo{author}{Jin, T.}
\newblock \bibinfo{title}{On random walk based graph sampling}.
\newblock In \emph{\bibinfo{booktitle}{2015 IEEE 31st International Conference
  on Data Engineering}}, \bibinfo{pages}{927--938} (\bibinfo{year}{2015}).

\bibitem{nazi2015}
\bibinfo{author}{Nazi, A.}, \bibinfo{author}{Zhou, Z.},
  \bibinfo{author}{Thirumuruganathan, S.}, \bibinfo{author}{Zhang, N.} \&
  \bibinfo{author}{Das, G.}
\newblock \bibinfo{journal}{\bibinfo{title}{Walk, not wait: Faster sampling
  over online social networks}}.
\newblock {\emph{\JournalTitle{Proceedings of the VLDB Endowment}}}
  \textbf{\bibinfo{volume}{8}}, \bibinfo{pages}{678--689}
  (\bibinfo{year}{2015}).

\bibitem{zhou2016}
\bibinfo{author}{Zhou, Z.}, \bibinfo{author}{Zhang, N.}, \bibinfo{author}{Gong,
  Z.} \& \bibinfo{author}{Das, G.}
\newblock \bibinfo{journal}{\bibinfo{title}{Faster random walks by rewiring
  online social networks on-the-fly}}.
\newblock {\emph{\JournalTitle{ACM Transactions on Database Systems}}}
  \textbf{\bibinfo{volume}{40}} (\bibinfo{year}{2016}).
\newblock \bibinfo{note}{Article No. 26}.

\bibitem{li2019}
\bibinfo{author}{Li, Y.} \emph{et~al.}
\newblock \bibinfo{title}{Walking with perception: Efficient random walk
  sampling via common neighbor awareness}.
\newblock In \emph{\bibinfo{booktitle}{2019 IEEE 35th International Conference
  on Data Engineering (ICDE)}}, \bibinfo{pages}{962--973}
  (\bibinfo{year}{2019}).

\bibitem{yi2021}
\bibinfo{author}{Yi, P.}, \bibinfo{author}{Xie, H.}, \bibinfo{author}{Li, Y.}
  \& \bibinfo{author}{Lui, J.~C.}
\newblock \bibinfo{title}{A bootstrapping approach to optimize random walk
  based statistical estimation over graphs}.
\newblock In \emph{\bibinfo{booktitle}{2021 IEEE 37th International Conference
  on Data Engineering (ICDE)}}, \bibinfo{pages}{900--911}
  (\bibinfo{year}{2021}).

\bibitem{heckathorn1997}
\bibinfo{author}{Heckathorn, D.~D.}
\newblock \bibinfo{journal}{\bibinfo{title}{Respondent-driven sampling: {A} new
  approach to the study of hidden populations}}.
\newblock {\emph{\JournalTitle{Social Problems}}}
  \textbf{\bibinfo{volume}{44}}, \bibinfo{pages}{174--199}
  (\bibinfo{year}{1997}).

\bibitem{salganik2004}
\bibinfo{author}{Salganik, M.~J.} \& \bibinfo{author}{Heckathorn, D.~D.}
\newblock \bibinfo{journal}{\bibinfo{title}{Sampling and estimation in hidden
  populations using respondent-driven sampling}}.
\newblock {\emph{\JournalTitle{Sociological Methodology}}}
  \textbf{\bibinfo{volume}{34}}, \bibinfo{pages}{193--240}
  (\bibinfo{year}{2004}).

\bibitem{volz2008}
\bibinfo{author}{Volz, E.} \& \bibinfo{author}{Heckathorn, D.~D.}
\newblock \bibinfo{journal}{\bibinfo{title}{Probability based estimation theory
  for respondent driven sampling}}.
\newblock {\emph{\JournalTitle{Journal of Official Statistics}}}
  \textbf{\bibinfo{volume}{24}}, \bibinfo{pages}{79--97}
  (\bibinfo{year}{2008}).

\bibitem{robins2004}
\bibinfo{author}{Robins, G.}, \bibinfo{author}{Pattison, P.} \&
  \bibinfo{author}{Woolcock, J.}
\newblock \bibinfo{journal}{\bibinfo{title}{Missing data in networks:
  exponential random graph (p*) models for networks with non-respondents}}.
\newblock {\emph{\JournalTitle{Social Networks}}}
  \textbf{\bibinfo{volume}{26}}, \bibinfo{pages}{257--283}
  (\bibinfo{year}{2004}).

\bibitem{goel2010}
\bibinfo{author}{Goel, S.} \& \bibinfo{author}{Salganik, M.~J.}
\newblock \bibinfo{journal}{\bibinfo{title}{Assessing respondent-driven
  sampling}}.
\newblock {\emph{\JournalTitle{Proceedings of the National Academy of
  Sciences}}} \textbf{\bibinfo{volume}{107}}, \bibinfo{pages}{6743--6747}
  (\bibinfo{year}{2010}).

\bibitem{illenberger2012}
\bibinfo{author}{Illenberger, J.} \& \bibinfo{author}{Flötteröd, G.}
\newblock \bibinfo{journal}{\bibinfo{title}{Estimating network properties from
  snowball sampled data}}.
\newblock {\emph{\JournalTitle{Social Networks}}}
  \textbf{\bibinfo{volume}{34}}, \bibinfo{pages}{701--711}
  (\bibinfo{year}{2012}).

\bibitem{gile2015}
\bibinfo{author}{Gile, K.~J.}, \bibinfo{author}{Johnston, L.~G.} \&
  \bibinfo{author}{Salganik, M.~J.}
\newblock \bibinfo{journal}{\bibinfo{title}{Diagnostics for respondent-driven
  sampling}}.
\newblock {\emph{\JournalTitle{Journal of the Royal Statistical Society: Series
  A (Statistics in Society)}}} \textbf{\bibinfo{volume}{178}}
  (\bibinfo{year}{2015}).
\newblock \bibinfo{note}{Article No. 241}.

\bibitem{western2016}
\bibinfo{author}{Western, B.}, \bibinfo{author}{Braga, A.},
  \bibinfo{author}{Hureau, D.} \& \bibinfo{author}{Sirois, C.}
\newblock \bibinfo{journal}{\bibinfo{title}{Study retention as bias reduction
  in a hard-to-reach population}}.
\newblock {\emph{\JournalTitle{Proceedings of the National Academy of
  Sciences}}} \textbf{\bibinfo{volume}{113}}, \bibinfo{pages}{5477--5485}
  (\bibinfo{year}{2016}).

\bibitem{tomas2011}
\bibinfo{author}{Tomas, A.} \& \bibinfo{author}{Gile, K.~J.}
\newblock \bibinfo{journal}{\bibinfo{title}{The effect of differential
  recruitment, non-response and non-recruitment on estimators for
  respondent-driven sampling}}.
\newblock {\emph{\JournalTitle{Electronic Journal of Statistics}}}
  \textbf{\bibinfo{volume}{5}}, \bibinfo{pages}{899--934}
  (\bibinfo{year}{2011}).

\bibitem{lu2012}
\bibinfo{author}{Lu, X.} \emph{et~al.}
\newblock \bibinfo{journal}{\bibinfo{title}{The sensitivity of
  respondent-driven sampling}}.
\newblock {\emph{\JournalTitle{Journal of the Royal Statistical Society: Series
  A (Statistics in Society)}}} \textbf{\bibinfo{volume}{175}},
  \bibinfo{pages}{191--216} (\bibinfo{year}{2012}).

\bibitem{Rocha2017}
\bibinfo{author}{Rocha, L. E.~C.}, \bibinfo{author}{Thorson, A.~E.},
  \bibinfo{author}{Lambiotte, R.} \& \bibinfo{author}{Liljeros, F.}
\newblock \bibinfo{journal}{\bibinfo{title}{Respondent-driven sampling bias
  induced by community structure and response rates in social networks}}.
\newblock {\emph{\JournalTitle{Journal of the Royal Statistical Society: Series
  A (Statistics in Society)}}} \textbf{\bibinfo{volume}{180}},
  \bibinfo{pages}{99--118} (\bibinfo{year}{2017}).

\bibitem{albert2000}
\bibinfo{author}{Albert, R.}, \bibinfo{author}{Jeong, H.} \&
  \bibinfo{author}{Barab{\'a}si, A.-L.}
\newblock \bibinfo{journal}{\bibinfo{title}{{Error and attack tolerance of
  complex networks}}}.
\newblock {\emph{\JournalTitle{Nature}}} \textbf{\bibinfo{volume}{406}},
  \bibinfo{pages}{378--382} (\bibinfo{year}{2000}).

\bibitem{kossinets2006}
\bibinfo{author}{Kossinets, G.}
\newblock \bibinfo{journal}{\bibinfo{title}{{Effects of missing data in social
  networks}}}.
\newblock {\emph{\JournalTitle{Social Networks}}}
  \textbf{\bibinfo{volume}{28}}, \bibinfo{pages}{247--268}
  (\bibinfo{year}{2006}).

\bibitem{huisman2009}
\bibinfo{author}{Huisman, M.}
\newblock \bibinfo{journal}{\bibinfo{title}{Imputation of missing network data:
  Some simple procedures}}.
\newblock {\emph{\JournalTitle{Journal of Social Structure}}}
  \textbf{\bibinfo{volume}{10}}, \bibinfo{pages}{1--29} (\bibinfo{year}{2009}).

\bibitem{smith2013}
\bibinfo{author}{Smith, J.~A.} \& \bibinfo{author}{Moody, J.}
\newblock \bibinfo{journal}{\bibinfo{title}{Structural effects of network
  sampling coverage {I}: {N}odes missing at random}}.
\newblock {\emph{\JournalTitle{Social Networks}}}
  \textbf{\bibinfo{volume}{35}}, \bibinfo{pages}{652--668}
  (\bibinfo{year}{2013}).

\bibitem{cohen2000}
\bibinfo{author}{Cohen, R.}, \bibinfo{author}{Erez, K.},
  \bibinfo{author}{Ben-Avraham, D.} \& \bibinfo{author}{Havlin, S.}
\newblock \bibinfo{journal}{\bibinfo{title}{Resilience of the internet to
  random breakdowns}}.
\newblock {\emph{\JournalTitle{Physical Review Letters}}}
  \textbf{\bibinfo{volume}{85}} (\bibinfo{year}{2000}).
\newblock \bibinfo{note}{Article No. 4626}.

\bibitem{nakajima2021scirep}
\bibinfo{author}{Nakajima, K.} \& \bibinfo{author}{Shudo, K.}
\newblock \bibinfo{journal}{\bibinfo{title}{Measurement error of network
  clustering coefficients under randomly missing nodes}}.
\newblock {\emph{\JournalTitle{Scientific Reports}}}
  \textbf{\bibinfo{volume}{11}} (\bibinfo{year}{2021}).
\newblock \bibinfo{note}{Article No. 2815}.

\bibitem{nakajima2020kdd}
\bibinfo{author}{Nakajima, K.} \& \bibinfo{author}{Shudo, K.}
\newblock \bibinfo{title}{Estimating properties of social networks via random
  walk considering private nodes}.
\newblock In \emph{\bibinfo{booktitle}{Proceedings of the 26th ACM SIGKDD
  International Conference on Knowledge Discovery \& Data Mining}},
  \bibinfo{pages}{720--730} (\bibinfo{year}{2020}).

\bibitem{twitter_api_followers}
\bibinfo{author}{Twitter}.
\newblock \bibinfo{title}{{Twitter API GET followers/ids}}.
\newblock
  \bibinfo{howpublished}{\url{https://developer.twitter.com/en/docs/accounts-and-users/follow-search-get-users/api-reference/get-followers-ids.html}}
  (\bibinfo{year}{2021}).
\newblock \bibinfo{note}{Accessed on December 2021}.

\bibitem{twitter_api_friends}
\bibinfo{author}{Twitter}.
\newblock \bibinfo{title}{Twitter api get friends/ids}.
\newblock
  \bibinfo{howpublished}{\url{https://developer.twitter.com/en/docs/accounts-and-users/follow-search-get-users/api-reference/get-friends-ids.html}}
  (\bibinfo{year}{2021}).
\newblock \bibinfo{note}{Accessed on December 2021}.

\bibitem{chiericetti}
\bibinfo{author}{Chiericetti, F.}, \bibinfo{author}{Dasgupta, A.},
  \bibinfo{author}{Kumar, R.}, \bibinfo{author}{Lattanzi, S.} \&
  \bibinfo{author}{Sarl{\'o}s, T.}
\newblock \bibinfo{title}{On sampling nodes in a network}.
\newblock In \emph{\bibinfo{booktitle}{Proceedings of the 25th International
  Conference on World Wide Web}}, \bibinfo{pages}{471--481}
  (\bibinfo{year}{2016}).

\bibitem{levin2017}
\bibinfo{author}{Levin, D.~A.} \& \bibinfo{author}{Peres, Y.}
\newblock \emph{\bibinfo{title}{{Markov Chains and Mixing Times}}}, vol.
  \bibinfo{volume}{107} (\bibinfo{publisher}{American Mathematical Soc.},
  \bibinfo{year}{2017}).

\bibitem{jones}
\bibinfo{author}{Jones, G.~L.}
\newblock \bibinfo{journal}{\bibinfo{title}{{On the Markov chain central limit
  theorem}}}.
\newblock {\emph{\JournalTitle{Probability Surveys}}}
  \textbf{\bibinfo{volume}{1}}, \bibinfo{pages}{299--320}
  (\bibinfo{year}{2004}).

\bibitem{roberts}
\bibinfo{author}{Roberts, G.~O.} \& \bibinfo{author}{Rosenthal, J.~S.}
\newblock \bibinfo{journal}{\bibinfo{title}{{General state space Markov chains
  and MCMC algorithms}}}.
\newblock {\emph{\JournalTitle{Probability Surveys}}}
  \textbf{\bibinfo{volume}{1}}, \bibinfo{pages}{20--71} (\bibinfo{year}{2004}).

\bibitem{Lovasz1996}
\bibinfo{author}{Lov\'asz, L.}
\newblock \bibinfo{title}{Random walks on graphs: {A} survey}.
\newblock In \emph{\bibinfo{booktitle}{Combinatorics, Paul Erd\H{o}s is
  Eighty}}, vol.~\bibinfo{volume}{2}, \bibinfo{pages}{353--398}
  (\bibinfo{year}{1996}).

\bibitem{newman_networks}
\bibinfo{author}{Newman, M. E.~J.}
\newblock \emph{\bibinfo{title}{Networks. Second Edition}}
  (\bibinfo{publisher}{Oxford University Press}, \bibinfo{address}{Oxford, UK},
  \bibinfo{year}{2018}).

\bibitem{watts1998}
\bibinfo{author}{Watts, D.~J.} \& \bibinfo{author}{Strogatz, S.~H.}
\newblock \bibinfo{journal}{\bibinfo{title}{Collective dynamics of
  `small-world'networks}}.
\newblock {\emph{\JournalTitle{Nature}}} \textbf{\bibinfo{volume}{393}},
  \bibinfo{pages}{440--442} (\bibinfo{year}{1998}).

\bibitem{konect}
\bibinfo{author}{Kunegis, J.}
\newblock \bibinfo{title}{{KONECT–The Koblenz Network Collection}}.
\newblock In \emph{\bibinfo{booktitle}{Proceedings of the 22nd International
  Conference on World Wide Web}}, \bibinfo{pages}{1343--1350}
  (\bibinfo{year}{2013}).

\bibitem{snap}
\bibinfo{author}{Leskovec, J.} \& \bibinfo{author}{Krevl, A.}
\newblock \bibinfo{title}{{SNAP Datasets}: {Stanford} large network dataset
  collection}.
\newblock \bibinfo{howpublished}{\url{http://snap.stanford.edu/data}}
  (\bibinfo{year}{2014}).

\bibitem{nr}
\bibinfo{author}{Rossi, R.~A.} \& \bibinfo{author}{Ahmed, N.~K.}
\newblock \bibinfo{title}{{The Network Data Repository with Interactive Graph
  Analytics and Visualization}}.
\newblock In \emph{\bibinfo{booktitle}{Proceedings of the Twenty-Ninth AAAI
  Conference on Artificial Intelligence}}, \bibinfo{pages}{4292--4293}
  (\bibinfo{year}{2015}).

\bibitem{dey2012}
\bibinfo{author}{Dey, R.}, \bibinfo{author}{Jelveh, Z.} \&
  \bibinfo{author}{Ross, K.}
\newblock \bibinfo{title}{Facebook users have become much more private: A
  large-scale study}.
\newblock In \emph{\bibinfo{booktitle}{2012 IEEE International Conference on
  Pervasive Computing and Communications Workshops}}, \bibinfo{pages}{346--352}
  (\bibinfo{year}{2012}).

\bibitem{buccafurri2015}
\bibinfo{author}{Buccafurri, F.}, \bibinfo{author}{Lax, G.},
  \bibinfo{author}{Nicolazzo, S.} \& \bibinfo{author}{Nocera, A.}
\newblock \bibinfo{journal}{\bibinfo{title}{Comparing twitter and facebook user
  behavior: Privacy and other aspects}}.
\newblock {\emph{\JournalTitle{Computers in Human Behavior}}}
  \textbf{\bibinfo{volume}{52}}, \bibinfo{pages}{87--95}
  (\bibinfo{year}{2015}).

\bibitem{facebook_500m}
\bibinfo{author}{Facebook}.
\newblock \bibinfo{title}{{Facebook 500 Million Stories}}.
\newblock
  \bibinfo{howpublished}{\url{https://www.facebook.com/notes/facebook/500-million-stories/409753352130/}}
  (\bibinfo{year}{2010}).

\bibitem{erdos1959}
\bibinfo{author}{Erd{\H{o}}s, P.} \& \bibinfo{author}{R{\'e}nyi, A.}
\newblock \bibinfo{journal}{\bibinfo{title}{On random graphs {I}}}.
\newblock {\emph{\JournalTitle{Publ. Math. Debrecen}}}
  \textbf{\bibinfo{volume}{6}}, \bibinfo{pages}{290--297}
  (\bibinfo{year}{1959}).

\bibitem{barabasi1999}
\bibinfo{author}{Barabási, A.-L.} \& \bibinfo{author}{Albert, R.}
\newblock \bibinfo{journal}{\bibinfo{title}{Emergence of scaling in random
  networks}}.
\newblock {\emph{\JournalTitle{Science}}} \textbf{\bibinfo{volume}{286}},
  \bibinfo{pages}{509--512} (\bibinfo{year}{1999}).

\end{thebibliography}

\end{document}